\newcommand{\smalldot}[3]{
\filldraw[black] (#1) circle (0.5pt) node[anchor=#2]{#3};
}
\title{Computing Approximate Roots of Monotone Functions}
\author{
Alexandros Hollender\footnote{All Souls College, University of Oxford},
Chester Lawrence,
Erel Segal-Halevi\footnote{Ariel University}
}
\newtheorem{theorem}{Theorem}[section]
\newtheorem*{theorem*}{Theorem}
\newtheorem{lemma}[theorem]{Lemma}
\newtheorem{observation}[theorem]{Observation}
\newtheorem{proposition}[theorem]{Proposition}
\newtheorem*{proposition*}{Proposition}
\newtheorem{corollary}[theorem]{Corollary}
\newtheorem*{claim*}{Claim}
\newtheorem{claim}[theorem]{Claim}
\theoremstyle{definition}
\newtheorem{remark}[theorem]{Remark}
\newtheorem{definition}[theorem]{Definition}
\newtheorem{assumption}[theorem]{Technical Assumption}
\newtheorem{question}{Open Question}
\newtheorem*{notation}{Notation}
\newcommand{\ba}{\mathbf{a}}
\newcommand{\bb}{\mathbf{b}}
\newcommand{\bx}{\mathbf{x}}
\newcommand{\by}{\mathbf{y}}
\newcommand{\f}{\mathbf{f}}
\newcommand{\g}{\mathbf{g}}
\newcommand{\h}{\mathbf{h}}
\newcommand{\rr}{\mathbb{R}}
\newcommand{\rb}{\mathbb{R}^2}
\newcommand{\rd}{\mathbb{R}^d}
\newcommand{\rda}{\mathbb{R}^{d+1}}
\newcommand{\sign}{\{-1,0,1\}}
\renewcommand{\epsilon}{\varepsilon}
\newcommand{\er}[1]{\textcolor{blue}{#1}}
\newcommand{\erel}[1]{\er{(Erel says: #1)}}
\newcommand{\alex}[1]{\er{(Alex says: #1)}}
\begin{document}

\maketitle

\begin{abstract}
Given a function $f: [a,b] \to R$, if $f(a) < 0$ and $f(b) >  0$ and $f$ is continuous, the Intermediate Value Theorem implies that $f$ has a root in $[a,b]$. Moreover, given a value-oracle for $f$, an approximate root of $f$ can be computed using the bisection method, and the number of required evaluations is polynomial in the number of accuracy digits.

The goal of this note is to identify conditions under which this polynomiality result extends to a multi-dimensional function that satisfies the conditions of Miranda's theorem --- the natural multi-dimensional extension of the Intermediate Value Theorem.

In general, finding an approximate root might require an exponential number of evaluations even for a two-dimensional function. We show that, if $f$ is two-dimensional and satisfies a single monotonicity condition, then the number of required evaluations is polynomial in the accuracy. 

For any fixed dimension $d$, if $f$ is a $d$-dimensional function that satisfies all $d^2-d$ ``ex-diagonal'' monotonicity conditions (that is, component $i$ of $f$ is monotonically decreasing with respect to variable $j$ for all $i\neq j$),    then the number of required evaluations is polynomial in the accuracy. But if $f$ satisfies only $d^2-d-2$ ex-diagonal conditions, then the number of required evaluations may be exponential in the accuracy.

As an example application, we show that computing approximate roots of monotone functions can be used for approximate envy-free cake-cutting.
\end{abstract}

\section{Introduction}
\subsection{Approximate roots}

Given a function $\f: \rd\to \rd$, a point $\bx\in\rd$ is called a \emph{root} of $\f$ if  $|\f(\bx)|=0$, where $|\cdot |$ denotes the maximum norm; it is called an \emph{$\epsilon$-root} of $\f$ (for some real $\epsilon> 0$) if $|\f(\bx)|\leq \epsilon$.  We assume throughout the paper that $\f$ is accessible only through \emph{evaluation queries}, that is: given an input $\bx$, we can compute $\f(\bx)$ in unit time. How many evaluation queries are needed to compute an $\epsilon$-root of $\f$?

Of course, without further information on $\f$, an $\epsilon$-root might not exist at all. In practice, we often have some information on the value of $\f$ on the boundary of some subset of $\rd$, as well as on the rate in which $\f$ changes as a function of $\bx$. In particular, suppose $d=1$, and consider a one-dimensional function $f$ defined on some interval $[a,b]\subseteq \rr$, and satisfying the following properties:
\begin{enumerate}
    \item $f$ switches its sign along the interval, that is, $f(a)\leq 0\leq f(b)$;%
    \item $f$ is Lipschitz-continuous; that is, for some constant $L>0$ and for all $x,x'\in [a,b]$, $|f(x)-f(x')|\leq L\cdot |x-x'|$.
\end{enumerate}
A Lipschitz-continuous function is continuous, so by the intermediate value theorem, $f$ has a root in $[a,b]$. Moreover, an $\epsilon$-root of $f$ can be found using the famous \emph{bisection method}, also known as \emph{binary search}: let $c := (a+b)/2$; if $f(c)\leq 0$ then recursively search in $[c,b]$; otherwise, recursively search in $[a,c]$. 
After $\lceil \log_2((b-a)L/\epsilon)\rceil$ iterations, the length of the search interval is at most $\epsilon/L$. By the Lipschitz condition, each point in the interval is an $\epsilon$-root of $f$.
Treating $a,b,L$ as constants, we can say that the bisection method requires $O(\log(1/\epsilon))$ evaluations.
Equivalently, to find a root with $k$ significant digits, we need $O(k)$ evaluations, so the run-time is linear in the number of accuracy digits.

A natural question of interest is whether this result can be extended to $d\geq 2$ dimensional functions.
Suppose $\f$ is defined in a $d$-box
$[\ba, \bb]\subseteq \rd$, where $\ba < \bb$.%
\footnote{
The vector-comparison operator $\ba < \bb$ means that $a_i<b_i$ for all $i\in[d]$. In all our computational bounds, we think of $\ba$ and $\bb$ as being constants.
}
A natural extension of the sign-switching property is the following.
\begin{definition}
\label{def:switching-cube}
A $d$-dimensional function $\f: [\ba,\bb]\to \rd$ is called \emph{positive-switching} if for every $i\in[d]$, the $i$-th component of $\f$ (denoted $f_i$) switches its sign from negative to positive along the $i$-th axis. That is, for all $\bx\in[\ba,\bb]$,
\begin{align*}
x_i=a_i \implies f_i(\bx)\leq 0;
\\
x_i=b_i \implies f_i(\bx)\geq 0.
\end{align*}
\end{definition}
If $\f$ is continuous and positive-switching, then Miranda's theorem%
\footnote{Also called the Poincaré-Miranda theorem, or the Bolzano-Poincaré-Miranda theorem.}
--- a multi-dimensional extension of the intermediate-value theorem ---
says that $\f$ has a root in $[\ba, \bb]$.
To find an $\epsilon$-root of $\f$,  we can simply partition the $d$-box $[\ba,\bb]$ into a grid of small $d$-boxes, each of which has a side-length of at most $\epsilon/L$. One of these boxes must contain a root, so one of the grid points must be an $\epsilon$-root, and we can find it by evaluating $\f$ on all grid points. However, the number of required evaluations is $\Theta((1/\epsilon)^d)$, which is exponential in the accuracy ($k$ significant digits require $\Theta(2^{kd})$ evaluations).

Is there an algorithm that requires a number of evaluations that is polynomial in the number of digits? --- In general, the answer is no. 
This fact is probably well-known, but we present it here for completeness. 
\begin{proposition}
\label{prop:2d-insufficient}
For every $\epsilon>0, L>0, d\geq 2$,
every algorithm that finds an $\epsilon$-root for every Lipschitz-continuous (with constant $L$) function that is positive-switching on some $d$-cube 
might need $\Omega((L/\epsilon)^{d-1})$ evaluations in the worst case.
\end{proposition}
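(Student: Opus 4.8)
The plan is to prove this as a query-complexity lower bound via an adversary (indistinguishability) argument. I would construct a family $\mathcal{F}$ of $M = \Theta((L/\epsilon)^{d-1})$ functions on a fixed cube (say $[0,1]^d$; rescaling to an arbitrary cube at the end changes only constants), each $L$-Lipschitz and positive-switching in the sense of Definition~\ref{def:switching-cube}, together with one auxiliary ``background'' function $\f_0$, such that: (a) the sets $D_{\f} := \{\bx : \f(\bx) \neq \f_0(\bx)\}$ for $\f \in \mathcal{F}$ are pairwise disjoint; and (b) the $\epsilon$-root set of each $\f \in \mathcal{F}$ is a small box contained in $D_{\f}$, so distinct members of $\mathcal{F}$ have disjoint $\epsilon$-root sets. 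Against such a family the adversary simply answers every query $\bx$ with the value $\f_0(\bx)$; since the $D_\f$ are disjoint, each answer is inconsistent with at most one member of $\mathcal{F}$, so after fewer than $M-1$ queries at least two functions of $\mathcal{F}$ remain consistent with all answers. Their $\epsilon$-root sets being disjoint, whatever point the algorithm outputs fails to be an $\epsilon$-root of at least one of them, and the adversary may present that one as the true instance. (We may assume $L/\epsilon$ is large, since otherwise the bound is trivial.)

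Everything thus reduces to building $\mathcal{F}$, and here the governing parameter is the resolution $\delta := \epsilon/L$: moving the zero-set of a component of $\f$ a distance $\Delta$ while respecting the $L$-Lipschitz bound costs a margin of order $\Delta$ in a transverse direction, and isolating the distinguishing region $D_\f$ of one function from the next costs order $\delta$ --- this is exactly what caps $M$ at $\Theta((1/\delta)^{d-1}) = \Theta((L/\epsilon)^{d-1})$. For $d=2$ I would cut the $x_1$-axis into $N = \Theta(1/\delta)$ slabs; inside one ``hidden'' slab the function is perturbed so that $f_1$ and $f_2$ both cross zero inside a single $\Theta(\delta)$-box, while in every other slab $\f$ agrees with a fixed $\f_0$ that has no $\epsilon$-root. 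For general $d$ one replaces slabs by a grid of $N^{d-1}$ columns over the first $d-1$ coordinates, so the perturbation hides in one of $N^{d-1}$ columns, giving the claimed exponent $d-1$.

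The delicate point --- and the step I expect to be the main obstacle --- is reconciling this ``hide the root in one cell'' idea with the Miranda/positive-switching structure: each component $f_i$ must switch sign along the $i$-th axis over the \emph{whole} cube, so $f_1$ is forced to cross zero on every line parallel to the first axis, and these crossings cannot all be genuine roots. The construction must therefore route the zero-set of $f_1$ through ``spurious'' crossings that are masked by some other component being bounded away from $0$ there (the consistent-coloring idea behind Sperner's lemma, and the hidden-path constructions used for the query complexity of Brouwer fixed points), while keeping every component $L$-Lipschitz --- which is precisely why each cell must have side $\Theta(\delta)$ --- and while preserving the global boundary signs required by Definition~\ref{def:switching-cube}. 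Once a single cell gadget with the correct interface behaviour, Lipschitz constant, and a unique $\Theta(\delta)$-localized $\epsilon$-root has been built, assembling $N^{d-1}$ shifted copies and checking properties (a)--(b) above completes the argument.
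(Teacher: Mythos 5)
Your overall adversary architecture has a fatal flaw, and it is exactly the point you flag as ``the delicate point'': no family $\mathcal{F}$ with your properties (a) and (b) can exist, even with $M=2$, whether the disagreement sets $D_{\f}$ are small boxes or thin slabs/columns. The reason is a topological (degree/index) obstruction, not a difficulty of gadget engineering. Suppose $\f,\f'$ are continuous, positive-switching on the cube, agree outside two disjoint regions $D_{\f},D_{\f'}$, and each has all its roots confined to its own region. One may first extend both functions to a slightly larger cube by $\tilde f_i(\bx):=f_i(\Pi(\bx))+(x_i-\Pi(\bx)_i)$, where $\Pi$ is the projection onto $[\ba,\bb]$; this creates no new zeros, makes the switching conditions strict, and preserves the ``agree outside $D$'' structure. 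For the strictly switching $\tilde\f$, the straight-line homotopy to $\bx\mapsto\bx-\bc$ shows the Brouwer degree of $\tilde\f$ on the (enlarged) cube is $1$, and since all zeros of $\tilde\f$ lie in a neighborhood $V$ of $D_{\f}$ disjoint from $D_{\f'}$, excision gives $\deg(\tilde\f,V,0)=1$. Now compare $\tilde\f$ and $\tilde\f'$ on $\partial V$: on the lateral part of $\partial V$ they coincide (it lies outside both disagreement regions) and are nonzero there, and on the parts of $\partial V$ lying on the faces $x_d=a_d,b_d$ the $d$-th component of \emph{both} functions has a pinned strict sign forced by the switching condition, so the convex homotopy between them never vanishes on $\partial V$. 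Hence $\deg(\tilde\f',V,0)=1$ as well, so $\f'$ has a root inside $V$, contradicting the assumption that all of its roots lie in $D_{\f'}$. In other words, the location of the Miranda root is not a ``needle in a haystack'': the boundary data that all your candidates share already forces every candidate to have a root in the designated region of every other candidate, so your counting step (``each answer is inconsistent with at most one member'') can never be fed a legitimate family.

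This is precisely why the known lower bounds for this kind of problem are not proved by a non-adaptive background-plus-disjoint-perturbations adversary: the hard instances of \citet{hirsch1989exponential} and \citet{chen2005algorithms} hide a long \emph{path} and are revealed adaptively, with all functions consistent with the transcript differing on one common connected unexplored region (through which the degree ``flows''), rather than on pairwise disjoint cells. Repairing your proof would essentially mean redoing those constructions. The paper takes a much shorter route: it observes (\Cref{prop:brouwer-to-miranda} together with \Cref{obs:fixed}) that the dual map $\bx\mapsto\bx-\f(\bx)$ turns any Brouwer instance on a cube into a positive-switching, $(L+1)$-Lipschitz Miranda instance whose $\epsilon$-roots are exactly the $\epsilon$-fixed points of the original function, and then imports the $\Omega((L/\epsilon)^{d-1})$ fixed-point query lower bounds of the papers above. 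I recommend you either adopt that reduction or, if you want a self-contained argument, study the path-hiding constructions rather than attempting a disjoint-cells family.
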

Even for $d=2$, the lower bound is exponential in the number of accuracy digits.
This raises the question of whether there are natural special cases in which an $\epsilon$-root can be found more efficiently. 
In this note, we focus on a common special case in which $\f$ satisfies \emph{monotonicity} conditions.
\begin{notation}
Given a function $\f: \rd\to \rd$, for any $i,j\in[d]$ and $\bx\in\rd$,
we denote by $f_{i,j,\bx}: \rr\to \rr$ the function $f_i$ considered as a function of the single variable $x_j$, assuming all the other variables $x_1,\ldots,x_{j-1},x_{j+1},\ldots,x_d$ are fixed to their values in $\bx$.

We say that \emph{$f_i$ is monotonically-increasing(decreasing) with $x_j$} if
for all $\bx$, the function $f_{i,j,\bx}$ is weakly monotonically-increasing(decreasing).
\end{notation}

A function $\f:\mathbb{R}^d\to \mathbb{R}^d$ has $d^2$ potential monotonicity conditions, as each $f_i$ can be monotone with respect to every $x_j$ for all $i,j\in[d]$.
We call the conditions that $f_i$ is monotone with $x_i$ \emph{diagonal conditions}, 
and the conditions that $f_i$ is monotone with $x_j$ for $j\neq i$ \emph{ex-diagonal conditions}.
So there are $d$ potential diagonal conditions and $d^2-d$ potential ex-diagonal conditions.

For $d=2$, one monotonicity condition is sufficient for efficient root-finding.
The proof is different depending on whether the condition is diagonal or ex-diagonal.
\begin{theorem}[Diagonal monotonicity condition]
\label{thm:2d-sufficient}
Let $d=2$. Let $\f: [\ba,\bb]\to \rd$ be a positive-switching Lipschitz-continuous function with constant $L$.
If $f_1$ is  monotonically-increasing with $x_1$, 
then for every $\epsilon>0$,
an $\epsilon$-root of $\f$ can be found using $O(\log^2(L/\epsilon))$ evaluations.%
\footnote{
The same holds if $f_1$ is negative-switching and monotonically-decreasing with $x_1$ (replace $f_1$ by $-f_1$). By symmetry, the same holds if $f_2$ is positive(negative)-switching
and monotonically-increasing(decreasing) with $x_2$.
}
\end{theorem}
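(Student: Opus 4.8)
The plan is a two‑level binary search: an outer bisection on the coordinate $x_2$ and, inside each outer step, an inner bisection on $x_1$. A first remark on what \emph{not} to do: one cannot hope to maintain a sub‑box of $[\ba,\bb]$ that provably contains a root and then bisect \emph{that}, because certifying a face of a Miranda box --- for instance that $f_2\le 0$ holds along an entire horizontal segment --- requires information at infinitely many points, which is the obstruction behind \Cref{prop:2d-insufficient}. So instead I will carry two \emph{witness points} and only assemble an $\epsilon$-root from them at the very end.

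Fix $\delta:=\epsilon/2$ (the constants $b_i-a_i$ are suppressed). The outer invariant is an interval $[l_2,r_2]\subseteq[a_2,b_2]$ together with points $\mathbf{p}=(p_1,l_2)$ and $\mathbf{q}=(q_1,r_2)$ satisfying $|f_1(\mathbf{p})|\le\delta$, $f_2(\mathbf{p})\le 0$, $|f_1(\mathbf{q})|\le\delta$, $f_2(\mathbf{q})\ge 0$. \emph{Initialization:} since $\f$ is positive‑switching, $f_1(a_1,\cdot)\le 0\le f_1(b_1,\cdot)$, so at the fixed height $a_2$ an ordinary $1$D bisection on $x_1$ --- maintaining the endpoint invariant $f_1(\alpha,a_2)\le 0\le f_1(\beta,a_2)$ and using the Lipschitz bound to control the value inside the shrinking interval --- yields $p_1$ with $|f_1(p_1,a_2)|\le\delta$ in $O(\log(L/\delta))$ queries, while $f_2(p_1,a_2)\le 0$ comes for free; symmetrically one gets $q_1$ at height $b_2$ with $f_2(q_1,b_2)\ge 0$. \emph{Outer step:} let $c_2:=(l_2+r_2)/2$, run the same inner bisection at height $c_2$ to obtain $m_1$ with $|f_1(m_1,c_2)|\le\delta$, and evaluate $f_2(m_1,c_2)$. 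If $|f_2(m_1,c_2)|\le\delta$, output $(m_1,c_2)$. Otherwise replace the appropriate witness: $f_2(m_1,c_2)>\delta$ gives $\mathbf{q}\leftarrow(m_1,c_2)$, $r_2\leftarrow c_2$, and $f_2(m_1,c_2)<-\delta$ gives $\mathbf{p}\leftarrow(m_1,c_2)$, $l_2\leftarrow c_2$. The invariant is preserved, and after $O(\log(L/\delta))$ steps $r_2-l_2\le\delta/L$.

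The heart of the argument is the final assembly. Set $c:=l_2$; then $f_2(p_1,c)=f_2(\mathbf{p})\le 0$, and because $r_2-l_2\le\delta/L$ the Lipschitz condition gives $|f_1(q_1,c)|\le 2\delta$ and $f_2(q_1,c)\ge-\delta$. If $f_2(q_1,c)\le\delta$ we are done, since then $(q_1,c)$ is an $O(\delta)$-root. Otherwise $f_2(p_1,c)\le 0<f_2(q_1,c)$, so one further $1$D bisection of $f_2(\cdot,c)$ over the interval between $p_1$ and $q_1$ produces $x_1^{\ast}$ with $|f_2(x_1^{\ast},c)|\le\delta$. This is exactly where the monotonicity hypothesis is used: because $f_1(\cdot,c)$ is monotone in $x_1$ and is $O(\delta)$-small at both $p_1$ and $q_1$, its value at the in‑between point $x_1^{\ast}$ is $O(\delta)$-small too, so $(x_1^{\ast},c)$ is an $O(\delta)$-root. (Without monotonicity, $f_1(x_1^{\ast},c)$ could be large and no local patching would rescue it, consistent with \Cref{prop:2d-insufficient}.) Choosing $\delta$ a suitable constant multiple of $\epsilon$ and counting, the $O(\log(L/\delta))$ outer steps, each costing $O(\log(L/\delta))$ inner queries, plus the $O(\log(L/\delta))$-query initialization and final bisection, give the claimed $O(\log^2(L/\epsilon))$ bound. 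The two places I expect to need the most care are justifying this merging of the two independently obtained witnesses into a genuine $\epsilon$-root, and propagating the constant‑factor error cleanly through the three layers of bisection; the remainder is routine bookkeeping.
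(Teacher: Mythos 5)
Your proof is correct, and its skeleton is the same as the paper's: an inner bisection on $x_1$ that tracks an approximate zero of $f_1(\cdot,x_2)$, an outer bisection on $x_2$ driven by the sign of $f_2$ at that approximate zero, and a final one-dimensional bisection of $f_2$ that exploits the monotonicity of $f_1$ in $x_1$, for $O(\log^2(L/\epsilon))$ queries in total. Where you genuinely diverge is the endgame. The paper first discretizes $\f$ to $\{-1,0,+1\}$ values (\Cref{def:lc}, \Cref{def:discrete-continuity}) and, in \Cref{thm:root-2d}, handles the case $h(y_2)=-1$, $h(y_2+\delta)=+1$ by using monotonicity plus $\delta$-continuity to exhibit a staircase path between the two adjacent rows on which the discretized $f_1$ is exactly $0$, and then bisects $f_2$ along that path. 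You instead stay with real values: you transport the upper witness down one row of height $\delta/L$ via the Lipschitz bound (so $|f_1(q_1,c)|\le 2\delta$, $f_2(q_1,c)\ge-\delta$), bisect $f_2(\cdot,c)$ on the single row between $p_1$ and $q_1$, and use weak monotonicity of $f_1(\cdot,c)$ to sandwich $f_1(x_1^\ast,c)$ between two $O(\delta)$-small values; this step is sound and is exactly where the diagonal hypothesis is needed. Your variant is somewhat simpler, avoiding the discretization machinery and the path argument, at the cost of constant-factor bookkeeping; the paper's discretized formulation is what yields the strengthening stated after \Cref{thm:2d-sufficient}, namely that only the $(\epsilon,L)$-Lipschitz condition (no continuity) is needed, though your argument also adapts to that setting with adjusted constants, since you only ever invoke the Lipschitz bound over distances of at most $\epsilon/L$.
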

In fact, we prove a slightly stronger theorem: 
$\f$ does not have to be Lipschitz-continuous or even continuous; it should only satisfy the Lipschitz condition on intervals of length $\epsilon/L$. That is: for any two points $\bx,\bx'\in \rr$, $|\bx-\bx'|\leq \epsilon/L$ implies $|f(\bx)-f(\bx')|\leq \epsilon$.
This may be useful, as in general, it might be hard to prove that $\f$ is continuous.

We also prove a variant of \Cref{thm:2d-sufficient} in which $\f$ is not positive-switching but \emph{sum-switching}, that is, for all $\bx\in[\ba,\bb]$,
\begin{align*}
x_i=a_i \implies f_i(\bx)\leq 0;
\\
x_i=b_i \implies \sum_{j=1}^i f_j(\bx)\geq 0.
\end{align*}

\begin{theorem}[Ex-diagonal monotonicity condition]
\label{thm:2d-sufficient-exdiagonal}
Let $d=2$. Let $\f: [\ba,\bb]\to \rd$ be a positive-switching Lipschitz-continuous function with constant $L$.
If $f_1$ is  monotonically-decreasing with $x_2$, 
then for every $\epsilon>0$,
an $\epsilon$-root of $\f$ can be found using $O(\log^2(L/\epsilon))$ evaluations.%
\footnote{
The same holds if $f_1$ is monotonically-increasing with $x_2$ 
(replace $x_2$ with $-x_2$ and $f_2$ with $-f_2$).
By symmetry, the same holds if $f_2$ is monotonically-increasing or decreasing with $x_1$.
}
\end{theorem}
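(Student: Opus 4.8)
The plan is to imitate the one-dimensional bisection while using the monotonicity of $f_1$ in $x_2$ to ``solve out'' the second variable, paying an extra $O(\log(L/\epsilon))$ factor for doing so.

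The basic building block is a \emph{column subroutine}: given $x_1\in[a_1,b_1]$, consider $f_{1,2,\bx}$, which is weakly decreasing in $x_2$. Two evaluations, of $f_1$ at $(x_1,a_2)$ and $(x_1,b_2)$, reveal whether this function stays $\le 0$, stays $\ge 0$, or crosses zero along the column; in the last case a monotone bisection on $x_2$ with $O(\log(L/\epsilon))$ evaluations produces a height $g(x_1)$ with $|f_1(x_1,g(x_1))|\le\epsilon$ and with $f_1(x_1,\cdot)\ge 0$ below $g(x_1)$ and $\le 0$ above it. Evaluating $f_2$ once more at the resulting distinguished point on the column, the subroutine either announces an $\epsilon$-root and halts, or returns a point on column $x_1$ at which $f_1\le 0$ and $f_2\le 0$ (a ``negative'' witness), or one at which $f_1\ge 0$ and $f_2\ge 0$ (a ``positive'' witness). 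Positive-switching makes column $a_1$ furnish the negative witness $(a_1,a_2)$ and column $b_1$ the positive witness $(b_1,b_2)$.

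Now run the outer bisection on $x_1$, maintaining an interval $[\alpha_1,\beta_1]\subseteq[a_1,b_1]$, a negative witness on column $\alpha_1$, and a positive witness on column $\beta_1$, the height of the former not exceeding that of the latter. Starting from $[a_1,b_1]$ with the witnesses above, at each step call the subroutine at $c_1=(\alpha_1+\beta_1)/2$: a negative witness lets us recurse on $[c_1,\beta_1]$, a positive one on $[\alpha_1,c_1]$ (and an $\epsilon$-root ends things). After $O(\log(L/\epsilon))$ steps $\beta_1-\alpha_1\le\epsilon/L$, so $\f$ on the two edges of the strip agrees up to $\epsilon$; using that $f_1$ is then squeezed into $[-\epsilon,\epsilon]$ on the $x_2$-interval spanned by the two witness heights --- this is where monotonicity in $x_2$ re-enters --- together with the sign change of $f_2$ along the left edge ($f_2\le 0$ at $x_2=a_2$ and $f_2\ge 0$ at $x_2=b_2$), one more bisection on $x_2$, costing $O(\log(L/\epsilon))$ evaluations, locates a point with both $|f_1|\le\epsilon$ and $|f_2|\le\epsilon$. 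The cost is $O(\log(L/\epsilon))$ outer steps each of cost $O(\log(L/\epsilon))$, plus the final $O(\log(L/\epsilon))$: in total $O(\log^2(L/\epsilon))$, as claimed.

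The hard part is that $g$ --- hence the reduced one-dimensional map $x_1\mapsto f_2(x_1,g(x_1))$ whose approximate zero we are really chasing --- need not be monotone, nor even continuous, in $x_1$, because $\f$ satisfies no monotonicity condition with respect to $x_1$. An ordinary bisection can therefore ``skip over'' the sign change of this map; concretely, a naive recursion can return the two witnesses with their heights in the wrong order, breaking the clean final step. Handling this is the technical core: when a step would produce a wrongly ordered witness one must re-anchor the retained side as well --- evaluating $f_1$ at $(\beta_1,b_2)$ or $(\alpha_1,a_2)$ and invoking monotonicity in $x_2$ --- or else extract an $\epsilon$-root directly from the $O(\log(L/\epsilon))$ column evaluations already in hand. (When $\f$ is continuous one also notes that a jump of the reduced map across zero is witnessed by a nondegenerate segment on which $f_1\equiv 0$ and $f_2$ changes sign, hence contains an exact root; phrasing everything with $\epsilon$-tolerances rather than continuity then yields the strengthening mentioned after \Cref{thm:2d-sufficient}.)
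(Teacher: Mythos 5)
Your overall architecture matches the paper's: an inner bisection over $x_2$ (valid because $f_1$ is monotone in $x_2$) defines $g(x_1)$ and hence the reduced map $h(x_1)=f_2(x_1,g(x_1))$; an outer bisection drives the $x_1$-interval down to width $\epsilon/L$; and a final $O(\log(L/\epsilon))$-evaluation step finishes inside the narrow strip. The gap sits exactly at what you yourself call the technical core, and it cannot be waved away: your final step needs the invariant that the negative witness on the left column lies no higher than the positive witness on the right column, because only between those two heights does monotonicity in $x_2$ plus the Lipschitz bound across the strip squeeze $f_1$ into $[-\epsilon,\epsilon]$. Indeed, one can write down a width-$(\epsilon/L)$ strip configuration, consistent with monotonicity, $\delta$-continuity and a low positive witness together with a high negative witness, that contains no approximate root at all, so a wrongly ordered pair genuinely defeats the final step. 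Yet you give no working rule for maintaining the invariant. The proposed re-anchoring fails in concrete cases: if the midpoint column yields a negative witness at height $w$ above the current right anchor height $v$, probing $f_1(\beta_1,b_2)$ helps only when that value is $\ge 0$; when it is $<0$, the crossing height $w'$ of column $\beta_1$ may lie below $w$ and $f_2(\beta_1,w')$ may be negative, in which case none of the points in hand yields a correctly ordered pair for either half, and the fallback of ``extracting an $\epsilon$-root from the evaluations already made'' is unsubstantiated. A smaller related slip: in the final step you invoke the sign change of $f_2$ at $x_2=a_2$ and $x_2=b_2$, but the $f_1$-squeeze holds only between the two witness heights, so a point with $f_2\approx 0$ found along the whole left edge need not satisfy $|f_1|\le\epsilon$.

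The paper's proof shows that the ordering invariant is unnecessary, and that is the missing idea. It defines $g$ with boundary conventions ($g(x_1)=a_2$ when $f_1<0$ on the whole column, $g(x_1)=b_2$ when $f_1>0$ on the whole column, using strict switching at the ends), bisects $h$ with no side conditions until two \emph{adjacent} columns $y_1$ and $y_1+\delta$ satisfy $h(y_1)\le 0\le h(y_1+\delta)$, and only then performs a case analysis. In the boundary cases, monotonicity of $f_1$ in $x_2$ plus $\delta$-continuity force $f_1=0$ along a vertical segment of one of the two columns on which $f_2$ switches sign. In the main case, the same two ingredients applied between the two adjacent columns produce a staircase of points with $f_1=0$ joining $(y_1,g(y_1))$ to $(y_1+\delta,g(y_1+\delta))$, \emph{regardless of which of the two heights is larger}, along which $f_2$ switches sign; one more bisection along that segment or staircase yields the root. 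Replacing your ordered-witness bookkeeping by this adjacent-column case analysis is what closes the gap.
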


\Cref{prop:2d-insufficient} and 
Theorems \ref{thm:2d-sufficient} and  \ref{thm:2d-sufficient-exdiagonal} and their variants are proved in  \Cref{sec:roots-2d}.

\subsection{Multi-dimensional functions}
For $d\geq 3$, the situation is not so bright: even $d^2-2$ monotonicity conditions may be insufficient.
Moreover, even the combination of all $d$ diagonal conditions and $d^2-d-2$ ex-diagonal conditions may be insufficient for efficient root-finding.
\begin{proposition}
\label{prop:dd-insufficient}
For every $\epsilon>0, L>0, d\geq 3$,
every algorithm that finds an $\epsilon$-root for every $L$-Lipschitz-continuous positive-switching function that satisfies $d^2-2$ monotonicity conditions (even including all diagonal conditions and all but two ex-diagonal conditions) might need $\Omega(L/\epsilon)$ evaluations in the worst case.
\end{proposition}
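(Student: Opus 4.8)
The plan is to reduce from the two‑dimensional lower bound of \Cref{prop:2d-insufficient} (applied with $d=2$, where it already reads $\Omega(L/\epsilon)$). Fix a hard family $(\hat f_1,\hat f_2)\colon[a_1,b_1]\times[a_2,b_2]\to\rr^2$ from that proposition, rescaled so that its Lipschitz constant is at most $1$; by \Cref{thm:2d-sufficient} and \Cref{thm:2d-sufficient-exdiagonal} such a family must violate all four of its coordinate monotonicity conditions. I will embed it into a $d$‑cube so that all $d$ diagonal conditions and all ex‑diagonal conditions except ``$f_1$ monotone with $x_2$'' and ``$f_2$ monotone with $x_1$'' are recovered.

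Take $f_3(\bx):=x_3-\psi(x_1,x_2)$, where $\psi$ is a fixed affine map that is strictly increasing in $x_1$ and in $x_2$, sends $[a_1,b_1]\times[a_2,b_2]$ into $[a_3,b_3]$, and (for each fixed $x_2$) sweeps a sub‑interval of $[a_3,b_3]$ of fixed positive length as $x_1$ varies (e.g.\ a scaled copy of $x_1+x_2$). Then $f_3$ is positive‑switching along $x_3$, monotone in all three variables, and vanishes exactly on the sheet $S=\{x_3=\psi(x_1,x_2)\}$. Set $f_1(\bx):=x_1-a_1-(b_1-a_1)\sigma_1(x_2,x_3)$ and $f_2(\bx):=x_2-a_2-(b_2-a_2)\sigma_2(x_1,x_3)$ with $\sigma_1,\sigma_2$ valued in $[0,1]$; this form alone makes $f_1$ (resp.\ $f_2$) positive‑switching in $x_1$ (resp.\ $x_2$) and monotonically increasing in $x_1$ (resp.\ $x_2$), whatever $\sigma_1,\sigma_2$ are. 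I then pick $\sigma_1,\sigma_2$ so that $f_1|_S=\hat f_1(x_1,x_2)$ and $f_2|_S=\hat f_2(x_1,x_2)$: since $\psi$ is injective in $x_1$ and in $x_2$, this determines $\sigma_1$ on the shadow $\{(x_2,x_3):\psi(a_1,x_2)\le x_3\le\psi(b_1,x_2)\}$ of $S$ (and $\sigma_2$ likewise), and off the shadow I extend by clamping $x_3$ into the relevant interval. Finally, for $i\ge4$ put $f_i(\bx):=x_i-(a_i+b_i)/2$.

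The verification has several parts. The components $f_i$ for $i\ge4$ satisfy all their monotonicity conditions and are positive‑switching, and $f_1,f_2,f_3$ are constant in $x_4,\dots,x_d$, so the only conditions in question are among $f_1,f_2,f_3$ versus $x_1,x_2,x_3$; the explicit forms above give the three diagonal conditions and the ex‑diagonal conditions ``$f_3$ with $x_1$'' and ``$f_3$ with $x_2$''. The delicate point is that $\sigma_1$ (and symmetrically $\sigma_2$) can be taken monotone in $x_3$, so that ``$f_1$ with $x_3$'' (and ``$f_2$ with $x_3$'') survives and only two conditions are lost: differentiating the defining relation along $S$ shows that the slope of $\sigma_1$ in $x_3$ is a positive multiple of $1-\partial_{x_1}\hat f_1\ge 0$ --- nonnegative precisely because $\hat f_1$ is $1$‑Lipschitz --- and the clamped extension preserves this; the same computation in the $x_2$‑direction produces a sign‑changing slope because $\hat f_1$ is non‑monotone, so ``$f_1$ with $x_2$'' genuinely fails, and symmetrically ``$f_2$ with $x_1$'' fails. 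Hence exactly $d^2-2$ conditions hold. One also checks that $\sigma_1,\sigma_2$ are Lipschitz with a constant depending only on $\ba,\bb$, so $\f$ is $O(1)$‑Lipschitz; rescaling its output makes it $L$‑Lipschitz.

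For hardness, an evaluation of $\f$ at $\bx$ costs $O(1)$ evaluations of $(\hat f_1,\hat f_2)$: $f_3$ and $f_4,\dots,f_d$ are explicit, and computing $\sigma_1(x_2,x_3)$ (resp.\ $\sigma_2(x_1,x_3)$) requires one evaluation at the point obtained by inverting $\psi$ (or at the clamped boundary point). Moreover, if $\bx^\ast$ is an $\epsilon$‑root of $\f$ then $|x_3^\ast-\psi(x_1^\ast,x_2^\ast)|\le\epsilon$, so $\bx^\ast$ lies within $\epsilon$ of $S$ and the Lipschitz bound gives $|\hat f_1(x_1^\ast,x_2^\ast)|,|\hat f_2(x_1^\ast,x_2^\ast)|\le O(\epsilon)$, i.e.\ $(x_1^\ast,x_2^\ast)$ is an $O(\epsilon)$‑root of the two‑dimensional instance. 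Thus an algorithm solving the $d$‑dimensional family with $q$ queries solves the $1$‑Lipschitz two‑dimensional family with $O(q)$ queries, forcing $q=\Omega(1/\epsilon)$, and after the output rescaling $q=\Omega(L/\epsilon)$. The part I expect to be most delicate is the one flagged above: arranging $\sigma_1,\sigma_2$ to be simultaneously equal to the prescribed values on $S$, valued in $[0,1]$ (needed for positive‑switching), monotone in $x_3$ (needed so that only two conditions are lost), and globally Lipschitz with a controlled constant --- all of which rest on first normalizing the two‑dimensional hard instance to be $1$‑Lipschitz and then choosing the off‑shadow extension by the clamping above, with the behavior across the shadow boundary being exactly what needs to be checked.
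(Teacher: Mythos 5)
Your proposal is correct, and it follows the same overall strategy as the paper's proof --- a reduction from the two-dimensional lower bound of \Cref{prop:2d-insufficient} (with the hard planar family normalized to be $1$-Lipschitz), followed by padding with trivial components for $d\geq 4$ and a final output rescaling to get $\Omega(L/\epsilon)$ --- but the three-dimensional gadget itself is genuinely different. The paper keeps the planar instance's arguments untouched, placing $\g$ on the coordinates $(x_1,x_3)$, coupling the middle variable through $f_2(\bx)=2x_2-x_1-x_3$, and restoring the diagonal conditions by adding dominating terms $2\,(x_1-\operatorname{trunc}(2x_2-x_3))$ and $2\,(x_3-\operatorname{trunc}(2x_2-x_1))$; the two conditions it sacrifices are ``$f_1$ with $x_3$'' and ``$f_3$ with $x_1$''. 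You instead couple via $f_3(\bx)=x_3-\psi(x_1,x_2)$ and hide $(\hat f_1,\hat f_2)$ inside $\sigma_1,\sigma_2$ by composing with $\psi^{-1}$, so that $f_1,f_2$ depend on their own variables only through the identity map, sacrificing ``$f_1$ with $x_2$'' and ``$f_2$ with $x_1$'' instead; since the proposition only asks that some fixed set of $d^2-2$ conditions hold, either choice is fine. The point you flagged as delicate does go through: $1$-Lipschitzness plus positive switching of $\hat f_1$ gives $x_1-b_1\leq \hat f_1(x_1,x_2)\leq x_1-a_1$, hence $\sigma_1\in[0,1]$ on the shadow (which is exactly what positive switching of $f_1$ needs), and monotonicity of $\sigma_1$ in $x_3$ follows because $u\mapsto u-\hat f_1(u,x_2)$ is nondecreasing (no differentiability is needed, so phrase it this way rather than via $\partial_{x_1}\hat f_1$); the clamped extension preserves range, monotonicity and Lipschitzness. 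Both gadgets exploit the same underlying trick --- after normalizing the planar instance to Lipschitz constant $1$, a linear term of slope at least $1$ in the relevant variable absorbs its non-monotonicity --- and note that your remarks about the dropped conditions ``genuinely failing'' are unnecessary for the statement, just as they are in the paper's proof.
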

On the positive side, for efficient root-finding, it is sufficient that all $d^2-d$ ex-diagonal monotonicity conditions hold (even with no diagonal condition).
\begin{theorem}
\label{thm:dd-sufficient}
Let $d\geq 2$ be a fixed integer. Let $\f: [\ba,\bb]\to \rd$ be a positive-switching Lipschitz-continuous function with constant $L$.
If $f_i$ is  monotonically-decreasing with $x_j$ for every $i,j\in[d]$, $i\neq j$,
then for every $\epsilon>0$,
an $\epsilon$-root of $\f$ can be found using $O(\log^{\lceil(d+1)/2\rceil}(L/\epsilon))$ evaluations.
\end{theorem}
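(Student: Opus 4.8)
The plan is to peel off coordinates one (or, for the sharp bound, two) at a time, reducing to a lower-dimensional instance of exactly the same kind. The first step is to discretize: replace $[\ba,\bb]$ by a grid of resolution $\Theta(\epsilon/L)$ in every coordinate, so that by the Lipschitz condition it suffices to find a grid point at which every $|f_i|$ is at most $\epsilon/2$; all logarithms below are $\log(L/\epsilon)$ up to constants. To eliminate coordinate $d$, I would define $\zeta(\bx_{-d})$ to be the grid point returned by ordinary bisection of the single-variable function $x_d\mapsto f_d(\bx_{-d},x_d)$. This is well-defined since positive-switching gives $f_d\le 0$ at $x_d=a_d$ and $f_d\ge 0$ at $x_d=b_d$, and the final bracket of width $\epsilon/L$ certifies $|f_d(\bx_{-d},\zeta(\bx_{-d}))|\le \epsilon/2$. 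The crucial point is a ``monotone bisection'' lemma: replacing a sign-switching function by a pointwise-smaller one can only move the crossing found by bisection to the right (an easy induction on the bracket width). Combined with the ex-diagonal hypothesis that $f_d$ is monotonically-decreasing with $x_j$, this shows $\zeta$ is monotonically-increasing in every remaining variable, and hence that $\hat f_i(\bx_{-d}):=f_i(\bx_{-d},\zeta(\bx_{-d}))$, for $i\le d-1$, is again positive-switching (the sign of $f_i$ on the faces $x_i=a_i,b_i$ does not depend on $x_d$) and still satisfies every ex-diagonal condition (raising $x_j$ lowers $f_i$ directly and also raises $\zeta$, which lowers $f_i$ again since $f_i$ is decreasing in $x_d$). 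So a $d$-dimensional instance becomes a $(d-1)$-dimensional one at the cost of an extra $O(\log(L/\epsilon))$ factor per evaluation; iterating down to the bisection base case $d=1$ already gives an $O(\log^d(L/\epsilon))$ algorithm. Equivalently, the map $\bx\mapsto(\zeta_1(\bx),\dots,\zeta_d(\bx))$, with $\zeta_i$ the bisection response in coordinate $i$, is a monotone self-map of the grid whose fixed points are $\epsilon$-roots of $\f$, so the problem is morally that of computing a Tarski fixed point.

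To bring the exponent down to $\lceil(d+1)/2\rceil = \lfloor d/2\rfloor+1$ one must process coordinates in pairs, following the Fearnley--P\'alv\"olgyi--Savani algorithm for Tarski fixed points. Rather than nesting two single-coordinate reductions (cost $\log^2$ per level), one interleaves them: maintain a two-dimensional bracket for $(x_{d-1},x_d)$ and run $O(\log(L/\epsilon))$ rounds; each round restricts $\f$ by fixing $(x_{d-1},x_d)$ at the bracket's centre --- which again yields a positive-switching, all-ex-diagonal-decreasing instance in the first $d-2$ components --- recursively solves that $(d-2)$-dimensional instance, and then uses the two leftover values $f_{d-1},f_d$ at the returned point to shrink the bracket to one of its quadrants. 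The reason one probe suffices to halve the bracket in each of the two coordinates is that $f_{d-1}$ and $f_d$, although only sign-switching in their own variable, are monotone in all other variables; deciding which half to keep uses the ``worst-corner'' observation that, e.g., $f_{d-1}$ on the hyperplane $\{x_{d-1}=m\}$ is maximized at the all-$a$ corner and minimized at the all-$b$ corner, so a single evaluation at each of these two corners certifies the needed Miranda sign-condition for one of the two sub-boxes. This gives $T(d)=O(\log(L/\epsilon))\cdot T(d-2)$ with $T(1)=O(\log(L/\epsilon))$ and $T(2)=O(\log^2(L/\epsilon))$ (the latter being exactly \Cref{thm:2d-sufficient-exdiagonal}), which unrolls to $T(d)=O(\log^{\lceil(d+1)/2\rceil}(L/\epsilon))$.

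The main obstacle, present at every step, is the absence of the diagonal conditions: $f_i$ need not be monotone in $x_i$. This is exactly why $\zeta$ must be built from a bisection (and why the innocuous-looking monotone-bisection lemma is load-bearing: without it the reduced instances would not inherit the monotonicity that lets the recursion close), and it is also what makes the paired-coordinate step delicate --- the ``centre'' point at which one reads $f_{d-1},f_d$ moves from round to round, so one must argue via the worst-corner evaluations, and one must separately handle the ``middle'' case where neither corner certifies a half-box, where instead the two-dimensional bracket shrinks along the zero set of $f_{d-1}$, exploiting that $f_{d-1}$ is monotone in $x_d$, exactly as in the proof of \Cref{thm:2d-sufficient-exdiagonal}. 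The rest is bookkeeping: each of the $O(d)$ elimination levels introduces only additive $O(\epsilon)$ error because the Lipschitz condition is used only on intervals of length $\epsilon/L$, so with $d$ fixed the total error is $O(\epsilon)$ and is absorbed by rescaling; and one checks the degenerate cases in which a bracket collapses onto $a_i$ or $b_i$, where positive-switching already forces the sign of the corresponding component.
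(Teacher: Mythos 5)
Your first reduction (eliminating one coordinate by bisection, using the monotone-bisection lemma to show $\zeta$ is increasing and hence that the composed instance stays positive-switching and ex-diagonal decreasing) is sound in spirit and essentially reproduces the paper's \emph{alternative} proof, which only yields $O(\log^{d}(L/\epsilon))$. The genuine gap is in the step that is supposed to produce the stated exponent: the paired-coordinate outer loop with recurrence $T(d)=O(\log(L/\epsilon))\cdot T(d-2)$. That recurrence needs one inner $(d-2)$-dimensional solve at the bracket centre to let you discard three of the four quadrants of the $(x_{d-1},x_d)$-bracket while preserving the Miranda/positive-switching conditions on what remains. When the two leftover values have equal signs this works exactly as in the one-coordinate-at-a-time recursion; but in the mixed-sign case (say $f_{d-1}(\by)<0<f_d(\by)$) the ex-diagonal conditions only compare $\f$ against points that dominate, or are dominated by, the probe point $\by$ in \emph{every} coordinate, whereas the face $x_{d-1}=y_{d-1}$ of your candidate quadrant contains points with $x_d<y_d$ and with the first $d-2$ coordinates on either side of $\by$; so you cannot certify $f_{d-1}\le 0$ there, and your ``worst-corner'' evaluations do not fix this. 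Nor does ``shrinking along the zero set of $f_{d-1}$ as in \Cref{thm:root-2d-exdiagonal}'' transfer, because there the zero set lives in a genuine 2D domain, while here the probe point in the remaining $d-2$ coordinates is re-solved and moves every round. What you are implicitly assuming is precisely the content of the Fearnley--Savani / Chen--Li decomposition theorems for Tarski fixed points, i.e.\ the hard part. The paper avoids re-proving it: after discretizing $\f$ to $\{-1,0,1\}$ on a $\delta$-grid, it defines $\h(\bx)=\bx-\delta\,\f(\bx)$, checks that $\h$ is order-preserving (ex-diagonal monotonicity handles $i\neq j$, $\delta$-continuity handles $i=j$) and maps the grid box to itself (positive-switching), and then invokes the known $O(\log^{\lceil(d+1)/2\rceil}N)$ Tarski algorithm of Chen and Li as a black box; a fixed point of $\h$ is exactly a root of the discretized $\f$.

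Two smaller points. Your parenthetical that the problem is ``morally Tarski'' via $\bx\mapsto(\zeta_1(\bx),\dots,\zeta_d(\bx))$ can indeed be turned into a correct reduction (that map is order-preserving by your monotone-bisection lemma and its fixed points are $\epsilon$-roots), but each query to it costs $\Theta(d\log(L/\epsilon))$ evaluations of $\f$, so it gives only $O(\log^{\lceil(d+1)/2\rceil+1}(L/\epsilon))$; the whole point of the paper's map $\bx-\delta\f(\bx)$ is that one query costs one evaluation. Separately, in your coordinate-elimination recursion the reduced functions $\hat f_i$ are \emph{not} Lipschitz or $\delta$-continuous, since $\zeta$ can jump; the recursion survives only because the jump is one-sided in the diagonal direction ($\hat f_i(\bx+\delta e_i)\le \hat f_i(\bx)+\epsilon$, using monotonicity of $\zeta$ and the decrease of $f_i$ in $x_d$), and that one-sided bound is exactly what makes the final bisection bracket certify an $\epsilon$-root. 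This needs to be stated and carried through the induction; your closing remark about ``additive $O(\epsilon)$ error per level'' does not capture it. As written, the proposal (after that patch) proves the $O(\log^d)$ bound but not the theorem's $O(\log^{\lceil(d+1)/2\rceil})$ bound.
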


Finally, we show the necessity of the positive-switching conditions. 
Obviously, if one or more of the $f_i$ is not switching, then $\f$ might not have a root at all.
But we can focus on functions that are guaranteed to have an exact root, and ask how many evaluations are required to find an $\epsilon$-root.
Without monotonicity conditions, the problem requires $\Omega(1/\epsilon)$ evaluations even for a 1-dimensional function \citep{sikorski1984optimal}.
But if a one-dimensional function is monotone on $[a,b]$,
then it has a root if and only if it is switching,
so an $\epsilon$-root can be found in time $O(\log(1/\epsilon))$.
We show that he situation is not the same for $d\geq 2$: if even a single switching condition is missing, then even all $d^2$ monotonicity conditions do not make the problem polynomial-time solvable.
\begin{proposition}
\label{prop:switcing-necessary}
For every $\epsilon>0, L>0, d\geq 2$,
every algorithm that finds an $\epsilon$-root for every $L$-Lipschitz-continuous function that has a root and satisfies $d^2$ monotonicity conditions and $d-1$ positive-switching conditions might need $\Omega(L/\epsilon)$ evaluations in the worst case.
\end{proposition}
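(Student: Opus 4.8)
The plan is to prove this lower bound by an adversary argument against a one‑parameter family of admissible instances. For $d>2$ I would first reduce to $d=2$: append components $f_i(\bx)=L(x_i-1/2)$ for $3\le i\le d$, which are increasing in $x_i$, constant (hence monotone) in every other variable, and positive‑switching along axis $i$; they only force $x_i$ within $\epsilon/L$ of $1/2$ and contribute parameter‑independent information, so all hardness is concentrated in the first two coordinates.

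For $d=2$, work on $[0,1]^2$, fix a small scale $\delta=\Theta(\epsilon/L)$, and take the first component fixed, $f_1(\bx)=\ell\,(x_1+x_2-1)$ with $\ell=\Theta(L)$: this is increasing in both variables, $O(L)$‑Lipschitz, positive‑switching along axis $1$, and its zero set is the anti‑diagonal. The hidden parameter $t\in[\tfrac14,\tfrac34]$ enters only the second component,
\[
f_2^{\,t}(\bx)=\ell\bigl(x_1+x_2-1-\min(\delta,\,|x_1-t|)\bigr).
\]
The subtracted ``gap'' $\min(\delta,|x_1-t|)$ equals $\delta$ outside a $\delta$‑neighbourhood of $x_1=t$ and vanishes at $x_1=t$, so the zero set of $f_2^{\,t}$ lies a fixed distance $\delta$ above the anti‑diagonal except that it dips down to touch it exactly at $(t,1-t)$. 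Since the gap is $1$‑Lipschitz in $x_1$ with slope at most that of $x_1+x_2$, $f_2^{\,t}$ is still weakly increasing in $x_1$ (and trivially in $x_2$), and $O(L)$‑Lipschitz; after rescaling the constants it is $L$‑Lipschitz. One checks that $f_2^{\,t}$ is \emph{not} positive‑switching along axis $2$ — for small $x_1$ one has $f_2^{\,t}(x_1,1)=\ell(x_1-\delta)<0$, violating $f_2^{\,t}\ge 0$ on $\{x_2=1\}$ — while $\f^t=(f_1,f_2^{\,t})$ still has the root $(t,1-t)$ (and only that root, since $f_1=0$ only on the anti‑diagonal, where $f_2^{\,t}$ vanishes only at $x_1=t$). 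Thus each $\f^t$ is admissible and satisfies exactly $d-1$ positive‑switching conditions (all but the one for component $2$).

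The next step is to locate the $\epsilon$‑roots: from $|f_1(\bx)|\le\epsilon$ we get $|x_1+x_2-1|=O(\epsilon/L)$, and combining with $|f_2^{\,t}(\bx)|\le\epsilon$ forces $\min(\delta,|x_1-t|)=O(\epsilon/L)$, hence (as $\delta$ is a larger constant multiple of $\epsilon/L$) $|x_1-t|=O(\epsilon/L)$. Consequently a single point cannot be an $\epsilon$‑root of both $\f^{t_1}$ and $\f^{t_2}$ once $|t_1-t_2|$ exceeds a suitable constant times $\epsilon/L$. Now the adversary: to any query $\bx$ with first coordinate $c$ it replies with $\f^t(\bx)$ evaluated as if $\min(\delta,|c-t|)=\delta$, i.e.\ it reports $f_2(\bx)=\ell(c+x_2-1-\delta)$ together with the parameter‑independent value of $f_1$. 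This is consistent with $\f^t$ for every $t$ with $|c-t|\ge\delta$, and — this is the crucial point distinguishing the construction from a binary‑searchable one — the \emph{same} reply is given whether $t$ lies to the left or right of $c$, because $\min(\delta,|c-t|)$ saturates at $\delta$ on both sides; a query therefore conveys no ``which side'' information and only excludes the interval $(c-\delta,c+\delta)$ of candidate parameters. After $N$ queries the set of consistent parameters still has measure at least $\tfrac12-2N\delta$, so if $N=o(L/\epsilon)$ it has measure larger than the separation threshold above and hence contains two parameters far enough apart; whatever point the algorithm outputs then fails to be an $\epsilon$‑root of at least one of the two still‑consistent instances. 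This yields the $\Omega(L/\epsilon)$ bound, and for $d>2$ the appended coordinates add only parameter‑independent answers, so the same adversary works verbatim.

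The step I expect to require the most care is reconciling the monotonicity of $f_2^{\,t}$ in $x_1$ with making the dip useful: the gap must descend from $\delta$ to $0$ with slope at most the $x_1$‑slope of $x_1+x_2$ (otherwise $f_2^{\,t}$ would be decreasing in $x_1$ just to the right of $t$), yet the dip must be narrow enough that the $\epsilon$‑root set has width $O(\epsilon/L)$ — the piecewise‑linear gap of slope exactly $1$ is essentially the forced choice balancing these. A secondary point that must be verified, not merely asserted, is that $f_2^{\,t}$ fails its positive‑switching condition for \emph{every} $t$ in the chosen range (not just for some), since otherwise the family would be covered by \Cref{thm:dd-sufficient} and the lower bound would be contradicted.
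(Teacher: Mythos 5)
Your proof is correct, but it takes a genuinely different route from the paper's. The paper proves \Cref{prop:switcing-necessary} by a black-box reduction to the one-dimensional lower bound of \citet{sikorski1984optimal}: take an arbitrary $1$-Lipschitz $g$ on $[-1,1]$ that has a root and embed it as $f_1(\bx)=g(x_1)+2(x_1-x_2)$, $f_2(\bx)=x_2-x_1$ (padding with identically-zero components for $d\geq 3$); the $2(x_1-x_2)$ term forces all four monotonicity conditions, $f_2$ supplies the single allowed switching condition, and any $\epsilon$-root of $\f$ gives a $3\epsilon$-root of $g$, so the known $\Omega(1/\epsilon)$ bound transfers. You instead give a self-contained adversary argument against the explicit family $f_1=\ell(x_1+x_2-1)$, $f_2^{t}=\ell\bigl(x_1+x_2-1-\min(\delta,|x_1-t|)\bigr)$, and your verifications all check out: the slope-$1$ dip against the slope-$1$ ramp is exactly what keeps $f_2^t$ weakly increasing in $x_1$, the $x_2=1$ switching condition fails for every $t$ in the range (so every instance satisfies exactly $d-1$ switching conditions), $\epsilon$-roots are confined to $|x_1-t|=O(\epsilon/L)$ once $\delta$ is a larger constant multiple of $\epsilon/L$, and the saturated reply $\ell(c+x_2-1-\delta)$ carries no which-side information, so $o(L/\epsilon)$ queries leave two consistent parameters too far apart to be served by one output point. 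What the paper's route buys is brevity and hard instances in the canonical monotonicity directions of \Cref{tab:summary} ($f_1$ decreasing with $x_2$, $f_2$ decreasing with $x_1$); what yours buys is a fully explicit hard family with no reliance on a cited lower bound, plus a transparent explanation of why the missing switching condition destroys bisection. Your ex-diagonal conditions point the ``increasing'' way rather than the paper's ``decreasing'' convention, but since the statement only requires $d^2$ monotonicity conditions (and the reflection $x_2\mapsto 1-x_2$, $f_2\mapsto -f_2$ converts your family to the canonical directions while still violating exactly the one switching condition for component $2$), this is a cosmetic mismatch, not a gap.
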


\Cref{prop:dd-insufficient}, \Cref{thm:dd-sufficient}
and \Cref{prop:switcing-necessary}
 are proved in \Cref{sec:roots-dd}.

\Cref{thm:dd-sufficient} can be adapted to other  monotonicity conditions.
For example, by replacing the variable $x_j$ with $(a_j+b_j)-x_j$ and replacing $f_j$ with $-f_j$, 
the function $f_j$ remains positive-switching with  $x_j$, 
but now for all $i\neq j$, $f_i$ is monotonically-\emph{increasing} with $x_j$,
and $f_j$ is monotonically-\emph{increasing} with $x_i$ (that is, $2d-2$ ex-diagonal conditions are now increasing). 
However, these simple transformations can generate only $2^d$ combinations, out of $2^{d^2-d}$ potential combinations. In particular, the following case is open.

\begin{question}
Let $\f: [\ba,\bb]\to \rd$ be a positive-switching Lipschitz-continuous function 
such that $f_i$ is monotonically-increasing with $x_j$ for all $i\neq j$.
Is it possible to compute an $\epsilon$-root of $\f$ using $O(poly(\log(1/\epsilon)))$ evaluations?
\end{question}

Another case that remains unsolved is when  $d^2-d-1$ ex-diagonal conditions are satisfied.
\begin{question}
Let $\f: [\ba,\bb]\to \rd$ be a positive-switching Lipschitz-continuous function that satisfies some $d^2-d-1$ ex-diagonal monotonicity conditions (and any number of diagonal conditions). Is it possible to compute an $\epsilon$-root of $\f$ using $O(poly(\log(1/\epsilon)))$ evaluations?
\end{question}
Both open questions are for any $d \geq 3$.
See \Cref{tab:summary} for a summary of results and open questions.

\begin{table}
\begin{center}
\begin{tabular}{|c|c|c|c|}
\hline
\textbf{Dimension ($d$)} & \textbf{\# diagonal} & \textbf{\# ex-diagonal}  & \textbf{Polynomial?} \\
\hline
\hline
1 & $\geq 0$ & $\geq 0$ & Yes (bisection) \\
\hline
\hline
\multirow{4}{*}{2} & $0$ & $0$ & No (\Cref{prop:2d-insufficient})   \\
\cline{2-4}
 & $\geq 1$ & $\geq 0$ & Yes (\Cref{thm:2d-sufficient})  \\
\cline{2-4}
 & $\geq 0$ & $\geq 1$ & Yes (\Cref{thm:2d-sufficient-exdiagonal})
\\
\hline
\hline
\multirow{3}{*}{$d\geq 3$} & $\leq d$ & $\leq d^2-d-2$ & No (\Cref{prop:dd-insufficient}) \\
\cline{2-4}
 & $\geq 0$ & $\geq d^2-d$ & Yes (\Cref{thm:dd-sufficient}) \\
\cline{2-4}
 & $0,\ldots,d$ & $d^2-d-1$ & Open \\
\hline
\hline
\end{tabular}
\end{center}
\caption{
\label{tab:summary}
A table summarizing the possibility of finding an approximate root of a positive-switching function, with number of evaluations polynomial in the accuracy. 
The diagonal conditions are that $f_i$ is \emph{increasing} with $x_i$, and the ex-diagonal conditions are that $f_i$ is \emph{decreasing} with $x_j$ for $j\neq i$
(though for $d=2$ the results hold for the opposite conditions too).
}
\end{table}

\subsection{Approximate envy-free cake allocations}
We present an example application of \Cref{thm:2d-sufficient} for the classic problem of \emph{fair cake-cutting}. 
The input to this problem is a cake, usually represented by the interval $[0,1]$, and $n$ agents with different preferences over cake pieces.
An \emph{allocation} is a partition of the cake into $n$ connected pieces (intervals), and an assignment of a single piece per agent. An allocation is called \emph{envy-free} if each agent (weakly) prefers her piece over the other $n-1$ pieces.
An envy-free allocation is guaranteed to exist whenever the agents' preferences are continuous and satisfy the \emph{hungry agents assumption}, that is, they always prefer a non-empty piece to an empty piece \citep{stromquist1980cut,su1999rental}.

This existence result has recently been extended to group-cake-cutting \citep{segal2020cut}: 
given a fixed number of groups $m\leq n$, and fixed group sizes  $k_1,\ldots,k_m$ with $\sum_j k_j = n$, there always exists a partition of the cake into $m$ connected pieces, and an assignment of some $k_j$ agents to each piece $j$, such that every agent prefers her group's piece over the other $m-1$ pieces.

However, there is provably no algorithm that finds an envy-free allocation using finitely-many queries of the agents' preferences, even in the individual-agent setting, whenever there are three or more agents  \citep{stromquist2008envy}.
In practice, assuming agents are indifferent to very small movements of the cut points (e.g., 0.1 mm), it may be sufficient to find an allocation that is \emph{$r$-near envy-free} --- an allocation for which, for each agent $i$, a movement of at most $r$ in some of the cut points would make the allocation envy-free for $i$.
An $r$-near envy-free allocation can be found using $O((1/r)^{n-2})$ queries, and this is tight in the worst case even for $n=3$ \citep{deng2012algorithmic}.

A natural special case is when the agents' preferences are \emph{monotone}, that is, they always prefer a piece to any subset of it. 
Our second contribution is a general reduction from computing an $r$-near envy-free allocation among agents with monotone preferences to computing an $\epsilon$-root. 
To present the reduction, we define a monotonicity property.
\begin{definition}
\label{def:alternating-monotone}
A function $\f: [\ba,\bb]\to \rd$ is called \emph{alternating-monotone} if it satisfies the following $2d-1$ monotonicity conditions:
\begin{itemize}
\item $d$ diagonal conditions: for every $i\in \{1,\ldots,d\}$, $f_i$ is increasing with $x_i$;
\item $d-1$ ex-diagonal conditions: for every $i\in \{2,\ldots,d\}$, $f_i$ is decreasing with $x_{i-1}$.
\end{itemize}
\end{definition}

\begin{theorem}
\label{thm:ef-to-root}
For every integer $d\geq 2$, for every instance of group cake-cutting with $m=d+1$ groups, 
where all agents have monotone preferences,
there exists a function $\f:\rd\to\rd$ with the following properties:
\begin{enumerate}
\item  
One evaluation of $\f$ requires $O(d^2 \cdot n)$ queries of the agents' preferences.
\item  
Every $1/(2\cdot d^2)$-root of $\f$ corresponds to an $r$-near envy-free allocation.
\item  
$\f$ is Lipschitz-continuous with constant $L=n/r$.
\item 
$\f$ is sum-switching in $[0,1]^d$.
\item 
$\f$  is alternating-monotone in $[0,1]^d$.
\end{enumerate}
\end{theorem}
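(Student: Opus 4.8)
The plan is to construct $\f$ explicitly from the agents' valuations together with a parametrization of the $d$ cut points, and then to verify the five properties; essentially all of the content is in arranging the construction so that properties 2, 4 and 5 hold \emph{simultaneously}, after which properties 1 and 3 are routine.

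First I would fix the cut parametrization. Since $m=d+1$ groups need $d$ cuts, I would let the $i$-th cut sit at position $x_i$ itself, and — to cope with the possibility that the coordinates of $\bx\in[0,1]^d$ are not sorted — adopt the convention that an interval $[s,t]$ with $s>t$ denotes the empty set. Writing $x_0:=0$ and $x_{d+1}:=1$, the $i$-th piece is then $P_i(\bx):=[x_{i-1},x_i]$ (empty if $x_{i-1}>x_i$). The key point of this convention is that, for every agent $\ell$, the value $v_\ell(P_i)$ is weakly increasing in $x_i$ and weakly decreasing in $x_{i-1}$, since raising $x_i$ can only enlarge $P_i$ (or keep it empty) while raising $x_{i-1}$ can only shrink it; this is exactly the monotonicity pattern of \Cref{def:alternating-monotone}. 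To obtain Lipschitz-continuity despite the $v_\ell$ possibly having jumps, I would replace each $v_\ell$ by an $r$-averaged valuation $\hat v_\ell$ (average $v_\ell$ over translating each endpoint independently by an amount in $[0,r]$, capped to $[0,1]$): $\hat v_\ell$ is still inclusion-monotone, it is $(1/r)$-Lipschitz in each endpoint under the normalization $v_\ell([0,1])\le 1$, and a partition that is exactly envy-free for all the $\hat v_\ell$ can be turned, by moving each cut by at most $r$, into one exactly envy-free for the original $v_\ell$, i.e.\ into an $r$-near envy-free allocation.

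Next I would define each component as a scaled satisfaction signal at the $i$-th cut, of the form $f_i(\bx)=\hat v(P_i)-\hat v(P_{i+1})-c_i$, where the valuation in the two terms comes from a representative of group $i$ (chosen as a suitable order statistic among the group-$i$ agents), and $c_i\ge 0$ is a constant proportional to $k_i$ with $\sum_i c_i\le 1$. Because $P_i$ involves only $(x_{i-1},x_i)$, $P_{i+1}$ only $(x_i,x_{i+1})$, and $c_i$ is constant, $f_i$ is automatically weakly increasing in $x_i$ and weakly decreasing in $x_{i-1}$ — this is property 5. For property 4 on $[\ba,\bb]=[0,1]^d$: when $x_i=0$ the piece $P_i$ is empty, so $f_i\le -c_i\le 0$; when $x_i=1$ all of $P_{i+1},\dots,P_{d+1}$ are empty, and the (near-)telescoping of $\sum_{j\le i}f_j$ collapses to a nonnegative quantity minus $\sum_{j\le i}c_j\ge 0$. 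Property 1 is then a counting argument: evaluating $\f(\bx)$ requires only the values $\hat v_\ell(P_j)$ over the $n$ agents and the $O(d)$ pieces, i.e.\ $O(d^2 n)$ queries. Property 3 follows by composing the $(1/r)$-Lipschitzness of $\hat v_\ell$ in the endpoints with the fact that the endpoints depend $1$-Lipschitzly on $\bx$ and that $f_i$ aggregates $O(n)$ agents, giving constant $n/r$. The crux is property 2: I would show that if $|f_i(\bx)|\le 1/(2d^2)$ for all $i$, then chaining the $d$ near-equalities $\hat v(P_i)\approx\hat v(P_{i+1})+c_i$ pins down every $\hat v(P_j)$ up to additive error $d\cdot(1/(2d^2))=1/(2d)$; hence every piece is "almost a favorite'' of enough agents, and a defect/Hall-type matching argument produces an assignment of $k_j$ agents to each piece with each agent's residual envy below the level the $r$-smoothing absorbs. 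One also checks here that a root of $\f$ forces the $x_i$ to be essentially sorted, so that $P_1(\bx),\dots,P_{d+1}(\bx)$ really is a partition of $[0,1]$ into connected pieces.

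I expect the main obstacle to be reconciling the sum-switching condition (property 4) with a \emph{meaningful} property 2: the telescoping that makes $\sum_{j\le i}f_j\ge 0$ at $x_i=1$ wants a single "global'' valuation inside the $f_i$, whereas group cake-cutting with $k_j$ agents per group needs the $f_i$ to reflect many agents at once. Resolving this tension — most plausibly by using order statistics of the $\hat v_\ell(\cdot)$ (so that the cancellation in the sum survives the change of representative) and a careful defect version of Hall's theorem for the final group assignment — is the technical core; the $r$-smoothing, the Lipschitz estimate, and the query count are all mechanical once the $f_i$ are fixed.
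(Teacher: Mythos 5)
Your plan has the right outer shape (parametrize the $d$ cuts by $\bx$, build $\f$ so that near-roots feed a Hall-type matching, smooth/discretize at scale $r$ to get $L=n/r$), but the core of the construction is different from the paper's and, as written, it does not go through. The paper does not use value differences of a ``representative'' at all: it defines $g_i(\bx)$ as the \emph{number of agents whose favorite piece is piece $i$} in the partition whose $i$-th cut is at $\max_{j\leq i}x_j$ (evaluated on an $r$-grid and extended by interpolation), and sets $f_i:=g_i-k_i$. Your construction $f_i(\bx)=\hat v(P_i)-\hat v(P_{i+1})-c_i$ has a conceptual problem you cannot patch by choosing the representative cleverly: in group cake-cutting the groups are \emph{not} given in advance --- only the sizes $k_1,\ldots,k_m$ are --- so ``an order statistic among the group-$i$ agents'' is not well-defined before the allocation is found. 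Relatedly, property 2 does not follow from your $f_i$: a near-root only tells you that one agent's (or one order statistic's) values of consecutive pieces nearly differ by $c_i$; it carries no information about how many of the $n$ agents most-prefer each piece, which is exactly the hypothesis Hall's theorem needs to place $k_j$ agents on piece $j$. You flag this tension yourself as ``the technical core,'' but resolving it is the whole theorem; in the paper it is resolved by making $\f$ itself the preference-counting function, so that $|f_i(\bx)|\leq 1/(2d^2)$ plus integrality of $g_i$ at grid corners immediately yields the Hall condition.

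Two further concrete failures in the sketch. First, sum-switching: with your telescoping form, at $x_i=1$ one gets $\sum_{j\leq i}f_j(\bx)=\hat v(P_1)-\hat v(P_{i+1})-\sum_{j\leq i}c_j=\hat v(P_1)-\sum_{j\leq i}c_j$, and this is negative whenever $x_1$ is close to $0$ and the representative assigns little value to $[0,x_1]$; so property 4 fails unless all $c_j=0$, in which case property 2 loses its meaning. (In the paper, $x_i=1$ forces pieces $i{+}1,\ldots,d{+}1$ to be empty, hence $g_1+\cdots+g_i=n$ and $f_1+\cdots+f_i=n-(k_1+\cdots+k_i)\geq 0$ --- the counting makes the inequality automatic.) Second, your cut convention $P_i=[x_{i-1},x_i]$ with ``empty if reversed'' produces, at unsorted $\bx$, non-empty pieces that overlap and do not cover the cake; your claim that a near-root forces the coordinates to be essentially sorted is asserted, not proved, and the envy-free conclusion needs an actual partition. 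The paper sidesteps this entirely by cutting at the running maxima $\max_{j\leq i}x_j$, which always yields a valid connected partition and is precisely what makes the alternating-monotonicity argument (piece $i$ grows, piece $i{+}1$ shrinks when $x_i$ increases) a clean case analysis.
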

\Cref{thm:ef-to-root} is proved in \Cref{sec:envy-free}.
Combining it with \Cref{thm:2d-sufficient} yields an algorithm for finding a $r$-near $3$-group envy-free allocation using $O(n\cdot \log^2(n/r))$ queries. This algorithm was already found by \citet{igarashi2022envy}, but \Cref{thm:ef-to-root} links this problem to the much more general problem of finding approximate roots of monotone functions, and may pave the way for extending the result in  future work.

\begin{question}
Given $n$ agents with monotone preferences,
$m\geq 4$ group-sizes $k_1,\ldots,k_m$ with $\sum_j k_j=n$,
and approximation factor $r\in(0,1)$, 
it is possible to find an $r$-near-envy-free cake allocation among $m$ groups using $\text{poly}(n, \log(1/r))$ queries?
\end{question}

\section{Related work}
\label{sec:related}

\subsection{Computing approximate roots}
\citet{sikorski1984optimal} proved that, for general $L$-Lipschitz continuous functions (that are neither switching nor monotone), $\Omega((L/\epsilon)^d)$ evaluations may be needed to find an $\epsilon$-root, for any $d\geq 1$.

For switching functions, the bisection method can be used for $d=1$. \citet{mourrain2002complexity} describes several generalizations of this method for $d\geq 2$. One of them is the \emph{characteristic bisection} method. \citet{vrahatis1986rapid} prove that the number of evaluations required for finding an $\epsilon$-root is \emph{at least} $\log_2(1/\epsilon)$. 
Note that, from later surveys \citep{mourrain2002complexity,vrahatis2020generalizations}, it might be understood that $\log_2(1/\epsilon)$ is an upper bound on the number of evaluations; however, in the original paper $\log_2(1/\epsilon)$ is only a lower bound, we could not find any proof of an upper bound. In fact,  \Cref{prop:2d-insufficient} proves a lower bound of $\Omega(1/\epsilon)$.

\citet{adler2016complexity} define a problem called ZeroSurfaceCrossing, in which the input is a two-dimensional function $\f$ given by circuits rather than by a value oracle. They prove that this problem is PPAD-complete, by reduction from a discrete variant of Brouwer's fixed-point theorem.

\subsection{Computing approximate fixed points}
There is a close connection between roots and \emph{fixed points} (see \Cref{sec:fixed-point}).
In particular, Miranda's theorem is closely related to \emph{Brouwer's fixed point theorem}, which says that a continuous function from a convex compact set to itself has a fixed point. 

The first algorithm for computing an  $\epsilon$-approximation of a Brouwer fixed-point of a general function was given by \citet{scarf1967approximation}, using $\text{poly}(1/\epsilon)$ evaluations. Several other algorithms were given later.
\citet{hirsch1989exponential} proved a lower bound $\Omega(L/\epsilon)$ for $d=2$, and a weaker lower bound for any $d\geq 3$.
\citet{chen2005algorithms} proved the tight bound 
$\Theta((L/\epsilon)^{d-1})$ for any $d\geq 3$.

There are two known special classes of functions, in which an $\epsilon$-fixed-point can be computed more efficiently:
\begin{enumerate}
\item \emph{Contractive functions} --- Lipschitz-continuous functions with constant $L\leq 1$; see e.g. \citet{shellman2003recursive}.
\item \emph{Differentiable functions}, where the derivative of $\f$ is known and can be evaluated as well. This allows the use of Newton's method; see e.g. \citet{smale1976convergent,kellogg1976constructive}.
\end{enumerate}
The above papers did not consider monotonicity properties. 
However, \emph{Tarski's fixed-point theorem} is a different fixed-point theorem that explicitly considers monotonicity.
Tarski's theorem says that, for every complete lattice $Z$, every order-preserving function from $Z$ to itself has a fixed point.
The box $[\ba,\bb]$ can be discretized by cutting it in each dimension using lines at most a distance $\epsilon$ apart. The resulting set of vertices, endowed with the component-wise partial-order ($\bx \geq \by$ iff $x_i \geq y_i$ for all $i\in[d]$), forms a lattice. Since this lattice is finite, it is complete. 

\citet{dang2020computations} prove that, for such a component-wise lattice, and for any order-preserving function $\f$, a fixed-point of $\f$ can be found using $O(\log^d (N))$ evaluations of $\f$, where $N$ is the maximum number of points in each dimension.
This was subsequently improved to 
$O(\log^{2\lceil d/3 \rceil} (N))$ by \citet{fearnley2022faster}, and to 
$O(\log^{\lceil (d+1)/2 \rceil} (N))$ by \citet{chen2022improved}.
We use the latter result to prove \Cref{thm:dd-sufficient}.
Note that their result does not imply our \Cref{thm:2d-sufficient}, as this theorem requires only that a single component of $\f$ be monotone in a single component of the input. 

\subsection{Envy-free cake-cutting}
\citet{deng2012algorithmic} studied approximate envy-free cake-cutting when the agents have very general preferences: they only need to satisfy the hungry-agents assumption.
In this general setting, they proved a tight lower bound  for computing an $r$-near envy-free cake-cutting of $\Omega((1/r)^{n-2})$ queries for any $n\geq 3$.

\citet{deng2012algorithmic} also considered a more specific setting, in which agents have \emph{monotone preferences} (they weakly prefer each piece to its subsets). 
For this setting, they presented an algorithm for computing an $r$-near envy-free cake-cutting for three agents in $poly(\log(1/r))$ queries.
\citet{branzei2022query} showed a matching lower bound of $\Omega(\log(1/r))$ even for the case in which each agent's preferences are represented by an \emph{additive non-negative} value-function.
When the agents' preferences are represented by value functions that are $L$-Lipschitz-continuous, $r$-near envy-freeness also implies that the magnitude of envy experienced by each agent is at most $O(L r)$. When the amount of envy experienced by each agent is bounded by some quantity $r$, we say that the division is $r$-envy-free.


Recently, \citet{igarashi2022envy} showed an algorithm that finds an $r$-envy-free allocation using $O(\log^2(1/r))$ queries, and improves over the results of \citet{deng2012algorithmic,branzei2022query} in two respects:
\begin{itemize}
\item It works not only for $3$ individual agents, but also for $3$ \emph{groups}. 
\item It works also for a \emph{two-layered cake} (see their paper for the definition).
\end{itemize}
We provide an alternative proof for their results for 3 groups (for a one-layered cake only).

Very recently, \citet{hollender2023envy} showed an algorithm that finds an $r$-envy-free allocation among $n=4$ individual agents with monotone value-functions, using $\text{poly}(\log(1/r))$ queries.
As far as we know, the case of $n\geq 5$ individual agents, as well as the case of $m\geq 4$ groups of agents, are open even with additive value-functions,

\section{Finding approximate roots: two dimensions}
\label{sec:roots-2d}

\subsection{Hardness without monotonicity}
\label{sec:fixed-point}
The proof of \Cref{prop:2d-insufficient} is by reduction from the problem of computing a \emph{Brouwer fixed point}.

Given a function $\f: \rd\to \rd$, a point $\bx\in\rd$ is called a \emph{fixed-point} of $\f$ if  $\f(\bx)=\bx$; it is called an \emph{$\epsilon$-fixed-point} of $\f$ (for some real $\epsilon> 0$) if $|\f(\bx)-\bx|\leq \epsilon$.

To show the relation between fixed points and roots, we define a notion of function duality:

\newcommand{\dual}[1]{\overline{#1}}
\begin{definition}
The \emph{dual} of a function $\f: \rd\to \rd$ is a function $\dual{\f}: \rd\to \rd$ defined by: $\dual{\f}(\bx) := \bx - \f(\bx)$.
\end{definition}

The following observations follow directly from the definition.
\begin{observation}
\label{obs:fixed}
(a) The dual of the dual of $\f$ equals $\f$: $\dual{\dual{\f}}=\f$.

(b) For all $\epsilon\geq 0$,  $\bx$ is an $\epsilon$-root of $\f$ if and only if it is an $\epsilon$-fixed-point of $\dual{\f}$.

(c) $\f$ is continuous if and only if $\dual{\f}$ is continuous.

(d) If $\f$ is Lipschitz-continuous with constant $L$, then $\dual{\f}$ is 
Lipschitz-continuous with constant at most $L+1$.
\end{observation}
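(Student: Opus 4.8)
The plan is to verify each of the four parts directly from the definition $\dual{\f}(\bx) = \bx - \f(\bx)$; every claim is an immediate algebraic or elementary topological consequence, so there is no real obstacle, only bookkeeping with signs and with the maximum norm.

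For part (a), I would simply compute, for every $\bx \in \rd$,
\[
\dual{\dual{\f}}(\bx) = \bx - \dual{\f}(\bx) = \bx - \bigl(\bx - \f(\bx)\bigr) = \f(\bx),
\]
so $\dual{\dual{\f}} = \f$. For part (b), the key observation is that $\f(\bx) = \bx - \dual{\f}(\bx) = -\bigl(\dual{\f}(\bx) - \bx\bigr)$, and since the maximum norm satisfies $|-\bz| = |\bz|$, we get $|\f(\bx)| = |\dual{\f}(\bx) - \bx|$ for all $\bx$. Hence for any $\epsilon \geq 0$, $|\f(\bx)| \leq \epsilon$ holds if and only if $|\dual{\f}(\bx) - \bx| \leq \epsilon$, which is exactly the statement that $\bx$ is an $\epsilon$-root of $\f$ if and only if it is an $\epsilon$-fixed-point of $\dual{\f}$ (the case $\epsilon = 0$ giving the correspondence between exact roots and fixed points).

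For part (c), note that $\dual{\f}$ is the difference of the identity map (which is continuous) and $\f$, and differences of continuous functions are continuous; hence $\f$ continuous implies $\dual{\f}$ continuous, and the converse follows from part (a) applied to $\dual{\f}$, since $\f = \dual{\dual{\f}}$. For part (d), assuming $\f$ is $L$-Lipschitz, I would bound, for any $\bx, \bx' \in \rd$,
\[
|\dual{\f}(\bx) - \dual{\f}(\bx')| = \bigl|(\bx - \bx') - (\f(\bx) - \f(\bx'))\bigr| \leq |\bx - \bx'| + |\f(\bx) - \f(\bx')| \leq (1 + L)\,|\bx - \bx'|,
\]
using the triangle inequality for the maximum norm and the Lipschitz bound on $\f$; this shows $\dual{\f}$ is Lipschitz-continuous with constant at most $L + 1$.
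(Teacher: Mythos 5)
Your proof is correct and matches the paper's treatment: the paper states that all four parts follow directly from the definition $\dual{\f}(\bx)=\bx-\f(\bx)$, and your direct verifications (the algebraic identity for (a), the norm identity $|\f(\bx)|=|\dual{\f}(\bx)-\bx|$ for (b), continuity of differences plus (a) for (c), and the triangle inequality for (d)) are exactly the intended argument. Nothing is missing.
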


Suppose a continuous function $\f$ is defined on a $d$-box $[\ba, \bb]\subseteq \rd$. If $\f$ maps this $d$-box to itself (that is, $\ba \leq \f(\bx)\leq \bb$ for all $\bx\in[\ba,\bb]$), then $\f$ has a fixed-point in $[\ba,\bb]$; this is a special case of the famous \emph{Brouwer fixed-point theorem}. There is a close relation between Brouwer's fixed-point theorem and Miranda's root theorem, as shown below.
\begin{proposition}
\label{prop:brouwer-to-miranda}
If a function $\f: [\ba, \bb]\subseteq \rd$ maps $[\ba, \bb]$ to itself,
then its dual $\dual{\f}$ is positive-switching in $[\ba, \bb]$ (Definition \ref{def:switching-cube}). 
In other words: if $\f$ satisfies the conditions to Brouwer's theorem, then $\dual{\f}$ satisfies the conditions to Miranda's theorem. 
\end{proposition}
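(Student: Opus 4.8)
The plan is to simply unwind the two definitions and verify the boundary conditions componentwise. Fix an index $i\in[d]$ and an arbitrary point $\bx\in[\ba,\bb]$. By the definition of the dual, the $i$-th component of $\dual{\f}$ at $\bx$ is $\dual{f}_i(\bx)=x_i-f_i(\bx)$. The hypothesis that $\f$ maps $[\ba,\bb]$ into itself says exactly that $\ba\le\f(\bx)\le\bb$ for every $\bx$ in the box, so in particular $a_i\le f_i(\bx)\le b_i$ there.

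First I would check the face $x_i=a_i$: on it, $\dual{f}_i(\bx)=a_i-f_i(\bx)$, and since $f_i(\bx)\ge a_i$ by the self-map hypothesis, this quantity is $\le 0$, which is precisely the first requirement in \Cref{def:switching-cube}. Symmetrically, on the face $x_i=b_i$ we have $\dual{f}_i(\bx)=b_i-f_i(\bx)\ge 0$, because $f_i(\bx)\le b_i$; this is the second requirement. Since $i\in[d]$ was arbitrary, $\dual{\f}$ is positive-switching on $[\ba,\bb]$. The concluding sentence of the statement is then immediate: if $\f$ is additionally continuous (as required by Brouwer's theorem), then $\dual{\f}$ is continuous by \Cref{obs:fixed}(c), and continuity together with positive-switching are exactly the hypotheses of Miranda's theorem.

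There is essentially no obstacle here; the only point requiring mild care is the direction of the inequalities — one uses the lower bound $f_i(\bx)\ge a_i$ on the lower face and the upper bound $f_i(\bx)\le b_i$ on the upper face — and the observation that the self-map condition is assumed on the whole box, although only its boundary values are actually invoked.
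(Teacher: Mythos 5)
Your proof is correct and follows exactly the same route as the paper's: on the face $x_i=a_i$ use $f_i(\bx)\geq a_i$ to get $\dual{f}_i(\bx)\leq 0$, and on the face $x_i=b_i$ use $f_i(\bx)\leq b_i$ to get $\dual{f}_i(\bx)\geq 0$. Nothing is missing; the remark connecting continuity via \Cref{obs:fixed}(c) to Miranda's hypotheses is a harmless addition beyond what the paper writes out.
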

\begin{proof}
Suppose $\f$ maps $[\ba,\bb]$ to itself. 
For any $i\in[d]$, let $\bx$ be any point for which $x_i=a_i$.
Then:
\begin{align*}
\dual{f}_i(\bx) &= x_i - f_i(\bx) = a_i  - f_i(\bx)
\\
&\leq a_i  - a_i && \text{since $\f(\bx)\geq \ba$}
\\
&= 0.
\end{align*}
Similarly, if $x_i=b_i$, then:
\begin{align*}
\dual{f}_i(\bx) &= x_i - f_i(\bx) = b_i  - f_i(\bx)
\\
&\geq b_i  - b_i && \text{since $\f(\bx)\leq \bb$}
\\
&= 0.
\end{align*}
Therefore, by \Cref{def:switching-cube}, the function $\dual{\f}$ is positive-switching in $[\ba,\bb]$.
\end{proof}

Now we can prove a lower bound on the number of evaluations needed to compute an $\epsilon$-root of a function.
\begin{proposition*}[= \Cref{prop:2d-insufficient}]
For every $\epsilon>0, L>0, d\geq 2$,
let $T(L,\epsilon,d)$ be the worst-case number of evaluations required by an algorithm that finds an $\epsilon$-root for every Lipschitz-continuous (with constant $L$) function that is positive-switching on some $d$-cube.
Then $T(L,\epsilon,d) \in \Omega((L/\epsilon)^{d-1})$.
\end{proposition*}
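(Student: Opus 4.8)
The plan is to reduce the problem of computing an $\epsilon$-fixed-point of a Brouwer function on a $(d-1)$-cube to the problem of computing an $\epsilon'$-root (for a comparable $\epsilon'$) of a positive-switching Lipschitz function on a $d$-cube, and then invoke the known $\Omega((L/\epsilon)^{d-1})$ lower bound for Brouwer fixed points in dimension $d-1$ (due to Hirsch--Papadimitriou--Vavasis for $d-1=2$ and Chen--Deng for $d-1 \ge 2$, as cited in \Cref{sec:related}). The reduction has two ingredients. First, by \Cref{prop:brouwer-to-miranda} and \Cref{obs:fixed}, if $\g$ maps a box to itself then its dual $\dual{\g}$ is positive-switching, is Lipschitz with constant at most $L+1$, and shares $\epsilon$-roots with $\g$'s $\epsilon$-fixed-points; moreover an evaluation of $\dual{\g}$ costs one evaluation of $\g$. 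So an algorithm finding $\epsilon$-roots of positive-switching functions on a $d$-box immediately yields an algorithm finding $\epsilon$-fixed-points of self-maps on a $d$-box with the same query complexity (up to the constant shift in $L$). Second — and this is the step that does the dimension reduction — I would take a hard instance $\g: [0,1]^{d-1} \to [0,1]^{d-1}$ for Brouwer fixed points and ``lift'' it to a self-map $\h: [0,1]^{d} \to [0,1]^{d}$ on one more dimension, in a way that preserves Lipschitz-continuity and forces the extra coordinate to a fixed value, so that finding an $\epsilon$-fixed-point of $\h$ requires finding an $\epsilon$-fixed-point of $\g$.

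Concretely, for the lifting I would set, for $\bx=(x_1,\dots,x_{d-1})$ and the extra coordinate $y \in [0,1]$,
\[
h_i(\bx,y) = g_i(\bx) \quad (1 \le i \le d-1), \qquad h_d(\bx,y) = 1/2 .
\]
This $\h$ maps $[0,1]^d$ into itself and is Lipschitz with the same constant as $\g$ (the $d$-th coordinate is constant, the first $d-1$ ignore $y$). An $\epsilon$-fixed-point $(\bx,y)$ of $\h$ satisfies $|g_i(\bx)-x_i|\le\epsilon$ for $i\le d-1$, i.e. $\bx$ is an $\epsilon$-fixed-point of $\g$. Hence any algorithm computing $\epsilon$-fixed-points of $d$-dimensional self-maps, when run on $\h$ (simulating each query to $\h$ by at most one query to $\g$), computes an $\epsilon$-fixed-point of $\g$; by the Brouwer lower bound in dimension $d-1$ this needs $\Omega((L/\epsilon)^{(d-1)-1}) = \Omega((L/\epsilon)^{d-2})$ queries — which is one power short. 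To get the full $\Omega((L/\epsilon)^{d-1})$ I instead want the base hard instance to already live in dimension $d-1$ with the matching exponent, i.e. I should quote the Brouwer lower bound as ``$\Omega((L/\epsilon)^{k})$ queries in dimension $k+1$'' and apply it with $k+1 = d-1$, wait — that gives $k=d-2$ again. The cleanest route is therefore to quote the lower bound in the form ``finding an $\epsilon$-fixed-point of a self-map on $[0,1]^{m}$ needs $\Omega((L/\epsilon)^{m-1})$ queries'' (Chen--Deng for $m\ge 3$, Hirsch--Papadimitriou--Vavasis for $m=2$), apply it with $m=d$ directly to $\dual{\g}$ via \Cref{prop:brouwer-to-miranda} and \Cref{obs:fixed}, and skip the lifting altogether: the exponent $m-1 = d-1$ is then exactly what we need, and no dimension reduction is required.

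So the streamlined proof is: (i) take a hard Brouwer instance $\g:[0,1]^d\to[0,1]^d$ that is $L$-Lipschitz and for which $\Omega((L/\epsilon)^{d-1})$ evaluations are needed to find an $\epsilon$-fixed-point; (ii) form $\dual{\g}$, which by \Cref{prop:brouwer-to-miranda} is positive-switching on $[0,1]^d$, by \Cref{obs:fixed}(d) is $(L+1)$-Lipschitz, and by \Cref{obs:fixed}(a),(b) has the property that its $\epsilon$-roots are exactly the $\epsilon$-fixed-points of $\g$; (iii) observe that each evaluation of $\dual{\g}$ costs one evaluation of $\g$, so an algorithm that finds $\epsilon$-roots of positive-switching $(L+1)$-Lipschitz functions using $T$ queries yields an algorithm finding $\epsilon$-fixed-points of $\g$ using $T$ queries, whence $T \in \Omega((L/\epsilon)^{d-1})$; (iv) finally absorb the additive shift from $L$ to $L+1$ into the asymptotic notation (equivalently, rescale the domain or apply the argument with $L-1$ in place of $L$, noting the lower bound is stated for all $L$), giving the claimed $T(L,\epsilon,d)\in\Omega((L/\epsilon)^{d-1})$.

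The main obstacle is purely bookkeeping: making sure the Lipschitz constants, the cube (a ``$d$-cube'' rather than an arbitrary box — which is fine since the cited Brouwer hardness instances are built on $[0,1]^d$), and the approximation parameter line up so that the borrowed Brouwer lower bound transfers cleanly, and in particular that the $+1$ shift in the Lipschitz constant does not degrade the exponent (it does not, since $(L/\epsilon)^{d-1}$ and $((L+1)/\epsilon)^{d-1}$ are of the same order for fixed $d$ when $L$ is treated as a parameter; one should state the reduction as: a $T$-query root algorithm for Lipschitz constant $L$ implies a $T$-query fixed-point algorithm for Lipschitz constant $L-1$, then apply the known bound). There is no genuine mathematical difficulty beyond correctly citing the Brouwer fixed-point lower bounds already summarized in \Cref{sec:related}.
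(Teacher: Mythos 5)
Your final, streamlined argument is correct and is essentially the paper's own proof: both reduce from the Hirsch--Papadimitriou--Vavasis / Chen--Deng lower bound for Brouwer fixed points of self-maps of a $d$-cube, using the dual function together with \Cref{prop:brouwer-to-miranda} and \Cref{obs:fixed} to turn an $\epsilon$-root algorithm for positive-switching Lipschitz functions into an $\epsilon$-fixed-point algorithm with the same query count. The abandoned dimension-lifting detour is unnecessary, as you yourself concluded, and the remaining bookkeeping about the $L$ versus $L+1$ Lipschitz constant is handled the same way (absorbed into the asymptotics) as in the paper.
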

\begin{proof}
The proof uses lower bounds proved by \citet{hirsch1989exponential,chen2005algorithms} for fixed-point computation.
They proved that any algorithm, that finds an $\epsilon$-fixed-point of every Lipschitz-continuous  function $\f$ that maps some $d$-cube to itself, requires $\Omega((\dual{L}/\epsilon)^{d-1})$ evaluations, where $\dual{L}$ is the Lipschitz constant of $\dual{\f}$.

We now show such an algorithm that requires at most  $T(\dual{L},\epsilon,d)$ evaluations.
Let $\f$ be any Lipschitz-continuous function that maps some $d$-cube $[\ba,\bb]$ to itself. 
By \Cref{prop:brouwer-to-miranda}, its dual $\dual{\f}$ is positive-switching. By \Cref{obs:fixed}(d), 
$\dual{\f}$ is Lipschitz-continuous; denote its Lipschitz constant by $\dual{L}$.
By assumption, there is an algorithm that finds an $\epsilon$-root of $\dual{\f}$ using at most   $T(\dual{L},\epsilon,d)$ evaluations. By \Cref{obs:fixed}(b), the computed $\epsilon$-root of $\dual{\f}$ is an $\epsilon$-fixed-point of $\f$.

Therefore, $T(\dual{L},\epsilon,d) \in \Omega((\dual{L}/\epsilon)^{d-1})$, as claimed.
\end{proof}

\subsection{Efficient computation: $\delta$-continuous functions}
\label{sec:discontinuous}
We now prove a slightly stronger version of \Cref{thm:2d-sufficient}, that does not require $\f$ to be continuous.
\begin{definition}
\label{def:lc}
For any $\epsilon>0,L>0$, a function $\f: \rd\to \rd$ is called \emph{$(\epsilon,L)$-Lipschitz}
if for all $i,j\in[d]$ and $\bx\in\rd$, 
$|\bx-\bx'|\leq \epsilon/L$ implies $|\f(\bx)-\f(\bx')|\leq \epsilon$. 
\end{definition}
Note that an $(\epsilon,L)$-Lipschitz function need not be continuous;
it only needs to satisfy the Lipschitz condition for intervals of a specific length $\epsilon/L$.
We prove \Cref{thm:2d-sufficient} for any 
$(\epsilon,L)$-Lipschitz function $\f$.

We discretize the value of $f_i$  for all $i\in[d]$
as follows:
\begin{itemize}
\item If $f_i(\bx)<-\epsilon$ then we set $f_i(\bx):=-1$;
\item If $f_i(\bx)>+\epsilon$ then we set $f_i(\bx):=+1$;
\item If $|f_i(\bx)|\leq \epsilon$ then we set $f_i(\bx):=0$.
\end{itemize}
Note that the discretization procedure preserves both the monotonicity properties of $\f$ and its positive-switching properties.
Obviously, to find an $\epsilon$-root of the original $\f$, it is sufficient to find a root of the discretized $\f$.

Recall that given two points $\ba,\bb\in \rd$ with $\ba \leq \bb$,  we denote by $[\ba,\bb]$ the $d$-dimensional box in which $\ba$ and $\bb$ are two opposite corners. 
We partition $[\ba,\bb]$ into a grid of $d$-boxes with side-length $\delta:= \epsilon/L$.
The $(\epsilon,L)$-Lipschitz condition on the original $\f$ implies that the discretized value of each $f_i$ cannot have opposite signs in a single grid-cell. That is, in any $d$-box of side-length $\delta$, the values of $f_i$ on the box corners are either in $\{0,1\}$ or in $\{0,-1\}$. We call this property \emph{$\delta$-continuity}.
\begin{definition}
\label{def:discrete-continuity}
For any $\delta>0$, a function $\f:\rd\to \sign^d$ is called \emph{$\delta$-continuous} if $|\bx-\bx'|\leq \delta$ implies $|\f(\bx)-\f(\bx')|\leq 1$.
\end{definition}
From now on, 
we use $\delta$ as the approximation parameter, and do not think of $\epsilon$ or $L$ anymore. 
We look for conditions that guarantee that a root of a 
$\delta$-continuous function can be found in time $O(poly(\log(1/\delta)))$.
We first state the simple one-dimensional case:
\begin{lemma}
\label{lem:1d}
Let $f: [a,b]\to \sign$ be a positive-switching $\delta$-continuous function for some $\delta>0$. 
Then a root of $f$ can be found using $O(\log(1/\delta))$ evaluations (assuming $a,b$ are fixed).
\end{lemma}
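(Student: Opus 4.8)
The plan is to run binary search (bisection) on the discretized function $f$, exactly as in the classical one-dimensional case, but to argue that $\delta$-continuity is a sufficient substitute for genuine continuity. First I would discretize the search interval: set $N := \lceil (b-a)/\delta \rceil$ and consider the grid points $x_0 = a, x_1 = a+\delta, \ldots, x_N = b$ (truncating the last step so that $x_N = b$); every consecutive pair is within distance $\delta$, so $\delta$-continuity applies across each pair. Since $f$ is positive-switching, $f(a) = f(x_0) \leq 0$ and $f(b) = f(x_N) \geq 0$. The algorithm maintains an interval $[x_\ell, x_r]$ of grid points with $f(x_\ell) \leq 0$ and $f(x_r) \geq 0$, repeatedly halving the index gap $r-\ell$ by querying the midpoint index $m := \lfloor (\ell+r)/2 \rfloor$: if $f(x_m) \leq 0$ set $\ell := m$, otherwise (i.e. $f(x_m) = +1$) set $r := m$. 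This invariant is trivially preserved, and after $O(\log N) = O(\log(1/\delta))$ iterations we reach $r = \ell+1$, i.e. two adjacent grid points.

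The key step is then to observe what $\delta$-continuity forces at the end. We have consecutive grid points $x_\ell, x_{\ell+1}$ with $|x_\ell - x_{\ell+1}| \leq \delta$, $f(x_\ell) \leq 0$ (so $f(x_\ell) \in \{-1,0\}$) and $f(x_{\ell+1}) \geq 0$ (so $f(x_{\ell+1}) \in \{0,1\}$). If $f(x_\ell) = -1$ and $f(x_{\ell+1}) = +1$, then $|f(x_\ell) - f(x_{\ell+1})| = 2 > 1$, contradicting $\delta$-continuity. Hence at least one of $f(x_\ell), f(x_{\ell+1})$ equals $0$, and that grid point is a root of $f$, which the algorithm returns. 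Counting queries: each iteration makes one evaluation, for a total of $O(\log(1/\delta))$; a handful of additional evaluations at the endpoints are absorbed into the constant (and indeed the endpoint values are known a priori from positive-switching, but querying them costs only $O(1)$).

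I do not expect any real obstacle here — this lemma is essentially the sanity check that the classical bisection method survives the passage to a $\delta$-continuous (rather than continuous) discretized function, and the only subtlety is the final-step argument above, where the ``sign cannot jump by $2$ over a short interval'' property does the work that the intermediate value theorem does in the continuous setting. One minor bookkeeping point worth stating cleanly is that $a$ and $b$ are treated as fixed constants, so $N = \Theta(1/\delta)$ and $\log N = \Theta(\log(1/\delta))$; this is already flagged in the lemma statement. The same structure will recur, with more elaborate invariants, in the two-dimensional theorems, so it is useful to isolate it here as a clean base case.
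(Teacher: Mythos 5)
Your proposal is correct and matches the paper's own argument: run bisection maintaining the invariant $f(x_\ell)\leq 0\leq f(x_r)$ until two adjacent grid points (distance at most $\delta$) remain, then use $\delta$-continuity to rule out the jump from $-1$ to $+1$, so one of the two endpoints must be a root. The only difference is that you spell out the grid discretization and the final sign argument in more detail than the paper's two-line proof, which is fine.
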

\begin{proof}
Applying $\log_2((b-a)/\delta)$ steps of the bisection method yields an interval $[x,x+\delta]$ such that $f(x)\leq 0 \leq f(x+\delta)$. By $\delta$-continuity, at least one of $x,x+\delta$ must be a root of $f$.
\end{proof}

\subsection{Single diagonal monotonicity condition}

We make the following technical assumption for simplicity.
\begin{assumption}
\label{asm:tech}
For all $i\in[d]$, $|b_i-a_i|$ is an integer power of $2$. Moreover, $\delta$ is an integer power of $2$ (possibly with a negative exponent) and $\delta\leq |b_i-a_i|$.
\end{assumption}
\Cref{asm:tech} means that we can partition the box $[\ba,\bb]$ into a grid of $d$-cubes of side-length $\delta$, and the bisection method will only ever have to evaluate $\f$ on grid points.

\begin{theorem}
\label{thm:root-2d}
Let $\f: [\ba,\bb]\to \sign^2$ be a two-dimensional positive-switching $\delta$-continuous function
for some $\delta>0$ satisfying \Cref{asm:tech}. 
If $f_1$ is monotonically increasing with $x_1$, then a root of $\f$ can be found using $O(\log^2(1/\delta))$ evaluations.
\end{theorem}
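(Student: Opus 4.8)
The plan is to reduce the two-dimensional root-finding problem to a one-dimensional one via a nested bisection, mirroring the short proof sketched for the continuous case but carefully adapted to the discretized, merely $\delta$-continuous setting. The outer loop will bisect along the $x_2$-axis; for each fixed value of $x_2$ on the grid, we will run an inner bisection along the $x_1$-axis that exploits the monotonicity of $f_1$ in $x_1$ to locate (an approximation to) the ``zero curve'' of $f_1$, and then we read off the sign of $f_2$ there to decide which way the outer bisection should proceed.

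\textbf{Inner subroutine.} Fix a grid value $x_2 \in [a_2,b_2]$. Since $f_1(a_1, x_2) \le 0 \le f_1(b_1, x_2)$ (positive-switching) and $f_{1,1,(x_1,x_2)}$ is weakly increasing in $x_1$, the set $\{x_1 : f_1(x_1,x_2) \le 0\}$ is a down-set. Running $\log_2((b_1-a_1)/\delta)$ steps of bisection on $x_1$ (evaluating $f_1$ only on grid points, using \Cref{asm:tech}) produces a grid point $x_1^*$ with $f_1(x_1^*, x_2) \le 0$ and $f_1(x_1^* + \delta, x_2) \ge 0$; by $\delta$-continuity at least one of $x_1^*$, $x_1^* + \delta$ has $f_1 = 0$. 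Define $g(x_2)$ to be the grid point $x_1^* $ returned, and define $h(x_2) := f_2(g(x_2), x_2)$ — but we must be a little careful and record the sign of $f_2$ at \emph{both} candidate points $x_1^*$ and $x_1^*+\delta$, since we will need a point where $f_1 = 0$ exactly. This inner subroutine costs $O(\log(1/\delta))$ evaluations.

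\textbf{Outer loop and termination.} We then bisect on $x_2$. The quantity driving the outer bisection is the sign of $f_2$ at the located zero of $f_1$: if at the current $x_2$ we can certify $f_2 \le 0$ there, we move up (recurse toward $b_2$), otherwise we move down. After $\log_2((b_2-a_2)/\delta)$ steps we obtain two adjacent grid values $x_2^-, x_2^- + \delta$ with the located zeros of $f_1$ showing $f_2 \le 0$ at $x_2^-$ and $f_2 \ge 0$ at $x_2^- + \delta$. The remaining task is to conclude that among the $O(1)$ grid points involved (the two $x_1$-candidates at each of the two $x_2$-levels — at most four points) one is an actual root of the discretized $\f$. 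This uses $\delta$-continuity of both $f_1$ and $f_2$: across a single grid cell neither component can flip between $-1$ and $+1$, so a $0$ of $f_1$ at one level together with sign information at the adjacent level pins both components to $0$ at some corner. Total cost: $O(\log(1/\delta))$ outer steps, each invoking the $O(\log(1/\delta))$-evaluation inner subroutine, giving $O(\log^2(1/\delta))$.

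\textbf{The main obstacle} I expect is the endgame bookkeeping: $\delta$-continuity is strictly weaker than continuity, so ``$g$'' need not be even approximately well-defined as a function (its value can jump by a lot between adjacent $x_2$-levels), and I cannot invoke a continuity argument for $h$ as in the smooth sketch. Instead I have to argue purely combinatorially that the final $O(1)$ cluster of grid points must contain a common zero of $f_1$ and $f_2$, handling the several sign cases ($f_2$ evaluated at $x_1^*$ vs.\ $x_1^*+\delta$, and which of these has $f_1 = 0$) explicitly. A secondary subtlety is making sure the inner bisection's invariant ($f_1 \le 0$ on the left endpoint, $\ge 0$ on the right) is maintained exactly on grid points throughout, which \Cref{asm:tech} is designed to guarantee; and checking that the monotonicity of $f_1$ in $x_1$ really is preserved by the discretization step (it is, since thresholding at $\pm\epsilon$ is itself monotone). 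I would write the proof so that the outer bisection's decision rule is stated as an explicit invariant on grid points, and then close with a short case analysis establishing the root.
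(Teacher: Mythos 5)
Your overall architecture matches the paper's: an inner bisection in $x_1$ for each fixed grid value of $x_2$ (yielding a point $g(x_2)$ with $f_1(g(x_2),x_2)=0$), an outer bisection in $x_2$ driven by the sign of $h(x_2)=f_2(g(x_2),x_2)$, for a total of $O(\log^2(1/\delta))$ evaluations. The gap is in your endgame. You correctly observe that $g$ can jump by a large amount between the adjacent levels $y_2$ and $y_2+\delta$ produced by the outer bisection, but you then assert that the at-most-four candidate points (the two $x_1$-candidates at each level) form an $O(1)$ cluster lying ``across a single grid cell'' and must contain a root. These two statements are incompatible: when $g(y_2)$ and $g(y_2+\delta)$ are far apart (which $\delta$-continuity does not prevent, e.g.\ when $f_1(\cdot,y_2)$ vanishes on a long interval), the four points do not lie in one cell, and no finite case analysis over them succeeds. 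Concretely, you can have $f_1=0,f_2=-1$ at $(y_1,y_2)$, $f_1=+1,f_2=-1$ at $(y_1+\delta,y_2)$, $f_1=0,f_2=+1$ at $(z_1,y_2+\delta)$, and $f_1=+1,f_2=+1$ at $(z_1+\delta,y_2+\delta)$ with $z_1$ far from $y_1$: none of your candidates is a root, yet all your invariants hold. (Only when $z_1\in\{y_1-\delta,y_1+\delta\}$ does your single-cell argument go through.)

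The missing idea --- and the place where the monotonicity hypothesis is really used --- is a third search. In the bad case $h(y_2)=-1$, $h(y_2+\delta)=+1$ with (say) $z_1>y_1$, monotonicity of $f_1$ in $x_1$ forces $f_1(x_1,y_2)\in\{0,+1\}$ and $f_1(x_1,y_2+\delta)\in\{0,-1\}$ for all $x_1\in[y_1,z_1]$; combined with $\delta$-continuity (which, under the max norm, also constrains diagonally adjacent grid points), this yields a monotone staircase path of grid points from $(y_1,y_2)$ to $(z_1,y_2+\delta)$ on which $f_1\equiv 0$. Along this path $f_2$ is $\delta$-continuous and switches from $-1$ to $+1$, so one more bisection (an extra $O(\log(1/\delta))$ evaluations, preserving the stated bound) finds a point where $f_2=0$ as well, i.e.\ a root of $\f$. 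This is exactly how the paper closes the argument; without this step (or an equivalent one), your proof is incomplete.
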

\begin{proof}
We use \Cref{lem:1d} (the bisection method) for finding a root of a one-dimensional $\delta$-continuous switching function.
In case there are several roots, the method may return one arbitrarily but consistently, so that we can consider this as a function that maps a one-dimensional function $f$ to a single root of $f$.

By assumption, for all $x_2\in[a_2,b_2]$, $f_{1,1,x_2}$ is positive-switching in $[a_1,b_1]$.
By \Cref{lem:1d} we can find a root of 
$f_{1,1,x_2}$ using $O(\log(1/\delta))$ evaluations of $f_1$.
Define a one-dimensional function $g: [a_2,b_2]\to [a_1,b_1]$ 
such that, for all $x_2\in[a_2,b_2]$, $g(x_2)$ is the root of $f_{1,1,x_2}$ in $[a_1,b_1]$, that is returned by the bisection method. 
So for all $x_2\in[a_2,b_2]$, 
we have $f_1(g(x_2),x_2)=0$.

Define the function $h:[a_2,b_2]\to \rr$ 
such that $h(x_2) := f_2(g(x_2),x_2)$, 
that is, the value of $f_2$ at the root of $f_{1,1,x_2}$.
Because $f_2$ is positive-switching, $h(a_2)\leq 0 \leq h(b_2)$, that is, $h$ is positive-switching too.
We apply the bisection method to $h$ 
and find a point $y_2\in[a_2,b_2 - \delta]$ for which $h(y_2) \leq  0 \leq h(y_2+\delta)$.
This requires $O(\log(1/\delta))$ evaluations of $h$, each of which requires $O(\log(1/\delta))$ evaluations of $f_1$ and one evaluation of $f_2$; all in all, 
$O(\log^2(1/\delta))$ evaluations are required.

Denote $y_1 := g(y_2)$ and $z_1 := g(y_2+\delta)$. The definition of $g$ implies that $f_1(y_1,y_2)=0$ and $f_1(z_1,y_2+\delta)=0$;
the definition of $h$ implies that $f_2(y_1,y_2)=h(y_2)$ and $f_2(z_1,y_2+\delta)=h(y_2+\delta)$.
Therefore, if $h(y_2)=0$ then $(y_1,y_2)$ is a root of $\f$,
and if $h(y_2+\delta)=0$ then $(z_1,y_2+\delta)$ is a root of $\f$.
However, $h$ is not necessarily $\delta$-continuous, so it is possible that $h(y_2)=-1$ and $h(y_2+\delta)=+1$.
This case is illustrated in the following diagram:
\begin{align}
\label{pic:f2}
\begin{tikzpicture}[scale=2.5]
\node (y1y2) at (0,0) {};
\smalldot{y1y2}{north}{\shortstack{$(y_1,y_2)$\\$f_1 = 0, f_2 = -1$}};
\node (z1y22) at (4,1) {};
\smalldot{z1y22}{north}{\shortstack{$(z_1,y_2+\delta)$\\$f_1=0, f_2=+1$}};
\end{tikzpicture}
\end{align}
Note that $z_1\neq y_1$ due to the $\delta$-continuity of $f_2$. We handle here the case that $z_1>y_1$; the case that $z_1<y_1$ can be handled analogously.

We now use the monotonicity of $f_1$. For any $x_1 \in [y_1,z_1]$, the value of $f_1(x_1,y_2)$ must be in $\{0,1\}$,
and the value of $f_1(x_1,y_2+\delta)$ must be in $\{0,-1\}$. Since $f_1$ is $\delta$-continuous, there must not be adjacent values of $x_1$ for which $f_1$ attains opposite signs; so there must be a path from $(y_1,y_2)$ to $(z_1,y_2+\delta)$ on which $f_1=0$, as illustrated below:
\begin{align}
\label{pic:f1}
\begin{tikzpicture}[scale=2.5]
\node (y1y2) at (0,0) {};
\smalldot{y1y2}{north}{\shortstack{$(y_1,y_2)$\\$f_1 = 0, f_2 < 0$}};
\node (y11y2) at (1,0) {};
\smalldot{y11y2}{north}{\shortstack{$f_1 = 0$}};
\node (y12y2) at (2,0) {};
\smalldot{y12y2}{north}
{$f_1 = 0$};%
\node (y12y22) at (2,1) {};
\smalldot{y12y22}{south}{\shortstack{$f_1 = 0$}};
\node (y13y22) at (3,1) {};
\smalldot{y13y22}{south}{\shortstack{$f_1 = 0$}};
\node (z1y22) at (4,1) {};
\smalldot{z1y22}{south}{\shortstack{$(z_1,y_2+\delta)$\\$f_1=0, f_2>0$}};
\end{tikzpicture}
\end{align}

Along this path, the function $f_2$ is $\delta$-continuous and positive-switching, so it has a root and it can be found in $O(\log(1/\delta))$ evaluations. As this root of $f_2$ is found along a path on which $f_1=0$, it is also the desired root of $\f$.
\end{proof}
This completes the proof of \Cref{thm:root-2d}, which implies \Cref{thm:2d-sufficient}.

\subsection{Single ex-diagonal monotonicity  condition}
For the next theorem, we assume that the positive-switching conditions are strict, that is, 
$x_i=a_i \implies f_i(\bx)<0$ and 
$x_i=b_i \implies f_i(\bx)>0$.
This does not lose much generality, since if it does not hold, we can increase the grid by $\delta$ in each direction, and define the function values such that the strict conditions hold. 
For example, we can add the hyperplane $x_i = a_i-\delta$ and set $f_i(x_i=a_i-\delta)$ to $-1$, and 
add the hyperplane $x_i = b_i+\delta$ and set $f_i(x_i=b_i+\delta)$ to $+1$.
As the original $\f$ is positive-switching, 
the addition of these hyperplanes does not harm the $\delta$-continuity property, as well as any monotonicity property.

\begin{theorem}
\label{thm:root-2d-exdiagonal}
Let $\f: [\ba,\bb]\to \sign^2$ be a two-dimensional strictly-positive-switching $\delta$-continuous function
for some $\delta>0$ satisfying \Cref{asm:tech}. 
If $f_1$ is monotonically-decreasing with $x_2$, then a root of $\f$ can be found using $O(\log^2(1/\delta))$ evaluations.
\end{theorem}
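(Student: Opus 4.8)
The plan is to mimic the structure of the proof of Theorem~\ref{thm:root-2d}, but to exploit the ex-diagonal monotonicity of $f_1$ in $x_2$ in place of the diagonal monotonicity in $x_1$. Since $f_1$ is monotonically-decreasing with $x_2$ and is strictly-positive-switching in $x_1$ (so $f_1<0$ on the face $x_1=a_1$ and $f_1>0$ on the face $x_1=b_1$), I first want to argue that the zero-set of $f_1$ is a ``monotone staircase'' that can be described by a one-dimensional function. Concretely, for each grid value of $x_1$, let $R(x_1)$ be the set of $x_2$ with $f_1(x_1,x_2)=0$ (discretized). By $\delta$-continuity of $f_1$ in $x_2$ together with monotone-decrease in $x_2$, for fixed $x_1$ the sign pattern of $f_1$ along the $x_2$-axis is: first a block of $+1$'s, then a block of $0$'s, then a block of $-1$'s (any of which may be empty); so the $0$-block is an interval of consecutive grid points. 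Moreover, as $x_1$ increases, $\delta$-continuity of $f_1$ in $x_1$ prevents a jump from $+1$ to $-1$, and combined with the decrease in $x_2$ this forces the $0$-block to move in a controlled (weakly monotone) way. I would define $g(x_1)$ to be, say, the smallest grid $x_2$ with $f_1(x_1,x_2)\le 0$; then $g$ is weakly monotone in $x_1$, and the strict switching in $x_1$ guarantees $g$ is well-defined on all of $[a_1,b_1]$ and that $f_1\big(a_1,g(a_1)\big)$ and $f_1\big(b_1,g(b_1)\big)$ are consistent with a root lying ``between.''

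Next I would run bisection along $x_1$. For a candidate $x_1$, computing $g(x_1)$ takes $O(\log(1/\delta))$ evaluations of $f_1$ by Lemma~\ref{lem:1d} (applied to the one-dimensional function $x_2\mapsto -f_{1,1,x_1}(x_2)$, which is positive-switching because $f_1$ decreases in $x_2$ — here I need to check the boundary signs, which follow from $\delta$-continuity and the existence of the $0$-block, or else handle the degenerate cases where the $0$-block is empty and $g$ jumps). Then set $h(x_1):=f_2\big(x_1,g(x_1)\big)$, i.e. the value of $f_2$ on the chosen point of the zero-set of $f_1$. Using the strict positive-switching of $f_2$ at $x_1=a_1$ and $x_1=b_1$ — wait, $f_2$ switches in $x_2$, not $x_1$, so this is exactly the delicate point. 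Instead I would track $h$ at the two endpoints of the zero-staircase: at the ``bottom'' end the staircase meets the face $x_2=a_2$ where $f_2\le 0$, and at the ``top'' end it meets $x_2=b_2$ where $f_2\ge 0$ (this is where I need the staircase to actually connect these two faces, which is the content of Miranda applied to $f_1$). So $h$ restricted to the staircase is positive-switching, and bisection on the staircase parameter finds adjacent staircase points with $h\le 0$ and $h\ge 0$.

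Finally, as in Theorem~\ref{thm:root-2d}, the two adjacent staircase points need not be $\delta$-close in a way that makes $h$ itself $\delta$-continuous, so I may end up with two grid points $(y_1,y_2)$ and $(y_1',y_2')$ both on $\{f_1=0\}$, $\delta$-close to each other, with $f_2=-1$ at one and $f_2=+1$ at the other. I would then invoke $\delta$-continuity of $f_2$ to find, along the short connecting segment of the zero-staircase (which has $O(1)$ or $O(\log(1/\delta))$ grid points and on which $f_1=0$ throughout), a point where $f_2$ vanishes; that point is the desired root of $\f$. Summing up: $O(\log(1/\delta))$ bisection steps, each costing $O(\log(1/\delta))$ evaluations to compute $g$ and hence $h$, for $O(\log^2(1/\delta))$ total.

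\textbf{Main obstacle.} The crux is establishing that the discretized zero-set of $f_1$ really is a single connected monotone staircase running from the face $x_2=a_2$ to the face $x_2=b_2$, so that $g$ (or the staircase parameter) is a genuine one-dimensional search space on which $f_2$ is positive-switching. This requires carefully combining three facts — $\delta$-continuity of $f_1$ in each variable, monotone-decrease of $f_1$ in $x_2$, and strict switching of $f_1$ in $x_1$ — and correctly handling the boundary/degenerate cases where a $0$-block is empty (forcing $g$ to jump by more than $\delta$, in which case one must argue the jump still keeps the staircase connected through points where $f_1=0$, or reduce to the $z_1\neq y_1$ gap-filling argument already used in Theorem~\ref{thm:root-2d}). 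Once the staircase structure is pinned down, the rest is a direct adaptation of the previous proof.
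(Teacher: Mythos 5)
Your outline follows the same general scaffolding as the paper's proof (a column function $g$, the composite $h(x_1)=f_2(x_1,g(x_1))$, bisection over $x_1$, and a final patch along a set where $f_1=0$), but the step you yourself flag as the ``main obstacle'' is exactly where the real work lies, and the structural claims you offer in its place are false. Since the theorem assumes no monotonicity of $f_1$ in $x_1$, the zero set of $f_1$ is \emph{not} a weakly monotone staircase, and your $g$ (smallest grid $x_2$ with $f_1(x_1,x_2)\le 0$) is not weakly monotone in $x_1$; indeed it is not even defined on columns where $f_1\equiv +1$ (in particular at $x_1=b_1$, by strict switching, so strict switching works against your well-definedness claim rather than for it). Likewise, the zero set need not form a single connected curve joining the faces $x_2=a_2$ and $x_2=b_2$ as a graph over $x_1$: columns that are entirely $-1$ or entirely $+1$ can occur for interior values of $x_1$, interrupting any such parametrization, and ``bisection on the staircase parameter'' is not an operation you can implement without first pinning down this structure. (A smaller inaccuracy: the connecting zero-path between the two adjacent columns found by bisection can contain $\Theta(1/\delta)$ grid points, not $O(\log(1/\delta))$; one still finds the root there with $O(\log(1/\delta))$ evaluations by bisection on $f_2$, but not for the reason you state.)

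The paper resolves precisely these issues without any global connectivity claim. It extends $g$ to the degenerate columns by setting $g(x_1)=a_2$ when the column is all-negative and $g(x_1)=b_2$ when it is all-positive; strict positive-switching of $f_1$ forces $g(a_1)=a_2$ and $g(b_1)=b_2$, and strict positive-switching of $f_2$ then gives $h(a_1)=f_2(a_1,a_2)<0$ and $h(b_1)=f_2(b_1,b_2)>0$, so bisection over $x_1$ applies even though $h$ is neither continuous nor otherwise structured. The finish is a purely local four-case analysis at the adjacent pair $y_1,y_1+\delta$: if both $g$-values are genuine zeros of $f_1$, the monotonicity of $f_1$ in $x_2$ plus $\delta$-continuity yields a zero-path of $f_1$ between the two points along which $f_2$ switches sign (the rotated version of the argument in Theorem~\ref{thm:root-2d}); if exactly one column is degenerate, monotonicity in $x_2$ and $\delta$-continuity force an entire vertical segment in the neighboring column on which $f_1=0$ and $f_2$ switches, and one bisects there; both columns degenerate is impossible by $\delta$-continuity. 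Until you supply an argument of this kind for the degenerate columns and drop the unfounded staircase/monotonicity claims, the proposal does not constitute a proof.
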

\begin{proof}
For any $x_1\in[a_1,b_1]$, we consider the function $f_{1,2,x_1}$,
which is $f_1$ considered as a function of the single variable $x_2$, when $x_1$ is fixed.
The monotonicity condition implies that there are only three possibilities regarding the value of $f_{1,2,x_1}$ on the endpoints $a_2,b_2$:
\begin{itemize}
\item $f_{1,2,x_1}(a_2) \geq 0$ and $f_{1,2,x_1}(b_2) \leq 0$;
\item $f_{1,2,x_1}(a_2) < 0$ (which implies $f_{1,2,x_1}(b_2) < 0$);
\item $f_{1,2,x_1}(b_2) > 0$ (which implies $f_{1,2,x_1}(a_2) >0$).
\end{itemize}
In the former case, $f_{1,2,x_1}$ is negative-switching in $[a_2,b_2]$. Since it is also $\delta$-continuous, we can use the bisection method to find a root of it.
Define a one-dimensional function $g: [a_1,b_1]\to [a_2,b_2]$  as follows:
\begin{align*}
g(x_1) := 
\begin{cases}
\text{The root of $f_{1,2,x_1}$ returned by the bisection method} & \text{ if $f_{1,2,x_1}(b_2) \leq 0 \leq f_{1,2,x_1}(a_2)$}
\\
a_2 & \text{ if $f_{1,2,x_1}(a_2)<0$}
\\
b_2 & \text{ if $f_{1,2,x_1}(b_2)>0$}
\end{cases}
\end{align*}	
Because $f_1$ is strictly-positive-switching, $f_{1,2,a_1}(\cdot)< 0$, so $g(a_1) = a_2$.
Similarly, 
$f_{1,2,b_1}(\cdot)> 0$, so $g(b_1) = b_2$.

Now define a function $h:[a_1,b_1]\to \rr$ 
by $h(x_1) := f_2(x_1, g(x_1))$. 
Because $f_2$ is strictly-positive-switching, 
$h(a_1) = f_2(a_1,g(a_1)) = f_2(a_1,a_2)< 0$,
and $h(b_1) = f_2(b_1,g(b_1))=f_2(b_1,b_2)>0$. 
Therefore, $h$ is positive-switching too.
We apply the bisection method to $h$ 
and find a point $y_1\in[a_1,b_1 - \delta]$ for which $h(y_1) \leq  0 \leq h(y_1+\delta)$.
This requires 
$O(\log^2(1/\delta))$ evaluations.

Denote $y_2 := g(y_1)$ and $z_2 := g(y_1+\delta)$. 
The definition of $h$ implies that $f_2(y_1,y_2)=h(y_1)\leq 0$ and $f_2(y_1+\delta,z_2)=h(y_1+\delta)\geq 0$.
We now consider several cases regarding the values of $f_1$ at these points.

\underline{Case 1:}
Both $(y_1,y_2)$ and $(y_1+\delta,z_2)$ are roots of $f_1$, so
$f_1(y_1,y_2)=0$ and $f_1(y_1+\delta,z_2)=0$.
Then we are in the same situation as in the proof of \Cref{thm:root-2d}, except that figures \eqref{pic:f2} and \eqref{pic:f1} are rotated by $90^\circ$. By a similar argument, a root of $\f$ can be found on a path between $(y_1,y_2)$ and $(y_1+\delta, z_2)$.

\underline{Case 2:} 
$(y_1+\delta,z_2)$ is a root of $f_1$,
but $(y_1,y_2)$ is not a root of $f_1$,
so $y_2 = g(y_1)$ is not a root of $f_{1,2,y_1}$.
By definition of $g$, this is possible only in two cases:
\begin{itemize}
\item $f_1(y_1,a_2)<0$ and $y_2=a_2$ (which implies $f_2(y_1,y_2)<0$ by strict-positive-switching of $f_2$);
\item $f_1(y_1,b_2)>0$ and $y_2=b_2$ (which implies 
$f_2(y_1,y_2)>0$ by strict-positive-switching of $f_2$).
\end{itemize}
But the latter option contradicts the inequality $f_2(y_1,y_2)\leq 0$, so the former option must hold.
Moreover, the monotonicity of $f_1$ 
implies $f_1(y_1,x_2)<0$ for all $x_2\in[a_2,b_2]$
and $f_1(y_1+\delta,x_2)\geq 0$ for all $x_2\in[a_2,z_2]$;
combined with the $\delta$-continuity of $f_1$, this implies 
$f_1(y_1+\delta,x_2)= 0$ for all $x_2\in[a_2,z_2]$.
The situation is illustrated in the following diagram:

\begin{align*}
\begin{tikzpicture}[scale=2.5]
\node (y1y2) at (0,0) {};
\smalldot{y1y2}{east}{\shortstack{$(y_1,y_2=a_2)$\\$f_1 < 0, f_2 < 0$}};
\node (y11y2) at (0,1) {};
\smalldot{y11y2}{east}{\shortstack{$f_1 < 0$}};
\node (y12y2) at (0,2) {};
\smalldot{y12y2}{east} {$f_1 < 0$};
\node (y12y2) at (0,3) {};
\smalldot{y12y2}{east} {$f_1 < 0$};
\node (y12y22) at (1,0) {};
\smalldot{y12y22}{west}{\shortstack{$f_1 = 0, f_2<0$}};
\node (y12y22) at (1,1) {};
\smalldot{y12y22}{west}{\shortstack{$f_1 = 0$}};
\node (y12y22) at (1,2) {};
\smalldot{y12y22}{west}{\shortstack{$f_1 = 0$}};
\node (z1y22) at (1,3) {};
\smalldot{z1y22}{west}{\shortstack{$(y_1+\delta,z_2)$\\$f_1=0, f_2 \geq 0$}};
\end{tikzpicture}
\end{align*}
On the segment connecting $(y_1+\delta,a_2)$ and $(y_1+\delta,z_2)$, the function $f_2$ is positive-switching, so it has a root that can be computed using $O(\log(1/\delta))$ evaluations. As $f_1=0$ along this segment, this is the desired root of $\f$.

\underline{Case 3:}
$(y_1,y_2)$ is a root of $f_1$,
but $(y_1+\delta,z_2)$ is not a root of $f_1$.
By arguments similar to Case 2, this implies
$f_1(y_1+\delta,b_2)>0$ and $z_2=b_2$, which implies 
$f_2(y_1+\delta,z_2)>0$ since $f_2$ is positive-switching.
The situation is illustrated in the following diagram:

\begin{align*}
\begin{tikzpicture}[scale=2.5]
\node (y1y2) at (0,0) {};
\smalldot{y1y2}{east}{\shortstack{$(y_1,y_2)$\\$f_1 = 0, f_2 \leq 0$}};
\node (y11y2) at (0,1) {};
\smalldot{y11y2}{east}{\shortstack{$f_1 = 0$}};
\node (y12y2) at (0,2) {};
\smalldot{y12y2}{east} {$f_1 = 0$};
\node (y12y2) at (0,3) {};
\smalldot{y12y2}{east} {$f_1 = 0, f_2>0$};
\node (y12y22) at (1,0) {};
\smalldot{y12y22}{west}{\shortstack{$f_1 > 0$}};   
\node (y12y22) at (1,1) {};
\smalldot{y12y22}{west}{\shortstack{$f_1 > 0$}};
\node (y12y22) at (1,2) {};
\smalldot{y12y22}{west}{\shortstack{$f_1 > 0$}};
\node (z1y22) at (1,3) {};
\smalldot{z1y22}{west}{\shortstack{$(y_1+\delta,b_2)$\\$f_1>0, f_2 > 0$}};
\end{tikzpicture}
\end{align*}
On the segment connecting $(y_1,y_2)$ and $(y_1,b_2)$, the function $f_2$ is positive-switching, so it has a root that can be computed using $O(\log(1/\delta))$ evaluations. As $f_1=0$ along this segment, this is the desired root of $\f$.

\underline{Case 4:}
Both $(y_1,y_2)$ and $(y_1+\delta,z_2)$ are not roots of $f_1$.
This case is impossible, since by the arguments of Cases 2 and 3, this would imply $f_1(y_1,x_2)<0$ and $f_1(y_1+\delta,x_2)>0$ for all $x_2\in[a_2,b_2]$, but this would contradict $\delta$-continuity.
\end{proof}

This completes the proof of \Cref{thm:root-2d-exdiagonal}, which implies \Cref{thm:2d-sufficient-exdiagonal}.

\subsection{Alternative switching condition}
Consider the following variant of \Cref{def:switching-cube}:
\begin{definition}
\label{def:sum-switching}
A $d$-dimensional function $\f: [\ba,\bb]\to \rd$ is called \emph{sum-switching} if for every $i\in[d]$ and every $\bx\in[\ba,\bb]$, the following hold:
\begin{align*}
x_i=a_i &\implies f_i(\bx)\leq 0;
\\
x_i=b_i &\implies \sum_{j=1}^i f_j(\bx)\geq 0.
\end{align*}
\end{definition}
We show that \Cref{thm:root-2d} holds whenever $\f$ is 
sum-switching.
\begin{lemma}\label{lem:sum-switching}
Let $\f: [\ba,\bb]\to \sign^2$ be a 
$\delta$-continuous discretization of a two-dimensional sum-switching function $\f':[\ba,\bb]\to \rb$. Then, $\f$ is also sum-switching.
\end{lemma}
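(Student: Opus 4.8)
The plan is to verify the four defining inequalities of \Cref{def:sum-switching} for $\f$ one by one, reducing each to the corresponding property of $\f'$ through the sign behaviour of the discretization rule. Recall that the discretization sends $f'_i(\bx)$ to $-1$ when $f'_i(\bx)<-\epsilon$, to $+1$ when $f'_i(\bx)>+\epsilon$, and to $0$ when $|f'_i(\bx)|\le\epsilon$; in particular, $f'_i(\bx)\le 0$ forces $f_i(\bx)\in\{-1,0\}$, and $f'_i(\bx)\ge 0$ forces $f_i(\bx)\in\{0,+1\}$. The $\delta$-continuity of $\f$ will play no part in this particular argument --- it is only used later, by the root-finding algorithm.

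First I would dispatch the three conditions with no real content. If $x_1=a_1$, then $\f'$ being sum-switching gives $f'_1(\bx)\le 0$, hence $f_1(\bx)\le 0$; similarly $x_2=a_2$ gives $f'_2(\bx)\le 0$, hence $f_2(\bx)\le 0$. Since $\sum_{j=1}^{1}f'_j=f'_1$, the condition at $x_1=b_1$ is simply $f'_1(\bx)\ge 0$, which gives $f_1(\bx)\ge 0$, i.e.\ $\sum_{j=1}^{1}f_j(\bx)\ge 0$.

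The only condition with content is the one at $x_2=b_2$: $f_1(\bx)+f_2(\bx)\ge 0$. I would prove it by contradiction. If $f_1(\bx)+f_2(\bx)<0$, then, since both values lie in $\sign$, the pair $(f_1(\bx),f_2(\bx))$ must be one of $(-1,-1)$, $(-1,0)$, $(0,-1)$ --- in particular, neither coordinate is $+1$. In each of these three cases the discretization rule forces $f'_1(\bx)+f'_2(\bx)<0$: for instance, for $(0,-1)$ we get $f'_1(\bx)\le\epsilon$ and $f'_2(\bx)<-\epsilon$, so $f'_1(\bx)+f'_2(\bx)<0$, and the other two cases are analogous. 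This contradicts the sum-switching property of $\f'$ at $x_2=b_2$, namely $f'_1(\bx)+f'_2(\bx)\ge 0$.

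I do not anticipate any genuine obstacle: the whole argument is a short finite case analysis. The only delicate point is that the ``$0$'' bucket of the discretization is the \emph{closed} interval $|f'_i|\le\epsilon$, which is exactly what turns the three bad patterns into \emph{strict} inequalities and hence yields the contradiction. It is also worth remarking why the statement is restricted to $d=2$: for $d\ge 3$ a discretized partial sum can be negative while the original partial sum remains nonnegative --- for instance with the pattern $(+1,-1,-1)$, since the $+1$ coordinate may correspond to an arbitrarily large positive value of $f'$ --- so this reduction does not extend.
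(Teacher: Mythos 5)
Your proof is correct and follows essentially the same route as the paper: both reduce the lemma to the $x_2=b_2$ condition and derive a contradiction from the $\epsilon$-threshold structure of the discretization, the only cosmetic difference being that you case-split on the three bad discretized patterns $(-1,-1),(-1,0),(0,-1)$ while the paper argues from which of $f'_1,f'_2$ must lie below $-\epsilon$. Your closing remark about why the argument does not extend beyond $d=2$ (e.g.\ the pattern $(+1,-1,-1)$) is a nice, correct observation, though not needed for the lemma.
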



\begin{proof}
We only need to show that the condition holds for $i=2$ and $x_2 = b_2$, as the other cases hold trivially. Let $\epsilon$ denote the approximation factor that was used to construct $\f$ from $\f'$. Since $\f'$ is sum-switching, we have $f_1'(x_1,b_2) + f_2'(x_1,b_2)\geq 0$ for all $x_1$. Assume towards a contradiction that $f_1(x_1,b_2) + f_2(x_1,b_2) < 0$ for some $x_1$. Then, it must be that $f_1'(x_1,b_2) < - \epsilon$ or $f_2'(x_1,b_2) < - \epsilon$. But if $f_1'(x_1,b_2) < - \epsilon$, then by the sum-switching condition we must have $f_2'(x_1,b_2) \geq -f_1'(x_1,b_2) > \epsilon$. As a result, $f_2(x_1,b_2) = 1$, and since $f_1(x_1,b_2) \geq -1$, we have $f_1(x_1,b_2) + f_2(x_1,b_2) \geq 0$, a contradiction. The case where $f_2'(x_1,b_2) < - \epsilon$ is handled analogously.
\end{proof}

\begin{theorem}
\label{thm:root-2d-sum}
Let $\f: [\ba,\bb]\to \sign^2$ be a 
$\delta$-continuous discretization of a two-dimensional sum-switching function,
for some $\delta>0$ satisfying \Cref{asm:tech}.
If $f_{1}$ is monotonically-increasing with $x_1$, then 
a root of $\f$ can be found using $O(\log^2(1/\delta))$ evaluations.
\end{theorem}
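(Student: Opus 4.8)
The plan is to mimic the proof of \Cref{thm:root-2d} almost verbatim, replacing every use of the positive-switching property of $f_2$ by the sum-switching property. First I would invoke \Cref{lem:sum-switching} to reduce to the case where the discretized $\f: [\ba,\bb]\to \sign^2$ is itself sum-switching in the sense of \Cref{def:sum-switching}; the discretization procedure preserves both $\delta$-continuity and the monotonicity of $f_1$ in $x_1$, as was already observed when that procedure was introduced. The crucial observation is that the $i=1$ case of the sum-switching condition is literally the positive-switching condition for $f_1$ along $x_1$: $x_1=a_1 \implies f_1(\bx)\leq 0$ and $x_1=b_1 \implies f_1(\bx)\geq 0$. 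Hence, exactly as in \Cref{thm:root-2d}, for every fixed $x_2\in[a_2,b_2]$ the one-dimensional function $f_{1,1,x_2}$ is positive-switching and $\delta$-continuous, so \Cref{lem:1d} lets us define a one-dimensional function $g:[a_2,b_2]\to[a_1,b_1]$ by letting $g(x_2)$ be the root of $f_{1,1,x_2}$ returned by the bisection method; in particular $f_1(g(x_2),x_2)=0$ for all $x_2$, using $O(\log(1/\delta))$ evaluations of $f_1$ per value of $x_2$.

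Next I would define $h:[a_2,b_2]\to\rr$ by $h(x_2):=f_2(g(x_2),x_2)$ and check that $h$ is positive-switching. At $x_2=a_2$, the sum-switching condition for $i=2$ gives $f_2(\bx)\leq 0$, so $h(a_2)=f_2(g(a_2),a_2)\leq 0$. At $x_2=b_2$, the sum-switching condition for $i=2$ gives $f_1(g(b_2),b_2)+f_2(g(b_2),b_2)\geq 0$; but $f_1(g(b_2),b_2)=0$ by the defining property of $g$, hence $h(b_2)=f_2(g(b_2),b_2)\geq 0$. This is the one place where the argument genuinely differs from \Cref{thm:root-2d}, and it is precisely here that the identity $f_1(g(\cdot),\cdot)=0$ is used to collapse the sum $\sum_{j=1}^{2}f_j$ into the single term $f_2$.

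From this point on the proof is identical to that of \Cref{thm:root-2d}: apply the bisection method to the positive-switching function $h$, using $O(\log(1/\delta))$ evaluations of $h$, each costing $O(\log(1/\delta))$ evaluations of $f_1$ and one of $f_2$, for a total of $O(\log^2(1/\delta))$ evaluations, and obtain $y_2\in[a_2,b_2-\delta]$ with $h(y_2)\leq 0\leq h(y_2+\delta)$; set $y_1:=g(y_2)$ and $z_1:=g(y_2+\delta)$. If $h(y_2)=0$ then $(y_1,y_2)$ is a root of $\f$, and if $h(y_2+\delta)=0$ then $(z_1,y_2+\delta)$ is a root of $\f$. Otherwise $f_2$ takes opposite signs at the two points $(y_1,y_2)$ and $(z_1,y_2+\delta)$, both of which lie on the zero set of $f_1$; as in \Cref{thm:root-2d}, $z_1\neq y_1$ by $\delta$-continuity of $f_2$, and the monotonicity of $f_1$ in $x_1$ together with $\delta$-continuity produces a path from $(y_1,y_2)$ to $(z_1,y_2+\delta)$ along which $f_1\equiv 0$. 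Restricting $f_2$ to this path yields a positive-switching $\delta$-continuous one-dimensional function whose root, found by one more application of \Cref{lem:1d} in $O(\log(1/\delta))$ evaluations, is the desired root of $\f$.

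I do not expect a serious obstacle here: the only subtlety is phrasing the hypothesis as a $\delta$-continuous discretization of a sum-switching function so that \Cref{lem:sum-switching} applies, and noting that the defining identity of $g$ is exactly what converts the sum-switching boundary condition at $x_2=b_2$ into an ordinary positive-switching condition for $h$. Everything else carries over mechanically, and the resulting bound is the same $O(\log^2(1/\delta))$ as in \Cref{thm:root-2d}.
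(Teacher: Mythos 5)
Your proposal is correct and follows essentially the same route as the paper's own proof: invoke \Cref{lem:sum-switching}, reuse the construction of $g$ and $h$ from \Cref{thm:root-2d}, and use the identity $f_1(g(b_2),b_2)=0$ to turn the sum-switching condition at $x_2=b_2$ into $h(b_2)\geq 0$, after which the path argument carries over unchanged. Nothing essential differs from the paper's argument.
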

\begin{proof}
The proof is almost identical to that of \Cref{thm:root-2d}. By \Cref{lem:sum-switching}, we know that $\f$ is also sum-switching.
For $f_1$, the sum-switching condition is identical to the positive-switching condition, so we can define  functions $g$ and $h$ as in the original proof.
For $f_2$, sum-switching implies that $f_2(x_1,a_2)\leq 0$ for all $x_1$, which implies $h(a_2)\leq 0$.
Sum-switching also implies that, for all $x_1$, 
$f_1(x_1,b_2) + f_2(x_1,b_2)\geq 0$.
Since $f_1(g(b_2),b_2) = 0$ by definition of $g$,  $f_2(g(b_2),b_2)\geq 0$ must hold, so $h(b_2)\geq 0$.
Hence, $h$ is positive-switching, and the rest of the proof remains valid.
\end{proof}

\section{Finding approximate roots: $d\geq 3$ dimensions}
\label{sec:roots-dd}

\subsection{$d^2-2$ monotonicity conditions are insufficient}
In contrast to the positive result for $d=2$, we prove a negative result for $d\geq 3$.

\begin{proposition*}[= \Cref{prop:dd-insufficient}]
For every $\epsilon>0, L>0, d\geq 3$,
let $T(L,\epsilon,d)$ be the worst-case number of evaluations required by an algorithm that finds an $\epsilon$-root for every $L$-Lipschitz-continuous function that is positive-switching on some $d$-cube, and satisfies $d^2-2$ monotonicity conditions
(including the $d$ diagonal conditions and  $d^2-d-2$ ex-diagonal conditions). Then $T(L,\epsilon,d) \in \Omega(L/\epsilon)$.
\end{proposition*}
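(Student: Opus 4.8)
The plan is to first reduce the general case $d\ge 3$ to $d=3$ by a padding argument, and then to construct an explicit family of hard three-dimensional instances. For the padding: suppose we have exhibited, for $d=3$, a family of positive-switching $L$-Lipschitz functions $\g=(g_1,g_2,g_3)$ on a fixed $3$-box, each satisfying all three diagonal conditions and all but two ex-diagonal conditions, on which every algorithm needs $\Omega(L/\epsilon)$ evaluations to find an $\epsilon$-root. Given $d>3$ and such a $\g$, set $f_i(\bx):=g_i(x_1,x_2,x_3)$ for $i\le 3$ and $f_i(\bx):=x_i-1/2$ for $i\ge 4$, on $[\ba,\bb]\times[0,1]^{d-3}$. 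The coordinates $f_i$ with $i\ge 4$ are strictly increasing in $x_i$, positive-switching in $x_i$, and constant in all other variables, while $f_1,f_2,f_3$ are constant in $x_4,\dots,x_d$. A short count shows that $\f$ satisfies exactly $d^2-2$ monotonicity conditions (the two missing ones being the two already missing for $\g$), is positive-switching, is $O(L)$-Lipschitz, and has root-set $(\text{roots of }\g)\times\{1/2\}^{d-3}$; so the $\Omega(L/\epsilon)$ bound transfers, and it suffices to handle $d=3$.

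\textbf{The $d=3$ hard family.} I would work in the discretized $\delta$-continuous picture of \Cref{sec:discontinuous} with $\delta=\epsilon/L$, so there are $N:=\Theta(L/\epsilon)$ grid lines per axis and \Cref{asm:tech} holds. I would drop two ex-diagonal conditions chosen to remove exactly the monotone coupling exploited by the nested-bisection / Tarski scheme behind \Cref{thm:dd-sufficient} and \Cref{thm:root-2d} --- concretely, two conditions that would otherwise force the root of a two-dimensional ``slice'' to vary continuously as the remaining variable is changed. Keeping the three diagonal conditions and the other four ex-diagonal conditions, I would build a family $\f_s$ indexed by a hidden grid value $s$ chosen among $\Omega(N)$ well-separated candidates, with: $\f_s$ positive-switching and $O(L)$-Lipschitz for every $s$; every query whose relevant coordinate is bounded away from $s$ returning an answer independent of $s$; and the root-set of $\f_s$ (which exists by Miranda's theorem, since $\f_s$ is positive-switching) confined to a thin ``slab'' whose position is determined by $s$.

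\textbf{From the construction to the bound.} Given such a family, the bound follows by a standard adversary argument: as long as the algorithm has not made a query inside the slab, all answers so far are consistent with $s$ being any candidate not yet excluded, and each query excludes only $O(1)$ candidates; hence any algorithm making $o(N)=o(L/\epsilon)$ queries fails to locate the slab, and therefore fails to output an $\epsilon$-root, on some $\f_s$. Undoing the discretization --- an $\epsilon$-root of the original $L$-Lipschitz function lies within $O(\delta)$ of a root of its discretization --- yields the claimed $\Omega(L/\epsilon)$ lower bound.

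\textbf{Main obstacle.} The crux is the explicit $d=3$ construction. The difficulty is that positive-switching, together with the diagonal monotonicities (which we are forced to keep), pins down each component's zero set as a graph over the remaining coordinates, so that no coordinate is genuinely ``free''; and, by \Cref{thm:2d-sufficient} and \Cref{thm:2d-sufficient-exdiagonal}, any two-dimensional slice carrying even a single monotonicity condition is solvable with polylogarithmically many queries. Hence the hardness cannot be hidden inside any slice: it must be engineered into how the (individually easy) slices are stacked along the third axis, exploiting that the two dropped ex-diagonal conditions let the slice-root jump discontinuously --- all while simultaneously respecting every one of the seven remaining monotonicity constraints and all positive-switching boundary conditions, keeping the perturbation localized enough to preserve an $O(L)$ Lipschitz constant, and leaving no spurious root outside the slab.
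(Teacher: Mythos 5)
Your padding step (reducing $d>3$ to $d=3$ by appending coordinates such as $f_i(\bx)=x_i-1/2$, which are monotone in everything and positive-switching) is fine and matches what the paper does with identically-zero components. But the heart of the proposition --- the hard three-dimensional family --- is missing from your proposal. You describe the intended shape of the argument (a family $\f_s$ indexed by a hidden value $s$, roots confined to a thin slab, queries far from the slab revealing nothing, then a standard adversary count), yet you never exhibit functions $\f_s$ that simultaneously satisfy positive-switching, all three diagonal conditions, four of the six ex-diagonal conditions, and an $O(L)$ Lipschitz bound while keeping the root location hidden. Your own ``main obstacle'' paragraph concedes exactly this, and correctly identifies why a naive hidden-slab design is delicate (the retained monotonicities make every two-dimensional slice easy), so as written the proof has a genuine gap: the adversary argument has nothing to run on.

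The paper closes this gap not by designing a hidden family from scratch but by a reduction from the already-proved hardness of general (non-monotone) two-dimensional root-finding, \Cref{prop:2d-insufficient}. Given an arbitrary $1$-Lipschitz positive-switching $\g$ on $[-1,1]^2$, it sets $f_1(\bx)=g_1(x_1,x_3)+2\bigl(x_1-\operatorname{trunc}(2x_2-x_3)\bigr)$, $f_2(\bx)=2x_2-x_1-x_3$, and $f_3(\bx)=g_2(x_1,x_3)+2\bigl(x_3-\operatorname{trunc}(2x_2-x_1)\bigr)$. The slope-$2$ linear terms dominate the $1$-Lipschitz $\g$, which yields the three diagonal conditions; the truncation terms are decreasing in $x_2$, giving four ex-diagonal conditions; the two missing conditions ($f_1$ in $x_3$, $f_3$ in $x_1$) are precisely where the arbitrary $\g$ hides. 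Any $\epsilon$-root forces $2x_2\approx x_1+x_3$ via $f_2$, so the added terms nearly cancel and $(x_1,x_3)$ is a $3\epsilon$-root of $\g$, transferring the $\Omega(L/\epsilon)$ bound. If you want to salvage your adversary framing, this reduction is effectively the construction you were searching for: the ``hidden'' objects are the hard two-dimensional instances underlying \Cref{prop:2d-insufficient}, coupled to the third coordinate through $f_2$ and the truncations rather than through a slab you must engineer by hand.
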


\begin{proof}

The proof is by reduction from a general 2-dimensional function (\Cref{prop:2d-insufficient}) to a monotone 3-dimensional function.

Let $\g$ be any $1$-Lipschitz-continuous function
that is positive-switching on $[-1,1]^2$.
Construct a function $\f: [-1,1]^3\to \mathbb{R}^3$ as follows.
\begin{align*}
f_1(\bx) &= g_1(x_1,x_3) + 2\cdot (x_1 - \operatorname{trunc}(2x_2-x_3))
\\
f_2(\bx) &= 2x_2 - x_1 - x_3
\\
f_3(\bx) &= g_2(x_1,x_3) + 2\cdot (x_3 - \operatorname{trunc}(2x_2-x_1))
\end{align*}
where trunc denotes truncation to $[-1,1]$, that is,  $\operatorname{trunc}(x) = \min(1,\max(-1,x))$.
We prove several claims on $\f$.

\begin{claim}
$\f$ is $O(1)$- Lipschitz-continuous on $[-1,1]^3$.
\end{claim}
\begin{proof}
This is obvious, as $\f$ is a sum of $O(1)$-Lipschitz-continuous functions.
\end{proof}

\begin{claim}
$\f$ is positive-switching on $[-1,1]^3$.
\end{claim}
\begin{proof}
We have to prove that each $f_i$ is positive-switching with respect to $x_i$.

For $f_2$, this follows from the fact that $x_1$ and $x_3$ are between $-1$ and $1$.

For $f_1$ and $f_3$, this follows from the fact that $g_1$ and $g_2$ are positive-switching, and the outcome of $\operatorname{trunc}$ is between $-1$ and $1$.
\end{proof}

\begin{claim}
$\f$ satisfies $7$ monotonicity conditions, including all $3$ diagonal conditions and $4$ out of $6$ ex-diagonal conditions.
\end{claim}
\begin{proof}
To see that $f_1$ is increasing with $x_1$, note that $f_1(\bx) = 2 x_1 + g(x_1,x_3) + \text{const}$.
Whenever $x_1$ increases by $\epsilon$, $2 x_1$ increases by $2\cdot \epsilon$ and $g$ decreases by at most $\epsilon$ since it is $1$-Lipschitz-continuous. Therefore, $f_1$ increases.
Moreover, $f_1$ is decreasing with $x_2$, as $-\operatorname{trunc}$ is decreasing.

By similar arguments, $f_3$ is increasing with $x_3$ and decreasing with $x_2$.

Obviously, $f_2$ is increasing with $x_2$ and decreasing with $x_1,x_3$. Overall, all $3$ diagonal conditions and $4$ ex-diagonal conditions are satisfied.
\end{proof}

\begin{claim}
If $\bx=(x_1,x_2,x_3)$ is an $\epsilon$-root of $\f$, then $(x_1,x_3)$ is a $3 \epsilon$-root of $\g$.
\end{claim}
\begin{proof}
$\bx$ is an $\epsilon$-root of $\f$ implies
$|f_2(\bx)|\leq \epsilon$, 
so 
\begin{align}
\label{eq:x1}
x_1-\epsilon &\leq  (2 x_2 - x_3)\leq x_1 +\epsilon
\\
\label{eq:x3}
 x_3-\epsilon &\leq  (2 x_2 - x_1)\leq x_3+\epsilon 
\end{align}
Since $x_1$ and $x_3$ are in $[-1,1]$, the same inequalities hold when truncating the central expressions $2x_2-x_3$ and $2x_2-x_1$.

$\bx$ is an $\epsilon$-root of $\f$ implies
also $|f_1(\bx)|\leq \epsilon$, 
so
\begin{align*}
2\cdot(x_1 - \operatorname{trunc}(2 x_2 - x_3)) - \epsilon
\leq
|g_1(x_1,x_3)| \leq 2\cdot(x_1 - \operatorname{trunc}(2 x_2 - x_3)) + \epsilon.
\end{align*}
Substituting \eqref{eq:x1} gives
\begin{align*}
-2\cdot\epsilon - \epsilon
\leq 
|g_1(x_1,x_3)|
\leq
2\cdot \epsilon + \epsilon.
\end{align*}
Similarly, using \eqref{eq:x3} and $|f_3(\bx)|\leq \epsilon$ we get
\begin{align*}
-2\cdot\epsilon - \epsilon
\leq 
|g_1(x_1,x_3)|
\leq
2\cdot \epsilon + \epsilon.
\end{align*}
Therefore, $(x_1,x_3)$ is a $3 \epsilon$-root of $\g$.
\end{proof}

By \Cref{prop:2d-insufficient}, any algorithm for finding a
$3\epsilon$-root for general 2-dimensional functions may require $\Omega(1/\epsilon)$ evaluations. 
As a result, the above reduction implies that finding an $\epsilon$-root of the $O(1)$-Lipschitz function $\f$ requires $\Omega(1/\epsilon)$ evaluations. 
Scaling $\f$ by any $L>0$ yields the desired bound $\Omega(L/\epsilon)$, as an $\epsilon$-root of the scaled function is an $\epsilon/L$-root of the original function.
This completes the proof of the Proposition for $d=3$.

To prove the Proposition for any $d\geq 3$, simply add functions $f_i$ for $i\geq 4$, that are identically $0$. The functions are trivially Lipschitz-continuous,  positive-switching, and monotonically increasing with $x_j$ for any $j\in[d]$.
Moreover, for any $i\in[d]$ and $j\geq 4$, $f_i$ is trivially monotonically increasing with $x_j$. Therefore, $\f$ satisfies $d^2-2$ monotonicity conditions, and the same proof applies.
\end{proof}

\begin{remark}
     The lower bound for $d\geq 4$ could possibly be strengthened to $\Omega((L/\epsilon)^{d-2})$, but we do not pursue it as our main goal is to distinguish between the polynomial and  exponential cases. 
\end{remark}

\subsection{$d^2-d$ ex-diagonal conditions are sufficient}

\begin{theorem*}[= \Cref{thm:dd-sufficient}]
Let function $\f: [\ba,\bb]\to \sign^d$ be 
positive-switching, such that, for all $\bx\in[\ba,\bb]$,  $f_i$ is decreasing with $x_j$ all $i\neq j\in[d]$. 
Then, 
for any $\delta>0$
satisfying \Cref{asm:tech},
if $\f$ is $\delta$-continuous, then a root of $\f$ can be found using 
$O(\log^{\lceil (d+1)/2\rceil}(1/\delta))$ evaluations.
\end{theorem*}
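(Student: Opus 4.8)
The plan is to reduce root-finding to a Tarski fixed-point computation on the grid lattice and then invoke the algorithm of \citet{chen2022improved}. Let $G\subseteq[\ba,\bb]$ be the set of grid points with spacing $\delta$; by \Cref{asm:tech}, $G$ is a finite lattice under the componentwise order, isomorphic to a product of $d$ chains, each with $N=O(1/\delta)$ points. I would define a self-map $F:G\to G$ componentwise by
\begin{align*}
F_i(\bx) := \begin{cases}
\min\{x_i+\delta,\, b_i\} & \text{if } f_i(\bx)=-1,\\
x_i & \text{if } f_i(\bx)=0,\\
\max\{x_i-\delta,\, a_i\} & \text{if } f_i(\bx)=+1,
\end{cases}
\end{align*}
so that $F$ nudges coordinate $i$ upward when $f_i$ is negative and downward when $f_i$ is positive --- the natural direction, since $f_i$ is positive-switching with $x_i$. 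Crucially, one evaluation of $F(\bx)$ costs only $d=O(1)$ evaluations of $\f$ and uses no internal bisection; this is what lets us avoid an extra logarithmic factor.

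The first, easy step is to check that the fixed points of $F$ are exactly the roots of $\f$: if $a_i<x_i<b_i$ then $F_i(\bx)=x_i$ forces $f_i(\bx)=0$; if $x_i=a_i$ then positive-switching gives $f_i(\bx)\in\{-1,0\}$ and $F_i(\bx)=a_i$ excludes $f_i(\bx)=-1$; the case $x_i=b_i$ is symmetric. Hence $F(\bx)=\bx$ if and only if $f_i(\bx)=0$ for all $i$.

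The main step is to show that $F$ is order-preserving; it suffices to show that each $F_i$ is weakly increasing in each coordinate separately, since then one passes from $\bx$ to any $\by\geq\bx$ one coordinate at a time. Write $F_i(\bx)=\mathrm{clamp}_{[a_i,b_i]}(x_i+\phi_i(\bx))$, where $\phi_i(\bx)\in\{-\delta,0,+\delta\}$ is the step determined by the sign of $f_i(\bx)$ and $\mathrm{clamp}$ is monotone, so it is enough that $x_i+\phi_i(\bx)$ never decreases when a single coordinate increases by $\delta$. For the $i$-th coordinate, $\delta$-continuity of $f_i$ in $x_i$ forces $(f_i(\bx),f_i(\bx+\delta e_i))$ to differ by at most $1$, and a short case check handles it (e.g.\ for $(0,-1)$ the pre-clamp outputs at the two consecutive grid points are $t$ and $t+2\delta$; for $(1,0)$ they are $t-\delta$ and $t+\delta$). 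For a coordinate $k\neq i$, the ex-diagonal hypothesis enters: since $f_i$ is \emph{decreasing} in $x_k$, increasing $x_k$ by $\delta$ lowers $f_i$ by $0$ or $1$, and because the value-to-step map $-1\mapsto+\delta,\ 0\mapsto 0,\ +1\mapsto-\delta$ is order-reversing, $\phi_i$ weakly increases while $x_i$ stays fixed, so $x_i+\phi_i$ again does not decrease.

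Finally, $F$ is an order-preserving self-map of the product-of-chains lattice $G$ with side length $N=O(1/\delta)$, so by Tarski's theorem it has a fixed point (consistent with Miranda's theorem), and the algorithm of \citet{chen2022improved} finds one using $O(\log^{\lceil(d+1)/2\rceil}(N))=O(\log^{\lceil(d+1)/2\rceil}(1/\delta))$ evaluations of $F$; since each costs $O(1)$ evaluations of $\f$, the same bound holds for $\f$, and the fixed point found is a root of $\f$. I expect the main obstacle to be nothing deep, but rather organizing the two case analyses for order-preservation cleanly, in particular the boundary behaviour of the clamping; positive-switching is exactly what makes those boundary cases work, so, unlike in \Cref{thm:root-2d-exdiagonal}, no strictness of the switching conditions is needed here.
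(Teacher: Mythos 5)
Your proposal is correct and takes essentially the same route as the paper: both reduce to a Tarski fixed-point computation on the $\delta$-grid via the map $\bx \mapsto \bx - \delta\,\f(\bx)$, prove order-preservation from the ex-diagonal monotonicity conditions (off-coordinate) plus $\delta$-continuity (own coordinate), and invoke \citet{chen2022improved} to get the $O(\log^{\lceil(d+1)/2\rceil}(1/\delta))$ bound. The only cosmetic difference is that you clamp the map to $[\ba,\bb]$ and then use positive-switching to rule out spurious boundary fixed points, whereas the paper uses positive-switching directly to show the unclamped map already sends the grid to itself; the two are interchangeable.
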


For the proof, we reduce the problem of finding a root to the problem of finding a \emph{Tarski fixed-point}.
The term refers to a fixed point whose existence is guaranteed by the following theorem.
\begin{theorem*}[Knaster--Tarski Theorem]
Let $(Z, \succeq)$ be a complete lattice (a partially-ordered set in which all subsets have a supremum and an infimum).
Let $\h:Z\to Z$ be a function on $Z$ that is  order-preserving with respect to $\succeq$.
Then $\h$ has at least one fixed point in $Z$.
\end{theorem*}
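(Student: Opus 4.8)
The plan is to prove the theorem directly, constructing an explicit fixed point as the supremum of a suitable subset of $Z$. Define the set of \emph{expansive} points
$P := \{\, z \in Z : z \preceq \h(z) \,\}$.
First I would check that $P$ is nonempty: since $(Z,\succeq)$ is a complete lattice it possesses a least element $\bot := \inf Z$ (the infimum of all of $Z$), and $\bot \preceq \h(\bot)$ holds trivially because $\bot$ is below every element, so $\bot \in P$. Then let $u := \sup P$, which exists by completeness of the lattice.

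The next step is to show $u \in P$, i.e. $u \preceq \h(u)$. For every $z \in P$ we have $z \preceq u$, hence $\h(z) \preceq \h(u)$ by order-preservation of $\h$, and therefore $z \preceq \h(z) \preceq \h(u)$. Thus $\h(u)$ is an upper bound of $P$; since $u$ is the \emph{least} upper bound, $u \preceq \h(u)$, which is exactly $u \in P$.

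Finally I would upgrade this to equality. Applying $\h$ to $u \preceq \h(u)$ and using order-preservation gives $\h(u) \preceq \h(\h(u))$, so $\h(u) \in P$; consequently $\h(u) \preceq \sup P = u$. Combining $u \preceq \h(u)$ with $\h(u) \preceq u$ and the antisymmetry of the partial order yields $\h(u) = u$, so $u$ is a fixed point of $\h$ in $Z$ (in fact the least one).

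There is no substantive obstacle here; the argument is elementary. The only points that require care are keeping the direction of the order-preservation inequalities consistent, and invoking completeness precisely where it is needed — namely to guarantee existence of $\bot = \inf Z$ and of $\sup P$. If one prefers, the dual construction using $\{\, z \in Z : \h(z) \preceq z \,\}$ and its infimum produces the greatest fixed point instead; either version establishes the statement as worded.
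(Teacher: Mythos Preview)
Your argument is correct and is the classical proof of the Knaster--Tarski theorem. Note, however, that the paper does not supply its own proof of this statement: it is quoted as a well-known result and merely applied (to the component-wise lattice on the $\delta$-grid) in the proof of \Cref{thm:dd-sufficient}. So there is nothing in the paper to compare against; you have simply filled in a standard proof that the paper omits.

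One small slip: the fixed point $u=\sup\{\,z:z\preceq \h(z)\,\}$ that you construct is the \emph{greatest} fixed point, not the least. Indeed, every fixed point $w$ satisfies $w\preceq \h(w)$ and hence lies in $P$, so $w\preceq u$. The dual construction you mention, taking the infimum of $\{\,z:\h(z)\preceq z\,\}$, is the one that yields the least fixed point. This does not affect the existence claim, which is all the theorem asserts.
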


We apply the Knaster--Tarski theorem to the case when $(Z, \succeq)$ is a \emph{component-wise lattice}. This means that $Z$ is a subset of $\mathbb{R}^d$ (particularly, the $\delta$-grid defined earlier), and $\succeq$ is the partial order defined by: $\bx' \succeq \bx$ iff $x'_i\geq x_i$ for all $i\in[d]$.

\begin{proof}[Proof of \Cref{thm:dd-sufficient}]
Define a new function $\h: Z\to Z$ by
\begin{align*}
    \h(\bx) := \bx - \f(\bx)\cdot \delta.
\end{align*}
Note that $\bx$ is a grid point, so each of its coordinates is a multiple of $\delta$.
By definition of $\f$, each coordinate of $\f(\bx)$ is 
in $\{-1,0,1\}$.
Therefore, each coordinate of $\h(\bx)$ is a multiple of $\delta$.

\begin{claim}
\label{claim:g-order-preserving}
$\h$ is order-preserving in the component-wise lattice. 
\end{claim}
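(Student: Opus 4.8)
The plan is to verify order-preservation directly from the definition. Given two grid points $\bx' \succeq \bx$ in $Z$, I must show $h_i(\bx') \ge h_i(\bx)$ for every $i\in[d]$, where $h_i(\bx) = x_i - \delta f_i(\bx)$. Since $Z$ is the $\delta$-grid on $[\ba,\bb]$ and $\bx' \succeq \bx$, I can pass from $\bx$ to $\bx'$ through a finite chain of points of $Z$, each obtained from the previous one by increasing a single coordinate by exactly $\delta$; every intermediate point still lies in $[\ba,\bb]$ because each of its coordinates stays between the corresponding coordinates of $\bx$ and $\bx'$. Hence it suffices to prove that $\h$ is weakly increasing in every coordinate under one such elementary step, and then compose along the chain.

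So fix a step in which only coordinate $j$ increases by $\delta$, say from $\bx$ to $\by$ with $y_j = x_j + \delta$ and $y_k = x_k$ for all $k\neq j$, and fix a target coordinate $i$. If $i\neq j$, then $y_i = x_i$, while $f_i$ is (weakly) decreasing with $x_j$ by the ex-diagonal hypothesis, so $f_i(\by) \le f_i(\bx)$ and therefore $h_i(\by) = x_i - \delta f_i(\by) \ge x_i - \delta f_i(\bx) = h_i(\bx)$. If $i = j$, then $y_i - x_i = \delta$, and since $|\by - \bx| = \delta$, the $\delta$-continuity of $\f$ (\Cref{def:discrete-continuity}) gives $|f_i(\by) - f_i(\bx)| \le 1$, in particular $f_i(\by) - f_i(\bx) \le 1$; hence $h_i(\by) - h_i(\bx) = \delta - \delta\bigl(f_i(\by) - f_i(\bx)\bigr) \ge \delta(1 - 1) = 0$. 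Composing over the chain yields $\h(\bx') \succeq \h(\bx)$, which is the claim.

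The genuinely load-bearing case is the diagonal one, $i = j$: there is no diagonal monotonicity hypothesis in \Cref{thm:dd-sufficient}, so $f_i$ may actually increase as $x_i$ increases, and the only thing keeping $h_i$ from decreasing is that a step of size $\delta$ in the coordinate $x_i$ exactly dominates the discretized jump of $f_i$, which $\delta$-continuity caps at $1$ (and thus at $\delta$ after the rescaling by $\delta$ built into $\h$). This is precisely why the grid spacing was taken to be $\delta$ and why the factor $\delta$ appears in $\h(\bx) = \bx - \delta\f(\bx)$; everything else is monotonicity bookkeeping. As a byproduct of the same elementary-step analysis one also checks that $\h$ is a self-map of $Z$: on the face $x_i = a_i$, positive-switching gives $f_i(\bx)\le 0$ so $h_i(\bx)\ge a_i$; on $x_i = b_i$ it gives $f_i(\bx)\ge 0$ so $h_i(\bx)\le b_i$; and at any grid point $|f_i(\bx)|\le 1$ keeps $h_i(\bx)$ within $\delta$ of $x_i$, hence inside $[a_i,b_i]$ — so $\h$ is a well-defined order-preserving self-map of the complete lattice $Z$ and the Knaster--Tarski theorem applies to give a fixed point $\bx^\ast$ with $\f(\bx^\ast) = \mathbf{0}$, which is then the desired root; its computation will invoke the Tarski fixed-point algorithm of \citet{chen2022improved}.
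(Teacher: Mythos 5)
Your proof is correct and follows essentially the same route as the paper: reduce to adjacent grid points, use the ex-diagonal monotonicity for the off-diagonal coordinates, and use $\delta$-continuity to bound the jump of $f_j$ by $1$ in the diagonal coordinate, so that the $+\delta$ increase in $x_j$ dominates. The additional remarks on $\h$ mapping the grid to itself belong to the paper's separate claim and are not needed for this one, but they do no harm.
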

\begin{proof}
It is sufficient to prove the claim for adjacent grid-points.
Let $\bx\in [\ba,\bb]$ by a grid-point 
and let $\bx'$ be an adjacent grid-point define by  $x'_j = x_j + \delta$ for some $j\in[d]$, and $x'_i = x_i$ for all $i\neq j$.
We have to prove that $h_i(\bx')\geq h_i(\bx)$ for all $i\in[d]$.

For all $i\neq j$, the ex-diagonal monotonicity conditions on $\f$ imply that $f_i(\bx')\leq f_i(\bx)$, so $h_i(\bx') = x_i'-f_i(\bx')\cdot \delta = x_i - f_i(\bx')\cdot \delta \geq x_i - f_i(\bx)\cdot \delta = h_i(\bx)$.

For the $j$-th coordinate, we have
$h_j(\bx') = x_j' - f_j(\bx')\cdot \delta = x_j + \delta- f_j(\bx')\cdot \delta$.
Because $f_j$ changes by at most $1$ between adjacent grid-points, $h_j(\bx') \geq x_j - f_j(\bx)\cdot \delta = h_j(\bx)$.
Therefore, $\h$ is order-preserving as claimed.
\end{proof}

\begin{claim}
\label{claim:g-maps-box-to-itself}
$\h$ maps the $\delta$-grid on $[\ba,\bb]$ to itself. 
\end{claim}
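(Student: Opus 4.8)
The goal is to show that $\h(\bx) = \bx - \f(\bx)\cdot\delta$ stays inside the $\delta$-grid on $[\ba,\bb]$ whenever $\bx$ is a grid point, i.e., that no coordinate of $\h(\bx)$ drops below $a_i$ or rises above $b_i$. Since we already know each coordinate of $\h(\bx)$ is a multiple of $\delta$, it suffices to check the two extreme faces of the box in each coordinate.

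The plan is to argue coordinate by coordinate using the positive-switching property. Fix $i\in[d]$ and a grid point $\bx\in[\ba,\bb]$. If $x_i = a_i$, then positive-switching gives $f_i(\bx)\leq 0$, so $h_i(\bx) = x_i - f_i(\bx)\cdot\delta = a_i - f_i(\bx)\cdot\delta \geq a_i$; since $f_i(\bx)\in\{-1,0,1\}$, the increase is at most $\delta$, so $h_i(\bx)\leq a_i+\delta\leq b_i$ (using \Cref{asm:tech}, $\delta\leq b_i-a_i$), hence $h_i(\bx)\in[a_i,b_i]$. Symmetrically, if $x_i = b_i$, then $f_i(\bx)\geq 0$ gives $h_i(\bx) = b_i - f_i(\bx)\cdot\delta \leq b_i$, and $h_i(\bx)\geq b_i - \delta\geq a_i$. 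If instead $a_i < x_i < b_i$, then $x_i$ is at least $a_i+\delta$ and at most $b_i-\delta$, and since $f_i(\bx)\cdot\delta\in\{-\delta,0,\delta\}$, we immediately get $a_i\leq h_i(\bx)\leq b_i$. In all cases $h_i(\bx)\in[a_i,b_i]$, and combined with the fact that each coordinate is a multiple of $\delta$, this shows $\h(\bx)$ is a grid point.

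I do not expect any genuine obstacle here — this claim is the routine "well-definedness" half of setting up the Tarski reduction, and the only thing one must be slightly careful about is that $\delta\leq b_i-a_i$, which is exactly why \Cref{asm:tech} is invoked, so that adding or subtracting a single $\delta$ at a boundary face cannot overshoot the opposite face. After this claim, the argument would conclude: by \Cref{claim:g-order-preserving} and \Cref{claim:g-maps-box-to-itself}, $\h$ is an order-preserving self-map of the finite (hence complete) component-wise lattice $Z$, so by Knaster--Tarski it has a fixed point $\bx^*$, which satisfies $\f(\bx^*)\cdot\delta = \bx^* - \h(\bx^*) = 0$, i.e. $\f(\bx^*) = 0$; and such a fixed point can be found with $O(\log^{\lceil(d+1)/2\rceil}(1/\delta))$ evaluations by the algorithm of \citet{chen2022improved}, since the number of grid points per dimension is $(b_i-a_i)/\delta = O(1/\delta)$.
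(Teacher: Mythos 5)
Your proof is correct, and it handles the interior grid points by a different (more elementary) mechanism than the paper. Both proofs treat the faces $x_i=a_i$ and $x_i=b_i$ identically, via the positive-switching signs of $f_i$. For points with $a_i<x_i<b_i$, however, the paper does not argue directly: it invokes \Cref{claim:g-order-preserving} to get that $h_i$ is (weakly) increasing in $x_i$, and then sandwiches $h_i(\bx)$ between its values on the two faces, which are already known to lie in $[a_i,b_i]$. You instead observe that $f_i(\bx)\in\{-1,0,1\}$ forces $|h_i(\bx)-x_i|\leq\delta$, while an interior grid coordinate satisfies $a_i+\delta\leq x_i\leq b_i-\delta$, so no overshoot is possible; at the faces you additionally use $\delta\leq b_i-a_i$ from \Cref{asm:tech} to rule out jumping across the box. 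Your route is self-contained (it never uses the order-preserving claim for this step) and makes explicit exactly where \Cref{asm:tech} is needed, whereas the paper's route is slightly shorter given that \Cref{claim:g-order-preserving} has already been established and does not need the $\delta\leq b_i-a_i$ bound for this particular claim. Your concluding remarks about Knaster--Tarski, the fixed point yielding $\f(\bx^*)=0$, and the query bound of \citet{chen2022improved} match the paper's wrap-up of the reduction.
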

\begin{proof}
For all $i\in[d]$,
if $x_i= a_i$, then
\begin{align*}
h_i(\bx) &= x_i - f_i(\bx)\cdot \delta
\\
& = a_i - f_i(\bx)\cdot \delta
\\
& \geq a_i, && \text{since $f_i(\bx)\leq 0$.}
\end{align*}
Similarly, if $x_i= b_i$, then $h_i(\bx)\leq b_i$ since $f_i(\bx)\geq 0$.

\Cref{claim:g-order-preserving} implies that $f_{i,i,\bx}$ is increasing, so $a_i\leq f_i(\bx)\leq b_i$ for all $\bx$ in the grid. 
Since both the grid-points and the values of $\f$ are integer multiples of $\delta$, $\h$ maps the grid to itself.
\end{proof}

\citet{chen2022improved} proved that, for a component-wise lattice $(Z,\succeq)$, a Tarski fixed-point of $\h: Z\to Z$ can be found using $O(\log^{\lceil (d+1)/2\rceil} (N))$ evaluations of $\h$, where $N$ is the maximum number of points in each dimension; in our 
lattice, $N\in O(1/\delta)$. 

Let $\bx$ be a fixed point of $\h$, so $\h(\bx)=\bx$.
Then $\f(\bx) = (\bx - \h(\bx))/\delta = 0$, so $\bx$ is the desired root of $\f$.
\end{proof}

\begin{proof}[Alternative proof]
To better understand what happens ``under the hood'' of the reduction, 
we present a direct algorithm for finding a root of $\f$, with a slightly worse run-time of $O(\log^d(1/\delta))$ evaluations (but still polynomial in the accuracy).

The proof is by induction on $d$, and follows the lines of the algorithm of \cite{dang2020computations}.

For the base $d=1$ no monotonicity conditions are assumed, but we can still use \Cref{lem:1d}, as it does not require monotonicity.%
\footnote{
We could also start with $d=2$ and apply \Cref{thm:root-2d-exdiagonal}.
}
By \Cref{lem:1d}, a root of $\f$ can be found using $O(\log(1/\delta))$ evaluations. 

For the induction step, we assume that the statement holds for $d-1$.
Let $y_d := (a_d+b_d)/2 = $ the middle value of $x_d$.
By \Cref{asm:tech}, $y_d$ is a coordinate of one of the grid hyperplanes.
~
Consider the $(d-1)$-dimensional function $\g^{-}$, defined by 
\begin{align*}
g^{-}_j(x_1,\ldots,x_{d-1}) := f_j(x_1,\ldots,x_{d-1},y_d),
\text{ for all $j\in[d-1]$.}
\end{align*}
Let $\ba^- := (a_1,\ldots,a_{d-1})$
and $\bb^- := (b_1,\ldots,b_{d-1})$.
The function $\g^-$ is positive-switching
on $[\ba^-,\bb^-]$ and satisfies all ex-diagonal monotonicity conditions,
so by the induction assumption it has a root 
that can be found using $O(\log^{d-1}(1/\delta))$ evaluations.
Denote this root by $(y_1,\ldots,y_{d-1})$,
and define $\by := (y_1,\ldots,y_{d-1},y_d)$.
So $f_j(\by) = 0$ for all $j\in[d-1]$.

We consider three cases, depending on the sign of $f_d(\by)$.

\paragraph{Case 1:}
$f_d(\by)= 0$. Then $\by$ is a root of $\f$, and we are done.

\paragraph{Case 2:}
$f_d(\by) < 0$.
Consider the box 
$[\by,\bb]$ containing all $\bx \in[\ba,\bb]$ such that $x_i\geq y_i$ for all $i\in [d]$.
Due to the ex-diagonal monotonicity conditions, 
we have, for all 
$\bx\in [\by,\bb]$,
\begin{align*}
f_j(x_1,\ldots,x_{j-1},y_j,x_{j+1},\ldots,x_d) &\leq f_j(\by) = 0 && \forall j\in[d-1];
\\
f_d(x_1,\ldots,x_{d-1},y_d) &\leq f_d(\by) < 0
\end{align*}
Therefore, $\f$ is positive-switching on $[\by,\bb]$, and we can recursively apply the same algorithm.

\paragraph{Case 3:}
$f_d(\by) > 0$.
Consider the box 
$[\ba,\by]$ containing all $\bx \in[\ba,\bb]$ such that $x_i\leq y_i$ for all $i\in [d]$.
Due to the ex-diagonal monotonicity conditions, 
we have, for all 
$\bx\in [\by,\bb]$,
\begin{align*}
f_j(x_1,\ldots,x_{j-1},y_j,x_{j+1},\ldots,x_d) &\geq f_j(\by) = 0 && \forall j\in[d-1];
\\
f_d(x_1,\ldots,x_{d-1},y_d) &\geq f_d(\by) > 0
\end{align*}
Therefore, $\f$ is positive-switching on $[\ba,\by]$, and we can recursively apply the same algorithm. 

In both cases 2 and 3, the size of the box in dimension $d$ is halved at each recursion round. Therefore, after at most $\log(1/\delta)$ iterations, the box size in dimension $d$ is at most $\delta$, and the recursion must end at Case 1 with a root of $\f$.
Each iteration requires finding a root of a $(d-1)$-dimensional function, which by the induction assumption takes $O(\log^{d-1}(1/\delta))$ evaluations.
All in all, $O(\log^{d}(1/\delta))$ evaluations suffice for finding the desired root of $\f$.
\end{proof}

\subsection{All $d$ switching conditions are necessary}

\begin{proposition*}[=\Cref{prop:switcing-necessary}]
For every $\epsilon>0, L>0, d\geq 2$,
let $T(L,\epsilon,d)$ be the worst-case number of evaluations required by an algorithm that finds an $\epsilon$-root for every $L$-Lipschitz-continuous function that has an exact root and satisfies $d-1$ positive-switching conditions and all $d^2$ monotonicity conditions. Then $T(L,\epsilon,d) \in \Omega(L/\epsilon)$.
\end{proposition*}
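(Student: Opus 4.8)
The plan is to build a one-parameter family of functions, each satisfying all $d^2$ monotonicity conditions and $d-1$ of the $d$ positive-switching conditions (say, all of them except the switching condition on $f_d$ along $x_d$), such that every member is guaranteed to have an exact root, but the roots are ``hidden'' along the $x_d$-axis in a way that forces $\Omega(L/\epsilon)$ evaluations. The natural route is a reduction from the one-dimensional lower bound for root-finding of non-switching monotone functions that have a root: Sikorski's $\Omega(1/\epsilon)$ bound, restricted to functions that are promised to have a root. Concretely, I would take a hard instance $\phi:[-1,1]\to\mathbb R$ that is $1$-Lipschitz, monotone, has an exact root, but is not switching (e.g.\ $\phi$ is $\leq 0$ everywhere, touching $0$ only at a single adversarially-placed point $x^*$), and lift it to $\f:[-1,1]^d\to\mathbb R^d$.

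The lifting I have in mind sets $f_i(\bx) = x_i$ for all $i<d$ (these are trivially $1$-Lipschitz, monotone in every variable, positive-switching along their own axis, and have their unique zero at $x_i=0$), and sets $f_d(\bx) = \phi(x_d)$ plus, if needed, a small coupling term that keeps $f_d$ monotone in all variables while not destroying the hardness. Since $f_i = x_i$ pins the first $d-1$ coordinates of any $\epsilon$-root to $[-\epsilon,\epsilon]$, finding an $\epsilon$-root of $\f$ amounts to finding an $\epsilon$-root of the one-dimensional $\phi$, which needs $\Omega(1/\epsilon)$ queries; scaling by $L$ gives $\Omega(L/\epsilon)$. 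I would verify the four required properties of $\f$ in turn: $O(1)$-Lipschitz (sum of Lipschitz pieces); all $d^2$ monotonicity conditions ($f_i=x_i$ is increasing in $x_i$ and constant in the others, so weakly monotone either way — note the definition only asks for weak monotonicity, so a constant function qualifies as both increasing and decreasing); exactly $d-1$ positive-switching conditions hold (the $f_i=x_i$ ones do; the $f_d$ one fails by construction of $\phi$); and $\f$ has an exact root, namely $(0,\dots,0,x^*)$.

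The main obstacle is making $f_d$ simultaneously (i) monotone in \emph{all} $d$ variables, (ii) non-switching along $x_d$ in the adversarial way needed for hardness, and (iii) still forcing the algorithm to search along $x_d$ rather than shortcut via the other coordinates. If $f_d$ depends only on $x_d$, monotonicity in $x_1,\dots,x_{d-1}$ is automatic (constant functions are weakly monotone), so the cleanest construction is simply $f_d(\bx)=\phi(x_d)$ with no coupling at all; then the only real content is the one-dimensional adversary argument, which I would phrase as: an adaptive adversary answers $\phi(x)=\min(0, L(x-x^*))$ consistently for whatever $x^*$ has not yet been excluded, and after $o(1/\epsilon)$ queries a whole interval of candidate $x^*$ of length $\gg \epsilon/L$ remains, so no returned point can be within $\epsilon$ of the true root for all consistent instances. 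I would also double-check the $d-1$-switching bookkeeping: we use exactly $d-1$ positive-switching conditions and all $d^2$ monotonicity conditions, matching the statement, and the construction works verbatim for every $d\geq 2$ by the same argument, with the extra coordinates $i=1,\dots,d-1$ always behaving as described.

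Finally I would note the scaling step explicitly: replacing $\phi$ by $L\cdot\phi$ (and correspondingly the other coordinates, or simply observing that an $\epsilon$-root of $L\f$ is an $(\epsilon/L)$-root of $\f$) converts the $\Omega(1/\epsilon)$ bound into $\Omega(L/\epsilon)$, exactly as in the proof of \Cref{prop:dd-insufficient}. Since all the structural properties are inherited from the one-dimensional $\phi$ and the trivial identity coordinates, the only lemma doing real work is the classical $\Omega(1/\epsilon)$ query lower bound for approximating the root of a one-dimensional Lipschitz function that is promised to have a root (the $\mathord{\leq}0$-everywhere adversary), which I would include a short self-contained proof of rather than citing, to keep the note self-contained.
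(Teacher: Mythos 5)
There is a genuine gap, and it sits exactly at the point you flagged as "the only lemma doing real work." Your plan needs a one-dimensional function $\phi$ that is simultaneously \emph{monotone}, Lipschitz, promised to have a root, and hard to approximately solve. No such hard instance exists: if $\phi$ is (say) weakly increasing and has a root $x_0$, then automatically $\phi(-1)\leq \phi(x_0)=0\leq \phi(1)$, so $\phi$ is switching and bisection finds an $\epsilon$-root in $O(\log(1/\epsilon))$ queries --- this is precisely the observation made in the paper's introduction ("if a one-dimensional function is monotone on $[a,b]$, then it has a root if and only if it is switching"). Your concrete adversary $\phi(x)=\min(0,L(x-x^*))$ illustrates the collapse: it is increasing, and $x=1$ is an exact root for \emph{every} choice of $x^*$, so the algorithm outputs $(0,\dots,0,1)$ with zero queries. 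The classical Sikorski $\Omega(1/\epsilon)$ adversary (e.g.\ $\phi(x)=-|x-x^*|$ with a hidden zero) is necessarily non-monotone, and plugging it in as $f_d(\bx)=\phi(x_d)$ destroys the diagonal monotonicity condition on $f_d$, which the statement requires. So in the fully decoupled construction you cannot have hardness and all $d^2$ monotonicity conditions at once; the obstacle you noted is not removable by "no coupling at all" --- coupling is exactly what is needed.

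The paper's proof resolves this tension with a coupling trick you may want to adopt: take a general (non-monotone, non-switching) $1$-Lipschitz $g$ with a root and set, for $d=2$,
\begin{align*}
f_1(\bx) = g(x_1) + 2\,(x_1-x_2), \qquad f_2(\bx) = x_2 - x_1 .
\end{align*}
The slope-$2$ term dominates the $1$-Lipschitz $g$, so $f_1$ is increasing in $x_1$ and decreasing in $x_2$, while $f_2$ is monotone in both variables and is the one coordinate that is positive-switching; all $d^2$ monotonicity conditions hold, only the switching condition on $f_1$ is dropped, and $(x,x)$ is an exact root whenever $g(x)=0$. At any $\epsilon$-root, $|f_2|\leq\epsilon$ pins $x_2$ to within $\epsilon$ of $x_1$, so the coupling term contributes at most $2\epsilon$ and $x_1$ is a $3\epsilon$-root of $g$; Sikorski's lower bound for general $1$-dimensional functions with a root then gives $\Omega(1/\epsilon)$, scaling gives $\Omega(L/\epsilon)$, and padding with identically-zero components handles $d>2$ (this padding step, and the scaling, you did describe correctly). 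In short: your reduction skeleton and bookkeeping are fine, but the hardness must be smuggled into a monotone component via a dominating linear coupling term rather than placed in a stand-alone monotone one-dimensional function, because the latter problem is easy.
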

\begin{proof}
The proof is similar to that of \Cref{prop:dd-insufficient}, using a reduction from a general (neither switching nor monotone) 1-dimensional function.

We first prove the proposition for $d=2$.
Let $g:[-1,1]\to \mathbb{R}$ be any $1$-Lipschitz-continuous function that has a root.
Construct a function $\f: [-1,1]^2\to \mathbb{R}^2$ as follows.
\begin{align*}
f_1(\bx) &= g(x_1) + 2\cdot (x_1 - x_2)
\\
f_2(\bx) &= x_2 - x_1
\end{align*}
Clearly, $\f$ is $O(1)$- Lipschitz-continuous on $[-1,1]^2$, and satisfies one switching condition, namely, $f_2$ is positive-switching with $x_2$.

\begin{claim}
$\f$ satisfies all $4$ monotonicity conditions.
\end{claim}
\begin{proof}
It is obvious that $f_1$ is decreasing with $x_2$, and  $f_2$ is increasing with $x_2$ and decreasing with $x_1$.
To see that $f_1$ is increasing with $x_1$, 
suppose $x_1$ increases by $\epsilon$. Then $2 x_1$ increases by $2\cdot \epsilon$ and $g$ decreases by at most $\epsilon$ since it is $1$-Lipschitz-continuous, so overall $f_1$ increases.
\end{proof}
\begin{claim}
$\f$ has an exact root.
\end{claim}
\begin{proof}
By assumption, $g$ has a root, say $x$. Then $(x,x)$ is a root of $\f$.
\end{proof}
\begin{claim}
If $\bx=(x_1,x_2)$ is an $\epsilon$-root of $\f$, then $x_1$ is a $3 \epsilon$-root of $g$.
\end{claim}
\begin{proof}
$\bx$ is an $\epsilon$-root of $\f$ implies
$|x_2-x_1| = |f_2(\bx)|\leq \epsilon$,
and also $|g_1(x) + 2(x_1-x_2)| = |f_1(\bx)|\leq \epsilon$.
Combining the two inequalities implies 
$|g_1(x)| \leq \epsilon + 2|x_2-x_1|
\leq 3\epsilon$, so $x_1$ is a $3 \epsilon$-root of $g$.
\end{proof}
By the above claims, an $\epsilon$-root of $\f$ can be found using $T(L,\epsilon,d)$ evaluations, and it yields a $3\epsilon$-root of $g$.
By \citet{sikorski1984optimal}, any algorithm for deciding existence of a $3\epsilon$-root for general 1-dimensional 1-Lipschitz functions that have a root may require $\Omega(1/\epsilon)$ evaluations. 
As a result, the above reduction implies that finding an $\epsilon$-root of the $O(1)$-Lipschitz function $\f$ requires $\Omega(1/\epsilon)$ evaluations. 
Scaling $\f$ by any $L>0$ yields the desired bound $\Omega(L/\epsilon)$, as an $\epsilon$-root of the scaled function is an $\epsilon/L$-root of the original function.
This completes the proof for $d=2$.

To prove the Proposition for any $d\geq 2$,  add functions $f_i$ for $i\geq 3$, that are identically $0$. The functions are Lipschitz-continuous and satisfy all switching and monotonicity conditions.
\end{proof}

\section{Envy-free cake-cutting}
\label{sec:envy-free}
A \emph{group cake-cutting instance} consists of a cake represented by the interval $[0,1]$, and some $n$ agents with different preferences over pieces of cake. 
The goal is to partition the cake into $m$ connected pieces (intervals), and partition the agents into $m$ groups with prespecified cardinalities $k_1,\ldots,k_m$ (with $\sum_{j=1}^m k_j = n$), and assign each piece to a group.

Agents may have different preferences over pieces of cake, which are represented by value functions. For each agent $i$ and $a,b\in [0,1]$ with $a\leq b$, we denote by $v_i(a,b)$ the value assigned by $i$ to the interval $[a,b]$.
We make the following assumptions on the value functions.
\begin{itemize}
\item All $v_i$ are continuous functions.
\item All agents value empty pieces at zero: $v_i(a,a)=0$ for all $a\in[0,1]$. 
\item In any cake-partition, each agent values at least one piece positively (this is sometimes called the ``hungry players'' assumption).
\item The agents' preferences are \emph{monotone} --- each agent weakly prefers a piece to its subsets:
$v_i(a,b)\leq v_i(a',b')$ whenever $a'\leq a\leq b\leq b'$.
\item $v_i$ are accessible by \emph{evaluation queries}: for each $a,b$, it is possible to get $v_i(a,b)$ in unit time.%
\footnote{
Other cake-cutting algorithms require \emph{cut} or \emph{mark} queries; we do not need such queries here.
}
\end{itemize}

An allocation is called \emph{envy-free} if each agent (weakly) prefers his group's piece to the other pieces. Formally, denoting the piece of group $j$ by $[a_j,b_j]$, envy-freeness requires that $v_i(a_j,b_j)\geq v_i(a_{j'},b_{j'})$ for all $i$ in group $j$ and for any other group $j'$.

Given an approximation factor $r\in(0,1)$, an \emph{$r$-near envy-free allocation} is an allocation in which, for each agent $i\in[n]$, if the partition lines are moved by at most $r$, then the agent does not envy.
We now present a general reduction from near-envy-free cake-cutting to approximate root-finding.
The construction generalizes the one by \citet{igarashi2022envy}.


\begin{theorem*}[= \Cref{thm:ef-to-root}]
For every integer $d\geq 2$ and real $r>0$, 
for every instance of group cake-cutting with $n$ agents and $m=d+1$ groups where all agents have monotone preferences,
there exists a function $\f:[0,1]^d \to\rd$ with the following properties:
\begin{enumerate}
\item  
\label{prop:evaluate} 
One evaluation of $\f$ requires $O(d^2 \cdot n)$ queries of the agents' preferences.
\item  
\label{prop:root} 
Every $1/(2\cdot d^2)$-root of $\f$ corresponds to an $r$-near envy-free allocation.
\item  
\label{prop:lipschitz} 
$\f$ is Lipschitz-continuous with constant $n/r$.
\item 
\label{prop:switching} 
$\f$ is sum-switching in $[0,1]^d$ (see \Cref{def:sum-switching}).
\item 
\label{prop:monotone} 
$\f$  is alternating-monotone in $[0,1]^d$  (see \Cref{def:alternating-monotone}).
\end{enumerate}
\end{theorem*}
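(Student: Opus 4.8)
The plan is to construct $\f$ explicitly from the $n$ value functions and then verify the five properties one at a time. The natural setup is to parametrize a partition of $[0,1]$ into $m=d+1$ connected pieces by the $d$ interior cut points. But to obtain the switching and monotonicity structure, I would not use the cut points directly; instead I would use \emph{lengths} of the first $d$ pieces as coordinates, so that $\bx=(x_1,\ldots,x_d)\in[0,1]^d$ determines pieces $P_1=[0,x_1]$, $P_2=[x_1,x_1+x_2]$, \ldots, $P_d=[x_1+\cdots+x_{d-1},\,x_1+\cdots+x_d]$, and $P_{d+1}$ is the remainder. (If $x_1+\cdots+x_d>1$ the partition degenerates; I would handle this by truncation, which is where some care is needed.) For each piece $P_i$, let each agent compute its value, and let $\alpha_i(\bx)$ be the $k_i$-th highest value among the $n$ agents — the threshold such that at least $k_i$ agents value $P_i$ at least $\alpha_i$. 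The component $f_i(\bx)$ should then compare the ``demand'' for piece $i$ against its allotment, something like $f_i(\bx) = $ (a normalized count or threshold-difference measuring whether too many or too few agents want pieces $1,\ldots,i$ versus pieces $i,\ldots,d+1$). The exact algebraic form is dictated by wanting $f_i$ increasing in $x_i$ (making piece $i$ bigger makes it more attractive, reducing over-demand) and decreasing in $x_{i-1}$ (making piece $i-1$ bigger pushes piece $i$ to the right, and by monotonicity shrinks... ) — I would reverse-engineer the formula from \Cref{def:alternating-monotone}.

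The order of verification I would follow is: first fix the precise definition of $f_i$, scaled by $1/r$ so that moving a cut by $r$ changes the relevant comparison by a controlled amount; second, prove property \ref{prop:root}, since this pins down the definition — a $1/(2d^2)$-root means every $f_i$ is within $1/(2d^2)$ of zero, which should translate into: there is an assignment of $k_j$ agents to each group $j$ such that no agent's envy survives a movement of the cuts by $r$. This is a counting/Hall-type argument: the thresholds $\alpha_i$ being approximately balanced means the bipartite ``agent prefers piece'' graph has a perfect assignment respecting the cardinalities $k_1,\ldots,k_m$, by a defect version of Hall's theorem. Third, property \ref{prop:evaluate}: evaluating $\f$ means querying each of the $n$ agents on each of the $O(d)$ pieces, and doing $O(d)$ sorting/selection passes, giving $O(d^2 n)$. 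Fourth, property \ref{prop:lipschitz}: moving any $x_j$ by $\eta$ moves at most two cut points by $\eta$, hence changes each agent's value of each affected piece by at most... here I would invoke monotonicity plus the fact that total cake value can be bounded, but actually the clean bound $L=n/r$ comes from the $1/r$ scaling together with the observation that each $f_i$ is a min/max/average of $n$ agent-comparisons each of which is $1/r$-Lipschitz after scaling; I would be careful that $v_i$ is only assumed continuous and monotone, not Lipschitz, so the $1/r$ must come entirely from an explicit ``smoothing'' in the definition of $f$ (e.g.\ replacing a sharp threshold by its value at the cut \emph{plus} a linear-in-cut-position correction term of slope $1/r$, exactly as in the truncation terms appearing in \Cref{prop:dd-insufficient}'s construction). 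Fifth and sixth, properties \ref{prop:switching} and \ref{prop:monotone}: these should fall out of the algebraic form — $x_i=0$ means piece $i$ is empty, valued $0$ by everyone, so demand for it is minimal and $f_i\le 0$; the sum-switching upper condition $\sum_{j\le i}f_j\ge 0$ when $x_i=b_i=1$ captures that if pieces $1,\ldots,i$ together fill the cake then pieces $i,\ldots,d+1$ are squeezed, forcing the aggregate comparison nonnegative; and alternating-monotonicity is checked term-by-term against how each cut point depends on $x_i$ and $x_{i-1}$.

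The main obstacle I anticipate is pinning down the \emph{exact} definition of $f_i$ so that all of properties \ref{prop:root}, \ref{prop:switching}, and \ref{prop:monotone} hold \emph{simultaneously} — in particular reconciling the Hall-theorem argument for near-envy-freeness (which wants $f_i$ to measure a cumulative surplus/deficit of agents, a quantity that is naturally integer-valued and discontinuous) with Lipschitz-continuity and with the \emph{alternating} rather than full monotonicity pattern. The standard trick, following \citet{igarashi2022envy}, is to replace the discontinuous ``number of agents who weakly prefer $P_1\cup\cdots\cup P_i$'' by a continuous proxy built from the individual value gaps, smoothed at scale $r$; getting the signs and the normalization constant $2d^2$ right in property \ref{prop:root} — which presumably comes from the fact that a $1/(2d^2)$-error in each of $d$ coordinates accumulates to less than a $1/2$-error somewhere, which is enough to round to an integer assignment — is the delicate bookkeeping. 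I would therefore devote most of the proof to Steps one and two and treat the remaining properties as routine verifications once the definition is in hand.
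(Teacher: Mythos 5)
There is a genuine gap: the function $\f$ is never actually constructed. Your plan correctly anticipates several ingredients that the paper also uses --- a Hall-type matching argument, the fact that a $1/(2d^2)$-error must be rounded to an exact integer count at some nearby partition, and the need to smooth an integer-valued ``demand'' quantity at scale $r$ --- but you explicitly defer the ``exact algebraic form'' of $f_i$ and propose to reverse-engineer it, and the candidates you sketch do not work. The paper's construction is: put an $r$-grid on $[0,1]^d$, associate to a grid point $\bx$ the partition whose $i$-th cut is at $\max_{j\leq i} x_j$, let $g_i(\bx)$ be the \emph{number of agents whose favorite piece is piece $i$} (with consistent tie-breaking toward non-empty pieces), extend $\g$ by affine interpolation inside each cell, and set $f_i := g_i - k_i$. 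Everything then follows from this one definition: the Lipschitz constant $n/r$ is automatic because the interpolated counts are bounded by $n$ and piecewise linear over cells of side $r$ (no Lipschitz assumption on the $v_i$ is needed, since only ordinal favorite-piece information enters); sum-switching follows because $x_i=0$ empties piece $i$ and $x_i=1$ empties pieces $i+1,\ldots,d+1$; and alternating-monotonicity follows because increasing $x_i$ can only grow piece $i$ and shrink one later piece, leaving all pieces $j<i$ untouched.

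Two of your specific choices would actively break the required properties. First, parametrizing by piece \emph{lengths} (with truncation) means that increasing $x_i$ translates pieces $i+1,\ldots,d$ rather than only shrinking piece $i+1$; with preferences that are merely monotone (not additive), a translated piece's value is uncontrolled, so the counting argument for ``$f_i$ increasing in $x_i$, $f_{i+1}$ decreasing in $x_i$'' collapses --- this is exactly why the paper uses cut positions $\max_{j\leq i}x_j$ as coordinates. Second, defining $f_i$ via the $k_i$-th highest \emph{value} of piece $i$ is a cardinal quantity, and since the $v_i$ are only assumed continuous, no normalization by $1/r$ yields the claimed Lipschitz constant $n/r$; your own worry about this point is well-founded, and the resolution is not a slope-$1/r$ correction term but the switch to ordinal counts evaluated only at $r$-grid points. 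Because the construction is the heart of the theorem and the five properties are verified directly from it, the proposal as written does not constitute a proof, although its high-level roadmap (root $\Rightarrow$ integer counts at a nearby grid partition $\Rightarrow$ Hall's theorem $\Rightarrow$ $r$-near envy-freeness) matches the paper's argument for Property 2.
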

\begin{proof}
We define $\f$ on the unit $d$-cube $[0,1]^d$, where $d=m-1$.
We form a grid on $[0,1]^d$ where the cells  are $d$-cubes of side-length $r$.
With each grid-point $(x_1,\ldots,x_d)$ we associate  a cake-partition in which, for each $i\in[d]$, cut $i$ is made at $\max_{j\leq i} x_j$. That is, the first cut is made at $x_1$, the second cut is made at $\max(x_1,x_2)$, and so on. These $d$ cuts induce a partition of the cake into $d+1=m$ contiguous pieces.

For each such cake-partition, we ask each agent to evaluate all pieces, and choose a piece with a highest value. In case of a tie, the agent must choose a piece that is not empty; the monotonicity of the agents' preferences implies that every agent always has a highest-valued non-empty piece. If the agent has two or more highest-valued non-empty pieces, he must choose one of them in a consistent way (e.g., by index).

To define the function $\f$, we first define an auxiliary  function $\g:[0,1]^d\to \rda$ as follows. 
On each 
grid-point 
$\bx$, 
for every $i\in[d+1]$, $g_i(\bx)$ equals the number of agents whose preferred piece at the associated partition is piece $i$. 
Note that $\sum_{i=1}^{d+1} g_i(\bx)=n$.

\def\simplex{}

To define $\g$ on the entire box $[0,1]^d$, 
\ifdefined\simplex
we apply to each grid cell the standard triangulation of the cube,%
\footnote{
In the standard triangulation of the cube,
a simplex is constructed by starting from the corner of the cube with the smallest coordinates. Then we fix a permutation, and we increase each coordinate by $r$ in the order prescribed by the permutation. Thus, every permutation yields a different simplex. This subdivides the cube into $d!$ simplices. In particular, every simplex contains the corner of the cube with the smallest coordinates, and the corner opposite to that (i.e., with the largest coordinates).
Note that this triangulation does not add new vertices.
}
and extend $\g$ affinely in each triangulation simplex. 
\fi
\ifdefined\cell
for each point $\bx$ and $j\in[d]$, define $x_j^-$ the largest grid value smaller than $x_j$, and by $x_j^+$ the smallest grid value larger than $x_j$. Define $t_j := (x_j - x_j^-)/r = $  the relative distance between $x_j$ to $x_j^-$. 
Define $\g(\bx)$ as the weighted average of the $2^d$ grid points in the cell containing $\bx$, with weights determined by the $t_j$, that is:
\begin{align*}
\g(\bx) := t_1\cdots t_d \cdot \g(x_1^+,\ldots,x_d^+) + \ldots + (1-t_1)\cdots (1-t_d) \cdot \g(x_1^-,\ldots,x_d^-).
\end{align*}
\fi
Note that, on cell boundaries, the value of $\g$ is the same regardless of what cell is used for the computation. Note also that $\sum_{i=1}^{d+1} g_i(\bx)=n$ still holds.

The function $\f$ is defined by:
\begin{align*}
f_i(\bx) := g_i(\bx)-k_i  && \forall i\in[d],   \forall \bx\in[0,1]^d,
\end{align*}
where $k_i$ is the required number of agents in group $i$.
The value of $g_{d+1}$ is discarded, but no information is lost as $g_{d+1}(\bx) = n-\sum_{i=1}^{d} g_i(\bx)$.  Note that $\sum_{i=1}^{d+1} f_i(\bx)=0$.

We now prove the properties of $\f$.

\paragraph{Property \ref{prop:evaluate}.}
Evaluating $\f$ in any  grid-point  requires to ask each of the $n$ agents to evaluate each of the $d+1$ pieces.
Evaluating $\f$ in any internal point requires to evaluate $\f$ on the 
\ifdefined\simplex
$d+1$ corners of the containing triangulation simplex. Overall, 
at most $(d+1)^2\cdot n$ evaluations are required.
\fi
\ifdefined\cell
$2^d$ corners of the containing grid cell. Overall, 
at most $(d+1)\cdot 2^d \cdot n$ evaluations are required.
\alex{If we end up using the cell approach, we should not forget to change the theorem statement to state the correct number of queries.}
\fi

\paragraph{Property \ref{prop:root}.}
Let $\bx$ be a $1/(2d^2)$-root of $\f$.
We prove an auxiliary claim on $\g$.
\begin{claim*}
(a) For any $i\in[d+1]$, 
$g_i(\bx)\geq k_i - 1/(2d)$.

(b) For any subset $I\subseteq [d+1]$,
$\sum_{i\in I} g_i(\bx) > \sum_{i\in I} k_i - 1$.
\end{claim*}
\begin{proof}
(a)
For all $i\in[d]$, the fact that $|\f(\bx)|\leq 1/(2d^2)$ implies that $|g_i(\bx)-k_i|\leq 1/(2d^2)$,
so $g_i(\bx)\geq k_i-1/(2d^2) > k_i - 1/(2d)$.

For $i=d+1$, we have $g_{d+1}(\bx) = n-\sum_{i=1}^d g_i(\bx) \geq n-\sum_{i=1}^d (k_i+1/(2d^2))
= (n-\sum_{i=1}^d k_i) - \sum_{i=1}^d 1/(2d^2)
= k_{d+1}-1/(2d)$.

(b) For every subset $I\subseteq[d+1]$, part (a) implies that
$\sum_{i\in I} g_i(\bx) \geq  \sum_{i\in I} k_i -\sum_{i\in I}1/(2d) =  \sum_{i\in I} k_i - |I|/(2d) > \sum_{i\in I} k_i - 1$, since $|I|\leq d+1<2d$.
\end{proof}
\ifdefined\simplex
Let $\sigma$ be a triangulation simplex that contains $\bx$.  
\fi
\ifdefined\cell
Let $\sigma$ be a grid cell that contains $\bx$.  
\fi
Construct a bipartite graph $G$ in which the $n$ agents are on one side and the $m$ pieces on the other side, and there is an edge between an agent and a piece if the agent prefers the piece in at least one corner of $\sigma$.

For every $i\in[d+1]$, 
$g_i(\bx)$ is a weighted average of $g_i$ on the corners of $\sigma$. 
Hence, claim (a) implies that, 
on at least one corner $\by$ of $\sigma$,
$g_i(\by) \geq k_i - 1/(2d)$ must hold.
Since $g_i$ is an integer on any
grid-point,
$g_i(\by) \geq k_i$ must hold.
Therefore, at least $k_i$ agents must prefer piece $i$ in the partition represented by $\by$, so piece $i$ has at least $k_i$ neighbors in $G$.

Similarly, for every $I\subseteq [d+1]$, 
claim (b) implies that, 
on at least one corner $\by$ of $\sigma$,
$\sum_{i\in I} g_i(\by) \geq \sum_{i\in I} k_i$ must hold.
So the pieces in $I$ have together at least $\sum_{i\in I} k_i$ neighbors in $G$.
Therefore, by Hall's marriage theorem, there is an assignment of agents to pieces such that $k_i$ agents are assigned to piece $i$, which is their neighbor in $G$. 
For each of these agents, if the partition lines are moved by at most $r$ (to a corner of $\sigma$), the agent prefers piece $i$. Therefore, the partition associated with $\bx$ is $r$-near envy-free.

\paragraph{Property \ref{prop:lipschitz}.}
For every $i\in[d]$, $f_i$ is bounded between $-k_i$ and $n-k_i$, and changes linearly between
\ifdefined\simplex
triangulation vertices. 
\fi
\ifdefined\cell
grid-points.
\alex{is this still completely true in the cell approach?}
\erel{I think so, since $\f$ defined as a linear combination of grid points.}
\fi
Therefore, the largest possible change of $f_i$ is a change of $n$ over a distance of $r$. Therefore, $\f$ satisfies the Lipschitz condition with constant $L = n/r$.

\paragraph{Property \ref{prop:switching}.}
For all $i\in[d]$, $x_i=0$ implies that piece $i$ is empty. 
The monotonicity of agents' preferences and the tie-breaking rule ensure that no agent prefers an empty piece. Therefore, $x_i=0$ implies $g_i(\bx)=0$, which implies:
\begin{align*}
    f_i(\bx) = -k_i < 0.
\end{align*}

For all $i\in[d]$, $x_i=1$ implies that pieces $i+1,\ldots,d+1$ are all empty, so $g_{i+1}(\bx)=\cdots=g_{d+1}(\bx)=0$,
so 
$g_{1}(\bx)+\cdots+g_{i}(\bx)=n$.
Therefore,
\begin{align*}
f_{1}(\bx)+\cdots+f_{i}(\bx)=n-(k_1+\cdots+k_i) \geq 0.   
\end{align*}
The above two inequalities imply that $\f$ is sum-switching (see \Cref{def:sum-switching}).

\paragraph{Property \ref{prop:monotone}.}
Fix some $i\in[d]$ and some grid-point $\bx\in[0,1]^d$.
Let $\bx'$ be a grid-point in which, relative to $\bx$, $x_i$ increases by $r$, while $x_j$ for $j\neq i$ remain the same.
The corresponding partition changes as follows:
\begin{itemize}
\item If $x_i\leq (\max_{k \leq i-1} x_k) - r$, which means that piece $i$ is empty, then all pieces remain the same.
\item If $(\max_{k \leq i-1} x_k) \leq x_i$ and $x_i \leq x_{i+1}-r$, then piece $i$ grows and piece $i+1$ shrinks, and all other pieces remain the same.
\item If $(\max_{k \leq i-1} x_k) \leq x_i$ and $x_{i+1} \leq x_i$, then  piece $i$ grows, piece $i+1$ remains empty, some other piece $j>i$ shrinks, and all other pieces remain the same.
\end{itemize}

In all three cases, the monotonicity of agents' preferences (and the consistent tie-breaking rule) implies that any agent who prefers piece $i$ in $\bx$ must prefer piece $i$ in $\bx'$, and any agent who does not prefer piece $i+1$ in $\bx$ must not prefer piece $i+1$ in $\bx'$. Therefore, $g_i(\bx')\geq g_i(\bx)$ and $g_{i+1}(\bx')\leq g_{i+1}(\bx)$, and the same is clearly true for $f_i$ and $f_{i+1}$. 
Therefore, the alternating-monotonicity conditions holds for grid points.

We now prove that the same monotonicity conditions hold for interior points. 
Fix some $i\in[d]$ and some point $\bx\in[0,1]^d$,
and let $\bx'$ be a point in which, relative to $\bx$, $x_i$ increases for some $i\in[d]$, and $x_j$ for $j\neq i$ remain the same.


\ifdefined\simplex
We can decompose the change from $\bx$ to $\bx'$ into small changes that all remain within the same simplex. So without loss of generality, we can assume that $\bx$ and $\bx'$ are in the same simplex; denote this simplex by $\sigma$ and denote its  $d+1$ corners by $\by^0,\ldots,\by^d$.
Let $\pi$ be the permutation used for constructing 
$\sigma$ in the standard triangulation. 
So for all $j\in[d]$, the difference between  $\by^{j-1}$ and  $\by^{j}$ is $r$ in coordinate $\pi(j)$ and $0$ in the other coordinates.


Denote the barycentric coordinates of $\bx$, corresponding to the corners $\by^0,\ldots,\by^d$, by $t_0,\ldots,t_d$, and those of $\bx'$ by 
$t_0',\ldots,t_d'$.
Let $i' := \pi^{-1}(i)$.
As $\bx'-\bx$ is positive in (Cartesian) coordinate $i$ and zero in all other coordinates, 
the differences in the barycentric coordinates are  $t'_{i'}-t_{i'} = \Delta t > 0$ and $t'_{i'-1} - t_{i'-1} = -\Delta t < 0$ and $t'_j = t_j$ for all $j\not \in \{i',i'-1\}$.
Therefore, 
\begin{align*}
\f(\bx') &= \sum_{j=0}^d t'_j \f(\by^j)
\\
         &= \sum_{j=0}^d t_j \f(\by^j)
         + (t'_{i'}-t_{i'})\cdot \f(\by^{i'})
         + (t'_{i'-1}-t_{i'-1})\cdot \f(\by^{i'-1})
\\
         &= \f(\bx)
         + \Delta t \cdot \f(\by^{i'})
         + (-\Delta t) \cdot \f(\by^{i'-1})       
\\
         &= \f(\bx)
         + \Delta t \cdot (\f(\by^{i'}) - \f(\by^{i'-1})).
\end{align*}
Therefore, the change in $\f$ when $\bx$ increases in the $i$-th coordinate is exactly in the same direction as the change in $\f$ when moving along the grid from $\by^{i'-1}$ to $\by^{i'}$, in the $i$-th coordinate. Therefore, $\f$ satisfies the same monotonicity conditions on interior points as on grid points.
\fi
\ifdefined\cell
We can decompose the change from $\bx$ to $\bx'$ into small changes that all remain with the same cell. So without loss of generality, we can assume that $\bx$ and $\bx'$ are in the same cell. 
Denote by $\bx^-$ the point to the ``left'' of $\bx$ and $\bx'$ on their cell boundary, that is, the point $\bx$ in which $x_i$ is replaced by $x_i^-$; similarly, denote by $\bx^+$ the point to the ``right'' of $\bx$ and $\bx'$ on their cell boundary, that is, the point $\bx$ in which $x_i$ is replaced by $x_i^+$. 
By definition, $\f(\bx^-)$ is a weighted average of $2^{d-1}$ grid points with  $i$-th coordinate equal to $x_i^-$, 
whereas $\f(\bx^+)$ is a weighted average of $2^{d-1}$ grid points with $i$-th coordinate equal to $x_i^+$, with exactly the same weights.
Therefore, all monotonicity conditions that hold for grid-points, hold also for $\f(\bx^-)$ and $\f(\bx^+)$. In particular, 
$f_i(\bx^+)\geq f_i(\bx^-)$ and $f_{i+1}(\bx^+)\leq f_{i+1}(\bx^-)$. 

Denote $t_i := (x_i- x_i^-)/r$ and $t'_i := (x'_i- x_i^-)/r$. 
By definition, $\f(\bx) = t_i \f(\bx^+) + (1-t_i)\f(\bx^-)$
and $\f(\bx) = t'_i \f(\bx^+) + (1-t'_i)\f(\bx^-)$.
Since $t_i' > t_i$, the same monotonicity conditions hold for $\bx$ and $\bx'$. In particular,
$f_i(\bx')\geq f_i(\bx)$ and $f_{i+1}(\bx')\leq f_{i+1}(\bx)$. 
\fi
\end{proof}

Combining \Cref{thm:ef-to-root} (for $d=2$) 
with \Cref{thm:root-2d-sum} implies:
\begin{corollary}
\label{thm:near-envy-free}
Given $n$ agents with monotone preferences,
three group-sizes $k_1,k_2,k_3$ with $k_1+k_2+k_3=n$,
and approximation factor $r\in(0,1)$, 
it is possible to find an $r$-near-envy-free cake allocation among three groups using $O(n \log^2(n/r))$ queries.
\end{corollary}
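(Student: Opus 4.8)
The plan is to instantiate \Cref{thm:ef-to-root} with $d=2$ (so $m=3$ groups) and then solve the resulting root-finding problem with \Cref{thm:root-2d-sum}. Applying \Cref{thm:ef-to-root} produces a function $\f:[0,1]^2\to\rb$ such that: one evaluation of $\f$ costs $O(d^2 n)=O(n)$ preference queries; every $1/(2d^2)$-root of $\f$ (that is, every $1/8$-root) yields an $r$-near envy-free allocation; $\f$ is Lipschitz-continuous with constant $L=n/r$; $\f$ is sum-switching; and $\f$ is alternating-monotone, which for $d=2$ says in particular that $f_1$ is monotonically increasing with $x_1$. So it suffices to compute a $1/8$-root of $\f$ using $O(\log^2(n/r))$ evaluations of $\f$.

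To do this I would pass to the discretized setting of \Cref{sec:discontinuous}. Set $\epsilon:=1/8$ and $\delta:=\epsilon/L=r/(8n)$, rounding $\delta$ down to the nearest integer power of $2$ so that \Cref{asm:tech} holds on the box $[0,1]^2$ (this only refines the grid and changes $\log(1/\delta)$ by a constant factor). Discretizing the values of $\f$ into $\sign$ as in \Cref{sec:discontinuous} yields a $\delta$-continuous function, and a root of this discretized function is a $1/8$-root of the original $\f$. The discretization preserves the monotonicity condition ``$f_1$ increasing with $x_1$'', and by \Cref{lem:sum-switching} it preserves the sum-switching property. Hence the discretized function satisfies all the hypotheses of \Cref{thm:root-2d-sum}.

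Applying \Cref{thm:root-2d-sum} then finds a root of the discretized $\f$ using $O(\log^2(1/\delta))=O(\log^2(n/r))$ evaluations. Each evaluation requires $O(n)$ queries of the agents' preferences, so the whole procedure uses $O(n\log^2(n/r))$ queries. The root returned is a $1/8$-root of $\f$, which by \Cref{thm:ef-to-root} corresponds to an $r$-near envy-free allocation among the three groups, as required.

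The only real subtlety is a bookkeeping one rather than a mathematical obstacle: matching the ``$\epsilon$-root of an $L$-Lipschitz function'' interface of \Cref{thm:ef-to-root} with the ``$\delta$-continuous discretization'' interface of \Cref{thm:root-2d-sum}, together with the power-of-two requirement of \Cref{asm:tech}. All of this is taken care of by the reduction already spelled out at the start of \Cref{sec:discontinuous}, so no new argument is needed.
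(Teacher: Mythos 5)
Your proposal is correct and follows exactly the route the paper intends: the corollary is stated as the combination of \Cref{thm:ef-to-root} with $d=2$ and \Cref{thm:root-2d-sum}, and your discretization bookkeeping (setting $\delta = \epsilon/L$ with $\epsilon = 1/8$, $L=n/r$, invoking \Cref{lem:sum-switching} and \Cref{asm:tech}) is just the standard reduction from \Cref{sec:discontinuous} that the paper leaves implicit.
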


\begin{remark}
In order to extend \Cref{thm:near-envy-free} to $4$ or more groups, 
we would need to use \Cref{thm:dd-sufficient} for $d\geq 3$, but for this we would need all $d^2-d$ ex-diagonal conditions. The current representation of cake-allocations does not guarantee these conditions. 
When $x_i$ grows, every piece $j<i$ does not change. However, this does not imply monotonicity of $g_j$.
As an example, consider an agent who values the piece between $x_{i-1}$ and $x_{i+1}$ more than piece $j$. As $x_i$ moves from $x_{i-1}$ towards $x_{i+1}$, initially the agent prefers piece $i+1$ to piece $j$, so $j$ is not his best piece. As piece $i+1$ shrinks, eventually the agent prefers piece $j$ to piece $i+1$, so piece $j$ \emph{may} become his best piece. As piece $i$ grows, eventually the agent prefers piece $i$ to piece $j$, so piece $j$ is again not the best piece, which means that $g_j$ is not monotone with $x_i$.
Hence, the case of $m\geq 4$ groups remains open.

\end{remark}

\section{Conclusion}
Finding an approximate root is a fundamental question in numeric analysis. This paper investigated the dependence of its run-time on monotonicity conditions.
Besides the three open questions mentioned in the introduction, our results can potentially be extended in several ways:
\begin{itemize}

\item 
Allow the number of functions to differ from the number of variables, that is, 
allow $\f$ to be a function from $\mathbb{R}^{e}$
to $\mathbb{R}^{d}$, where $e\neq d$.
When there are more variables than functions ($e>d$), our positive results still hold, as we can add $e-d$ functions that equal zero identically.
When there are more functions than variables ($e<d$), a root is not guaranteed to exist, but we can investigate the run-time complexity of finding an $\epsilon$-root for functions that have a root.
\item Check whether some monotonicity conditions can be relaxed to \emph{unimodality} (a function is unimodal if it is monotonically-increasing up to some point and then monotonically-decreasing, or vice-versa).

\item Investigate the run-time complexity of related problems, such as finding local minima or maxima, in the presence of monotonicity / unimodality conditions.

\end{itemize}

\section*{Acknowledgments}
We are grateful to Paul Goldberg for the helpful references.
Erel is supported by Israel Science Foundation grant no. 712/20.

\newpage

\bibliographystyle{apalike}
\bibliography{roots}

\end{document}